\newcommand{\red}[1]{#1}
\newcommand{\comment}[1]{}
\newcommand{\lr}[1]{\left( #1\right)}
\newcommand{\mlr}[1]{\left[ #1\right]}
\newcommand{\glr}[1]{\left\{ #1\right\}}
\newcommand{\alr}[1]{\left\langle #1\right\rangle}
\newcommand{\norm}[1]{\left\lVert#1\right\rVert}
\newcommand{\abs}[1]{\left\lvert#1\right\rvert}
\newcommand{\rarrow}{\quad \Rightarrow \quad}
\newcommand{\ii}{\mathrm{i}}
\newcommand{\ee}{\mathrm{e}}
\newcommand{\dd}{\mathrm{d}}
\newcommand{\tr}[1]{\mathrm{tr}\lr{#1}}
\newcommand{\diam}{\mathrm{diam}}
\newcommand{\where}{\quad {\rm where}\quad}
\newcommand{\order}{\mathrm{O}}
\newcommand{\GHZ}{\mathrm{GHZ}}
\newcommand{\fX}{\mathbf{X}}
\newcommand{\id}{\mathrm{id}}
\newcommand{\cZ}{\mathcal{Z}}
\newcommand{\floor}[1]{\left\lfloor#1\right\rfloor}
\newcommand{\cM}{\mathcal{M}}
\newcommand{\tcM}{\widetilde{\cal M}}
\newcommand{\cL}{\mathcal{L}}
\newcommand{\brao}[1]{\left( #1 \right |}
\newcommand{\keto}[1]{\left| #1 \right )}
\newcommand{\V}{{\mathbf{V}}}
\newcommand{\poly}[1]{\mathrm{poly}\lr{#1} }
\newcommand{\fkd}{\mathfrak{d}}
\renewcommand{\p@subsection}{}
\renewcommand{\p@subsubsection}{}
\newtheorem{thm}{Theorem}
\numberwithin{thm}{section}
\newtheorem{cor}[thm]{Corollary}
\newtheorem{lem}[thm]{Lemma}
\newtheorem{prop}[thm]{Proposition}
\newtheorem{proc}[thm]{Procedure}
\begin{document}

\title{Heisenberg-limited metrology with perturbing interactions}

\author{Chao Yin}\email{chao.yin@colorado.edu}
\affiliation{Department of Physics and Center for Theory of Quantum Matter, University of Colorado, Boulder CO 80309, USA}
\orcid{0000-0003-3379-310X}

\author{Andrew Lucas}\email{andrew.j.lucas@colorado.edu}
\affiliation{Department of Physics and Center for Theory of Quantum Matter, University of Colorado, Boulder CO 80309, USA}

\maketitle

%\date{March 24, 2024}
\begin{abstract}
    We show that it is possible to perform Heisenberg-limited metrology on GHZ-like states, in the presence of generic spatially local, possibly strong interactions during the measurement process. An explicit protocol, which relies on single-qubit measurements and feedback based on polynomial-time classical computation, achieves the Heisenberg limit.  In one dimension, matrix product state methods can be used to perform this classical calculation, while in higher dimensions the cluster expansion underlies the efficient calculations.   The latter approach is based on an efficient classical sampling algorithm for short-time quantum dynamics, which may be of independent interest.
\end{abstract}

\maketitle

\tableofcontents

\section{Introduction}
\red{In classical physics, an ideal detector and noise-free experiment can perfectly measure the state of a system.   In quantum physics, this is fundamentally not possible: due to Heisenberg's uncertainty principle, we cannot deterministically measure the state of a spin-1/2 system without knowing the axis along which its spin $\mathbf{S}$ is aligned.  The rich field of quantum metrology has been developed over the previous decades in order to understand both the fundamental limits on how many measurements are needed to measure some parameter to a given accuracy in quantum mechanics, as well as to develop protocols that can achieve such fundamental limits: see \cite{metro_rev14,metro_rev11,sensing_rmp} for reviews.}

\red{As we will review below in more detail, it is well-established that using unentangled states of $N$ spins, after measuring $M$ times some unknown parameter $\omega$, the uncertainty $\delta \omega \gtrsim (MN)^{-1/2}$.  This is called the standard quantum limit, and it arises from the \emph{classical} central limit theorem on the measurement outcomes of unentangled qubits.   However, using a cleverly chosen entangled quantum state, one can improve this sensitivity: $\delta \omega \gtrsim M^{-1/2}N^{-1}$.  Such scaling is called the Heisenberg limit.  Thus, quantum entanglement can enhance the sensitivity of our measuring apparatus.}

\red{Unfortunately, an entangled state of $N$ qubits is \emph{extraordinarily fragile} to perturbations, so it is unlikely that such a state can plausibly ever be built in experiment.  It is therefore of critical importance to understand \emph{to what extent the Heisenberg limit is robust to perturbations}.  This paper will address from one particular perspective, in which an experimentalist is handed a highly entangled state capable of achieving the Heisenberg limit, yet in which the measurement procedure itself is imperfect: the qubits interact with themselves in addition to sensing the external parameter $\omega$.  Similar problems were studied in \cite{onsiteV13,onsiteV14,metro_Ising15,metro_Choi17,metro_PT18,metro_strong18,metro_domino21,metro_all2all22,metro_scar21,metro_frag22,metro_singlep_power22}.  We will show that without proper ``error correction" for these interactions, the experimentalist will lose any advantage to using the entangled quantum state.  At the same time, we prove that there is a \emph{classically computable} protocol involving measurement and feedback, which enables the experimentalist to achieve Heisenberg-limited scaling. Our result proves a conjecture from this earlier literature (see e.g. \cite{onsiteV14}) that Heisenberg-limited metrology is robust in the presence of \emph{known} unitary perturbations.}

\red{The remainder of the introductory section overviews our results in a more thorough fashion, with explicit formulas outlining our setup and main results, together with some background into the field of metrology.  The remainder of the paper rigorously demonstrates our claims.}

\subsection{Heisenberg-limited metrology in an ideal world}\label{sec:ideal}
Consider sensing a parameter $\omega$ (often a magnetic field along a specified axis) using $N$ qubits (i.e. spins-$\frac{1}{2}$), whose Pauli matrices are denoted by $X_j, Y_j, Z_j$ where $j\in \Lambda := \{1,\cdots,N\}$. The spins start in some initial state $\rho_{\rm in}$, evolve under the Hamiltonian \begin{equation}
    H_\omega^{\text{ideal}} = \omega Z:=\omega \sum_{j=1}^N Z_j,
\end{equation} for some time $t$, and are then measured.  The goal is to use the measurement outcomes to learn the parameter $\omega$.

One repeats this procedure $M$ times: preparing the same initial state $\rho_{\rm in}$, and running the same protocol each time.  After this, one uses the collective information about all the measurement outcomes to deduce an optimal  estimate $\omega_{\mathrm{est}}$.

If the initial state is unentangled, the best precision $\lr{\delta\omega}^2:= \mathbb{E}[(\omega_{\mathrm{est}}-\omega)^2]$ one can achieve is given by the \emph{standard quantum limit} (SQL) \begin{equation}\label{eq:SQL}
    \delta\omega_{\rm SQL}
    =  \lr{MNt^2}^{-1/2}.
\end{equation}
Here $\mathbb{E}[\cdots]$ denotes an average over all the possible measurement outcomes in each of the $M$ trials. The $N^{-1/2}$ scaling arises
because the $N$ spins undergo independent dynamics, and the measurement outcomes are uncorrelated. Here and below, we refer to reviews \cite{metro_rev14,metro_rev11,sensing_rmp} for further introduction to the subject of metrology.  While this result also scales as $M^{-1/2}$ because each trial is independent and classical post-processing of the noisy measurement outcomes is constrained by the central limit theorem of probability, this $M$ scaling is unavoidable so long as one must repeat the experiment $M$ times using the same $N$ qubits.   The main question then becomes: can we improve the $N$ scaling of (\ref{eq:SQL})?

The answer is yes. If we begin with an entangled initial state --  the Greenberger–Horne–Zeilinger (GHZ) state $\rho_{\rm in} = \ket{\GHZ}\bra{\GHZ}$, where  \begin{equation}\label{eq:GHZ}
    \ket{\GHZ} := \frac{1}{\sqrt{2}} \lr{\ket{0\cdots 0} + \ket{1\cdots 1}},
\end{equation}
it is possible to parametrically reduce $\delta\omega$ at large $N$.  To see how, observe that if we evolve for time $t$ using $H_\omega^{\mathrm{ideal}}$, this state evolves to \begin{equation}\label{eq:psi_id}
    \ket{\psi_{\id}} := \ee^{-\ii t\omega Z} \ket{\GHZ} = \frac{1}{\sqrt{2}} \lr{\ee^{\ii N\omega t}\ket{0\cdots 0} + \ee^{-\ii N\omega t}\ket{1\cdots 1}},
\end{equation}
gaining an \emph{extensive phase} difference between the two parts in \eqref{eq:GHZ}.
Then by measuring the expectation value of observable $\fX:=\prod_j X_j$: \begin{equation}\label{eq:Xid}
    \alr{\fX}_{\id} := \bra{\psi_{\id}} \fX \ket{\psi_{\id}} = \cos(2N\omega t),
\end{equation}
one achieves the \emph{Heisenberg limit} (HL) \begin{equation}\label{eq:HL0}
    \delta\omega_{\rm HL} = \lr{4MN^2t^2}^{-1/2},
\end{equation}
which scales as $\sim N^{-1}\ll N^{-1/2}$, which is a dramatic advantage over SQL.

To derive \eqref{eq:HL0}, observe that for large $M$, \begin{equation}\label{eq:HL}
    (\delta\omega)^2 \approx \frac{1}{M} \frac{\lr{\Delta \fX}_{\id}^2}{\abs{\partial_\omega \alr{\fX}_{\id}}^2} = \frac{1}{M} \frac{\sin^2(2N\omega t)}{\abs{2Nt \sin(2N\omega t)}^2} = \frac{1}{4MN^2t^2}.
\end{equation}
See Fig.~\ref{fig:slope}(a) for an illustration of the first equality.
Here we have used \eqref{eq:Xid}, and the fact that the variation for Pauli-like (involutory) observable $\fX$  obeys \begin{equation}\label{eq:DeltaX}
    \lr{\Delta \fX}^2_\id := \alr{\fX^2}_{\id} -\alr{\fX}_{\id}^2 = 1- \alr{\fX}_{\id}^2.
\end{equation}

Beyond its theoretical desirability, this entanglement-enhanced metrology is efficient to implement (in the absence of quantum noise or errors!): the global operator $\fX$ can be measured simply by making simultaneous projective single-qubit measurements in the local $X$-basis.   The needed $\langle \fX\rangle$ is just the  parity $\pm 1$ of the product of the $N$ measurement outcomes, averaged over $M$ trials.

Notice that in the discussion above, we cannot fully determine the rotation angle $\theta=2N\omega t$ using the algorithm as stated.  The reason is that we cannot tell apart $\theta$ and $\theta+2\pi$. Happily, the problem is easy to correct.  We will assume, throughout this paper, that the ``integer part'' \begin{equation}\label{eq:w'=}
    \omega' = \frac{\pi}{2Nt} \lr{\floor{\frac{2N\omega t}{\pi}}+\frac{1}{2}}.
\end{equation} 
of $\omega$ is known, such that \begin{equation}\label{eq:w-w'}
    \omega \in \mathcal{I}_{\omega'}:= \mlr{\omega'- \frac{\pi}{4Nt}, \omega'+ \frac{\pi}{4Nt}}.
\end{equation} 
We learn $\omega^\prime$ by, for example, evolving first for an extremely short time $t$, such that we can deduce the first digit in $\omega$ (assuming we have an order of magnitude estimate for its value!).   Then, we double the time, such that we can get a slightly more accurate estimate of $\omega$; iterating this process many times, we can obtain as accurate of an estimate as we want, until we hit the HL (at which point the best thing to do is take $M\rightarrow \infty$). See \cite{bitwise_sense09,bitwise_sense15,bitwise_sense20} for precise statements on this procedure; in particular, the precision is of Heisenberg scaling in terms of the overall resources. 
%However, it is not a severe problem if this quantity is not known. One can, for example, use different system size $N$ (or sensing time $t$) from smaller to larger ones. The smallest $N$ estimates the most significant digit of $\omega$, which serves as the prior knowledge for the next smallest $N$, and so on \cite{sensing_rmp}. The ultimate precision is again \eqref{eq:HL} with $N$ being the largest system size, and $M$ being the number of repetition for this $N$. The smaller $N$s can be just a subsystem of all the $N$ spins.

\subsection{The problem of interest}
\label{sec:problem}
\begin{figure}[t]
\centering
\includegraphics[width=.85\textwidth]{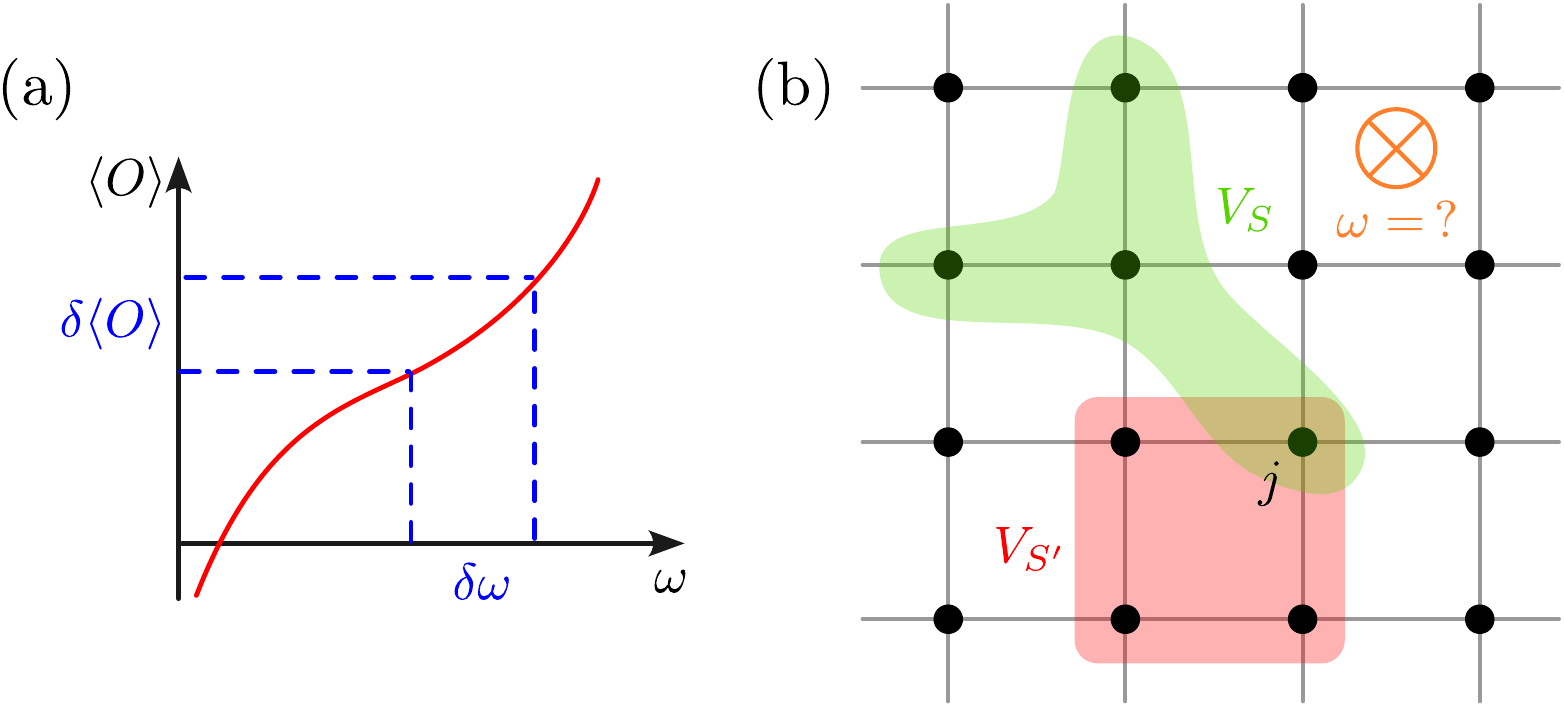}
\caption{\label{fig:slope} (a) Given measurement of observable $O$ with precision $\delta\alr{O}=\Delta O/\sqrt{M}$, it leads to a precision $\delta\omega$ determined by the slope $\partial_\omega \alr{O}$. See \eqref{eq:HL} and \eqref{eq:DeltaX} for the case of $O=\bf{X}$ in the ideal GHZ metrology. (b) A sketch of qubit interactions. In this example, the interaction graph $G$ is the 2d square lattice, so the distance between two qubits $\mathsf{d}(\cdot,\cdot)$ is the Manhattan distance. Two spatially local terms in $V$, $V_S$ and $V_{S'}$, are shown in the two shaded regions, each coupling four qubits. They both contribute to the local strength defined in \eqref{eq:J=} at qubit $j$. One can check that they have diameters $\mathrm{diam}(S)=3$ and $\mathrm{diam}(S')=2$. }
\end{figure}

In this paper, we study the robustness of the above GHZ metrology to known perturbations.  In particular, we assume the Hamiltonian acting on the spins is \begin{equation}\label{eq:Hw}
    H_\omega = V+\omega Z,
\end{equation}
which contains unwanted local interaction $V$ among the spins of local strength $J$. To be precise, the $N$ spins sit on the vertices of a graph $G$,
connected by edges that define a distance function $\mathsf{d}(\cdot,\cdot)$ on the graph. We assume each vertex connects to at most $K$ other vertices by edges. Then we assume the interaction is \begin{equation}
    V = \sum_{S\subseteq \Lambda: \diam(S)\le \ell} V_S,
\end{equation}
where \red{$V_S$ acts nontrivially on set $S$ of qubits,} $\diam(S) = \max_{i,j\in S} \mathsf{d}(i,j)$ is the diameter of set $S$, and $\ell$ is the interaction range. The local strength is defined by \begin{equation}\label{eq:J=}
    J:=\max_{j\in \Lambda} \sum_{S\ni j} \norm{V_S},
\end{equation}
\red{which quantifies how strong the interaction on a single qubit can be. Here $\norm{\cdot}$ denotes the operator norm, i.e. the largest singular value of the operator. As an example of the above formal definitions, $V$ can be local interactions on a constant-dimensional lattice as sketched in Fig.~\ref{fig:slope}(b), where each $V_S=\sum_P a_P P$ can expanded by the Pauli strings contained in the local set $S$, with coefficients $a_P=\order(J)$}.\footnote{\red{Here and throughout the paper, $f=\order(g)$ means there exists a constant $c$ determined by $K$ and $\ell$ such that $f\le cg$. We use $\Omega$ to denote the similar statement with $\ge$ instead, and $f=\Theta(g)$ if both $f=\order(g)$ and $f=\Omega(g)$ hold.}}
Note that $J$ could depend on how the total operator $V$ is decomposed into local terms $V_S$; since our results apply to any such decomposition, the reader can just choose a convenient decomposition, although in practice the results will be stronger for a decomposition that achieves the smallest possible $J$. We in general do not require $J$ to be small compared to $\omega$, although requiring so indeed gives a longer time scale for achieving HL, as we will see.   

\red{An explicit example of a Hamiltonian that is within the purview of the theory we describe below would be a transverse-field Ising model in a strong magnetic field: \begin{equation}
    H_\omega = V+\omega Z = \sum_{i=1}^{N-1} J X_i X_{i+1} + \omega \sum_{i=1}^N Z_i.
\end{equation}
 In an experiment, such a situation in \eqref{eq:Hw} may arise when there are spin-spin interactions that one cannot perfectly cancel (e.g. by dynamical decoupling \cite{dyn_decoup99}).\footnote{In many atomic systems, these spin-spin interactions could be long-ranged; this is a technical complication that we will not address in this work.}} If one wants to measure the magnetic field in a very local spatial region, one needs to spatially move the $N$ spins closer together, and $V$ may no longer be negligible.  If $V$ is neglected altogether, it could in principle become the dominant noise in an experiment.

Besides changing the Hamiltonian, we also address more general initial states $\ket{\psi}_{\rm in}$ beyond the ideal GHZ state. We evolve the system under \eqref{eq:Hw} by a time $t$ independent of $N$, and ask whether measuring the final state \begin{equation}\label{eq:psiw}
    \ket{\psi_\omega} := \ee^{-\ii t H_\omega}\ket{\psi}_{\rm in},
\end{equation}
can achieve HL, in terms of its $N$-scaling $\delta\omega \propto N^{-1}$, given the exact knowledge of $V$. 

\red{We will assume $V$ is known exactly.  The HL is impossible to achieve if each local $V_S$ has uncertainty $\Omega( 1/N)$. For example, suppose one pretends that there is no interaction, but there is actually a weak interaction: \begin{equation}
    V= \frac{1}{N}\sum_{j=1}^{N-2}Z_j Z_{j+1} Z_{j+2}
\end{equation}. When evolving the GHZ state, $V$ contributes a $\Theta(1)$ phase difference between the two parts in \eqref{eq:psi_id}, which leads to $\Theta(1/N)$ biased error for $\omega$ that shares the same precision of the prior knowledge \eqref{eq:w-w'} and cannot be reduced by increasing $M$.  Therefore, an unknown $V$ can be just as dangerous as $Z$-type errors for metrology (the rate of which cannot exceed $1/N$ \cite{channel_estimate21}). Nevertheless, our results could be useful so long as $V$ is a well-calibrated interaction.}

\subsection{Overview of our results}
The first question one might ask is whether or not the same protocol sketched in (\ref{eq:Xid}) would apply.  Unfortunately, the answer is no: the naive protocol (\ref{eq:Xid}) is extremely sensitive to perturbation $V$, even if $V$ is an on-site field: see Appendix \ref{app:A}. Some intuition for this result is that an \emph{unknown} $V$ is no better than decohering noise, which immediately leads to the SQL \cite{metro_noise11,metro_noise12}. Evidently, more work will be required to achieve the Heisenberg limit in the presence of $V$.  In this subsection, we both outline the rest of the paper, and explain in less technical terms our main results along with the intuition for why they hold. We focus on the ideal GHZ initial state for this section.

%$Z$-diagonal interaction: just use the unperturbed protocol and offset the phase by a constant independent of $\omega$ ... However for general interaction ... see Appendix \ref{app:A}

\subsubsection{Robustness of the ``phase difference" to perturbations}\label{sec:phase_intro}
In the ideal case, the extensive phase difference \eqref{eq:psi_id} comes from the extensive difference in $Z$ polarization of the two parts: $\bra{0\cdots0} Z\ket{0\cdots 0} - \bra{1\cdots1} Z\ket{1\cdots 1}=2N$. See the green curves in Fig.~\ref{fig:result}(a), where $\ket{\alpha}_{\rm in}$ represents $\ket{\alpha\cdots \alpha}$ (or its generalized version, see Section \ref{sec:QFI}). With interactions, the two parts are evolved at time $t$ to $\ket{\phi^0_\omega}$ and $\ket{\phi^1_\omega}$ that are no longer the maximally polarized product states. Nevertheless, the extensive $Z$ difference persists for short time: \begin{equation}\label{eq:Z-Z_intro}
    \bra{\phi^0_\omega} Z\ket{\phi^0_\omega}-\bra{\phi^1_\omega}Z\ket{\phi^1_\omega}=\Theta(N), \quad \mathrm{if}\quad t< c_{\rm in}/J,
\end{equation}
\red{for some O(1) constant $c_{\rm in}$,}
as shown by the distance between the two red curves in Fig.~\ref{fig:result}(a).
The reason is simple: for each qubit, it needs a nonvanishing time $\Omega(1/J)$ to change its local state (polarization) significantly, because it only couples to $\order(1)$ other qubits nearby \red{on the interaction graph. More precisely, this can be derived from \begin{equation}
    \norm{\frac{\dd}{\dd t} \rho_i} = \norm{\mathrm{tr}_{i^{\rm c}}\lr{ [H_\omega, \rho]}} =\norm{\mathrm{tr}_{i^{\rm c}}\lr{ [H_{\text{on } i}, \rho]}} \le 2\lVert H_{\text{on } i}\rVert \cdot \lVert \rho\rVert = \order(J),
\end{equation}
where $\rho_i=\mathrm{tr}_{i^{\rm c}}(\rho)$ ($\rm c$ means complement) is the local density matrix of qubit $i$, and $H_{\text{on } i}$ contains the finite number of terms in $H_\omega$ that act nontrivially on $i$; other terms do not contribute due to the cyclic property of the partial trace.}

\eqref{eq:Z-Z_intro} implies that the two parts keep gaining an extensive phase difference for short time. More precisely, when tuning $\omega$, the phase difference changes accordingly by an extensive \red{$\order(N)$} sensitivity.
In this process, since $\ket{\phi^\alpha_\omega}$ ($\alpha=0,1$) also rotate in the physical Hilbert space (which is absent in the ideal case), it is desirable that such rotation is \emph{subdominant} comparing to the change of phase. We show this from the fact that $\ket{\phi^\alpha_\omega}$ is short-time evolution from a product state and thus short-range correlated. The intuition is that a local Hamiltonian like $H_\omega$ maps a product state (more generally, a short-range correlated state) to an orthogonal one with amplitude bounded by $\order(\sqrt{N}) \ll N\sim \norm{H_\omega}$.
In Fig.~\ref{fig:result}(a), this is manifested by the $\order(\sqrt{N})$ width of the wave-packets:\begin{equation}\label{eq:cor_intro}
    \bra{\phi^\alpha_\omega} Z^2\ket{\phi^\alpha_\omega} - \bra{\phi^\alpha_\omega} Z\ket{\phi^\alpha_\omega}\bra{\phi^\alpha_\omega} Z\ket{\phi^\alpha_\omega} = \order(N),
\end{equation}
for any $\alpha=0,1$.

Gathering the two ingredients \eqref{eq:Z-Z_intro} and \eqref{eq:cor_intro}, we prove in Section \ref{sec:QFI} (which contains precise statements) that HL is robust for \red{$t< c_{\rm in}/J$}, because the extensive phase difference \eqref{eq:psi_id} is well-defined in the presence of interaction and can be measured in principle to estimate $\omega$.

\subsubsection{Measurement and feedback to achieve the Heisenberg limit}

Now that we know there is a coherent phase difference between the two halves of the initial GHZ state even in the presence of interactions, we must now develop a protocol to measure the state and this relative phase difference accumulated.  The key challenge is that if we do not measure in the right basis, we will accidentally measure which of the two $|\phi^{0,1}_\omega\rangle$ we are in (analogous to a $Z$ error destroying all entanglement in GHZ).  Hence, we must now develop efficient protocols to measure the phase difference over all $N$ qubits, without causing such a generalized $Z$ error, in order to achieve HL for \red{$t<c_{\rm in} /J$}.
Such protocols are desirable: the $x$-basis measurement in the ideal case no longer works as shown in Appendix \ref{app:A},\footnote{For $V=JX$ considered there, local measurements along a tilted axis (instead of along $x$) restore HL. However, for general $V$ that is not on-site, we expect HL cannot be achieved by any local-basis measurement, where the local basis can vary qubit-by-qubit; we were, however, unable to prove this statement rigorously. In other words, we expect classical communication is necessary in the LOCC measurement protocol developed later in this paper.} and one does not want to implement a protocol that uses an \emph{exponential} (in system size $N$) amount of classical/quantum resources. 

If one can engineer generic Hamiltonian evolution in the quantum experiment, we show in Appendix \ref{app:B} that one can effectively\footnote{Note that one cannot perfectly reverse the time evolution because $\omega$ is an unknown parameter, but we show that this is not required to achieve HL.} reverse the evolution by $H_\omega$, after which an $x$-basis measurement like the ideal case leads to HL. However, such quantum control is typically demanding in experiment. 

Therefore, we focus on another protocol based on local operations and classical communication (LOCC). As depicted in Fig.~\ref{fig:result}(b), the qubits are measured one-by-one, where the measurement basis depends on previous measurement outcomes (i.e., classical communication) and is determined by classical computation. During the measurement protocol, we assume the qubits undergo no Hamiltonian evolution or decoherence. In practice, this condition holds if measurement and classical computation is much faster than Hamiltonian evolution etc, or if one deliberately separates the qubits after the $H_\omega$ evolution to well-isolated ones, awaiting for measurement. 

\begin{figure}[htbp]
\centering
\includegraphics[width=.85\textwidth]{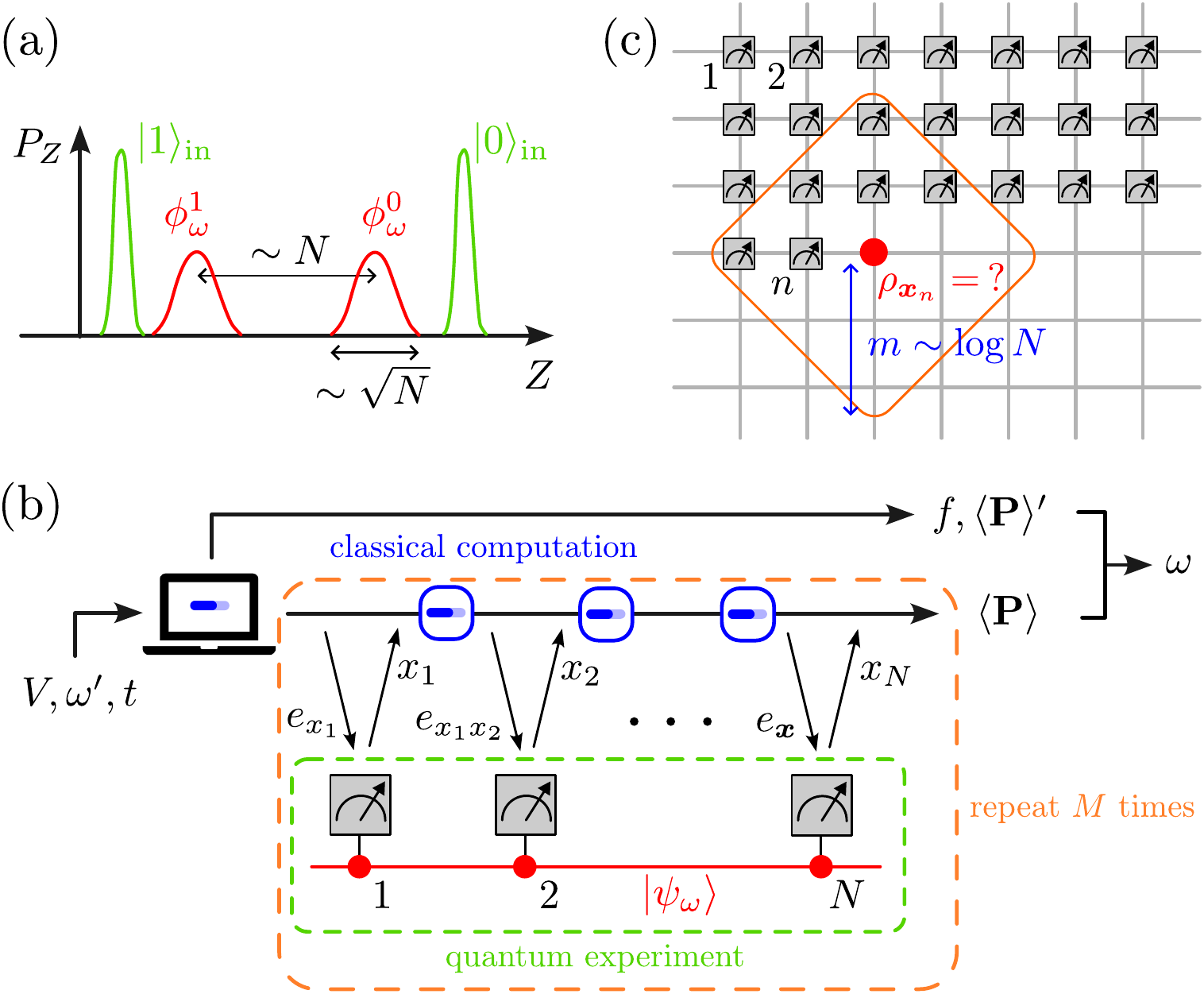}
\caption{\label{fig:result} Sketch of our main results. (a) An illustration for robustness of HL, using distribution $P_Z$ in $Z$ polarization. We assume the initial state (green curves) is like $\ket{\GHZ}$, where the two parts $\ket{0}_{\rm in}$ and $\ket{1}_{\rm in}$ have an extensive $\sim N$ difference in $Z$, and each of them has small $\sim\sqrt{N}$ fluctuation. We prove these properties hold after evolving for time \red{$t< c_{\rm in}/J$} (red curves), so that the two parts robustly gain an extensive phase difference leading to HL. If $J\ll \omega$, this robustness extends to any $N$-independent time $t$. (b) A sketch of the LOCC measurement protocol that achieves HL for \red{$t<c_{\rm M}/J$}. Although classical computation is needed to determine the local measurement basis like $\ket{e_{x_1x_2}}$, we prove the overhead is $\poly{N}$ for many situations. In the final step to extract the paramter $\omega$, the average parity $\alr{\mathbf{P}}$ of the measurement string $\bm{x}$ is compared to two quantities $f,\alr{\mathbf{P}}'$ from classical computation, using \eqref{eq:P-P=f}. (c) A sketch of the $\poly{N}$-time classical sampling algorithm at short time \red{$t<c_{+}/J$} (under certain conditions), which also serves as an ingrediant for proving efficiency of the LOCC measurement protocol in (b). To compute a local density matrix $\rho_{\bm{x}_n}$ after measuring $n$ qubits (so that one can determine the measurement probability of qubit $n+1$), we prove that it only depends on dynamics within distance $m\sim \log N$ of the given vertex. In the figure $m=2$. Constraining computation to the radius-$m$ ball (shown in orange) naively leads to a quasi-polynomial $\exp[(\log N)^d]$ complexity for $d$-dimensional systems; we use the cluster expansion to get $\poly{N}$ for any $d$, and even for any bounded degree graph. Note that measurements in the figure occur one-by-one in a spatially continuous way: This is only for simplicity of drawing and not required in general. }
\end{figure}

Such an LOCC measurement protocol for pure-state metrology was previously proposed in \cite{Zhou_LOCC20}, where they specified an algorithm to find the measurement basis (using previous measurement outcomes) that achieves the best possible precision. We adapt this algorithm to our problem starting from Section \ref{sec:1d}, which achieves HL due to results described in Section \ref{sec:phase_intro}. However, \cite{Zhou_LOCC20} did not study the classical complexity of the algorithm, and in general one might expect it to be exponentially difficult to find this ideal basis due to the many-body Hilbert space involved.  However, we will prove that the classical computation of this efficient basis can be done using \emph{polynomial classical resources and runtime}.   This is usually denoted as $\poly{N}$ (i.e. bounded by $c_q N^q$ for some constants $q$ and $c_q$).     %study the complexity of simulating $\ket{\phi^\alpha_\omega}$.

\subsubsection{Efficient classical sampling of short time dynamics}\label{sec:samp_intro}
If the qubits are arranged in one dimension, the computation is done efficiently using matrix product state (MPS) representations for the two parts $\ket{\phi^\alpha_\omega}$ of the state, because they are both short-time evolution from a product state. This is detailed in Section \ref{sec:1d}. For higher dimensions or general graph $G$, MPS techniques do not work, and we need to develop another method.
The classical computation in the LOCC algorithm turns out to be merely the same as another task of its own interest, namely classically sampling LOCC measurements on $\ket{\phi^0_\omega}$. 

More precisely, consider evolving an initial product state using a local Hamiltonian like \eqref{eq:Hw} for time $t$, after which the system is measured in the computational basis  to output a classical string $\bm{x}$ with probability $p_{\bm{x}}$. It is crucial to understand whether this quantum experiment can be classically simulated in polynomial time, i.e., whether a classical computer can also sample the string $\bm{x}$ efficiently. It is believed (based on conjectures in theoretical computer science) that a depth-3 quantum circuit in 2d, which can be viewed as a time-dependent Hamiltonian evolution with $t=\Theta(1)$, already becomes exponentially hard to classically sample in the worst case \cite{depth3_hard04}. However, it is unclear whether this quantum advantage holds for any constant $t$ independent of $N$. Moreover, generalizing computational-basis measurement, it is sometimes desirable to perform LOCC measurement to try to do measurement-based quantum computation (MBQC) on the final state \cite{MBQC_rev09}. Is this adaptive case even harder to classically simulate? Is it capable to do universal quantum computation like MBQC on the cluster state \cite{oneway_qc01}?

To answer these questions, in Section \ref{sec:samp} we prove that for \red{$t< c_{+}/J$ with a constant $c_{+}$}, the final state can be sampled by a $\poly{N}$ classical algorithm.  Therefore it is a computationally simple state, and does not enable universal MBQC. Technically, we achieved this by generalizing the cluster expansion method developed in \cite{learn_highT21,Loschmidt_echo23}, and rely on one assumption that the measurement basis is not so close to that of the initial state. The idea is shown in Fig.~\ref{fig:result}(c): the local state on an unmeasured qubit is easy to simulate, because it only depends on dynamics (and measurement) in a neighboring region of size $\sim \log N$. A rough physical picture for this is as follows. After evolution in time $t= 0.1/J$, each pair of neighboring qubits only establishes $\sim 0.1$ correlation. Suppose qubits $1,2$ and $2,3$ are such pairs, but $1,3$ does not share correlation. Although measuring $2$ may help to create correlation between $1$ and $3$, the amount is $0.1^2=0.01$. Repeating this argument implies that faraway qubits have correlation exponentially small in distance, even after some local measurements are done.

Based on the complexity result in Section \ref{sec:samp}, we return to the metrology problem in Section \ref{sec:>1d}, and prove (without assumptions on the measurement basis) that the LOCC protocol achieving HL involves $\poly{N}$ classical computation, for general graphs and \red{$t<c_{\rm M}/J$ for some constant $c_{\rm M}$}.

\subsubsection{Robustness against weak perturbations}

The above results hold even if the interaction is strong $J\gg \omega$. If it is actually weak $J\ll \omega$, we show in Section \ref{sec:preth} simply by energy conservation that HL is robust well beyond the time window \red{$Jt< c_{\rm in}$} in Section \ref{sec:phase_intro}. However, for such long times, we no longer have guarantees on the efficiency of the measurement protocol in general. We also discuss
prethermalization theory \cite{preth_rigor17} that plays a role in preserving $Z$ polarization against small perturbations.

\subsection{Further background on metrology: quantum Fisher information}
The rest of the introduction provides further background knowledge, along with context, for our work. In this subsection, we summarize further useful facts about quantum metrology which are well-established in the literature. The first question we ask is whether the state \eqref{eq:psiw}
could achieve HL, without restriction on the kinds of measurements made (or classical algorithm used to process them). This question is equivalent to asking for the scaling of the quantum Fisher information (QFI) of $\ket{\psi_\omega}$ with respect to $\omega$: For a pure state, QFI is defined as \cite{QFI76,QFI94,Nk_limit07} \red{(see e.g. Eq.(4) in \cite{metro_noise13})} \begin{equation}\label{eq:QFI}
    {\cal F}(\ket{\psi_\omega}) = 4 \norm{\lr{1-\ket{\psi_\omega}\bra{\psi_\omega}}\partial_\omega\ket{\psi_\omega}}^2 = 4\lr{ \braket{ \partial_\omega \psi_\omega | \partial_\omega \psi_\omega} - \abs{\braket{ \psi_\omega | \partial_\omega \psi_\omega}}^2},
\end{equation}
%where we have used the fact that $\braket{ \psi_\omega | \partial_\omega \psi_\omega}$ is imaginary. 
The first expression in \eqref{eq:QFI} is intuitive: QFI simply measures how fast $\ket{\psi_\omega}$ rotates in the physical Hilbert space when tuning $\omega$, where the unphysical global phase does not contribute. On the other hand, the second expression in \eqref{eq:QFI} can be understood as some connected correlation function (see \eqref{eq:F<N} below). QFI bounds the estimation precision by the quantum Cram\' er-Rao bound (QCRB) \begin{equation}\label{eq:QCRB}
    (\delta\omega)^2 \ge \frac{1}{M \cal F(\ket{\psi_\omega})}.
\end{equation}
In the $M\gg 1$ regime that we focus here, \eqref{eq:QCRB} is saturable \cite{QFI94,estimation_tech09} by projective measurements in the eigenbasis of \begin{equation}\label{eq:Lw}
    L_\omega = 2\lr{1-\ket{\psi_\omega}\bra{\psi_\omega}}\ket{\partial_\omega \psi_\omega}\bra{\psi_\omega} + \rm H.c.,
\end{equation}
followed by classical post-processing the data using maximum likelihood estimation. See \cite{pi_correct20} for modification of QCRB in the case $M=1$.

In general, to achieve HL it is necessary to have ${\cal F}(\ket{\psi_\omega}) \propto N^2$, which we prove in Corollary \ref{cor:QFI} and Proposition \ref{prop:cor_decay} for the problem described in Section \ref{sec:problem}.  However, bounding QFI is not the only way to show robustness of HL. Indeed the most direct result we will establish is (as summarized above), based on proving that the extensive phase difference between the two halves of the GHZ state is robust to arbitrary perturbations (at short $t$).  While this fact does imply the desired scaling of QFI, we will argue that the constraints of locality at short time scales in quantum mechanics ensure more than simply good QFI: they also ensure an efficient measurement protocol to achieve the HL in metrology.   In particular, it is quite undesirable to do an eigenbasis measurement of \eqref{eq:Lw}, because: (\emph{i}) it requires evolving the many-body state $|\psi_\omega\rangle$ which a priori could be exponentially hard in $N$, (\emph{ii}) the naive algorithm of just measuring $L_\omega$ requires knowledge of $\omega$ anyway, which is precisely what we want to learn, and (\emph{iii}) measuring $L_\omega$ is not realistic in any experiment with $N\gtrsim 4$.  In fact, property (\emph{ii}) means the measurement protocol does only \emph{local} quantum estimation \cite{estimation_tech09}. 

Ultimately, the QCRB can be saturated (or at least saturated up to a constant factor) by more than one protocol.  The goal of our work is to find \emph{efficient} ones with \emph{global} quantum estimation, like the ideal one in Section \ref{sec:ideal}, which involves only local measurements and applies to all $\omega$ in range $\mathcal{I}_{\omega'}$ \eqref{eq:w-w'}.

\subsection{Previous work on robustness of metrology}
Special cases of our problem described in Section \ref{sec:problem} have been considered in the literature. 
\cite{onsiteV13,onsiteV14} show that HL is robust if $V$ is on-site. In particular, \cite{onsiteV13} \emph{conjectures} the robustness holds for more general interactions, which is  proven in this paper.
\cite{metro_prl11,metro_Ising15,metro_Choi17,metro_PT18,metro_strong18,metro_domino21,metro_all2all22,metro_scar21,metro_frag22,metro_singlep_power22} consider specific models, where the chosen interaction $V$ sometimes enhances the precision or robustness of metrology. For general unwanted interactions, it is proposed to reduce the interaction strength by dynamical decoupling \cite{metro_dd13,metro_dd16,metro_Lukin20}, where the qubits are actively operated by control pulses.
Indeed, quantum control is shown to have advantages in the setting of estimating \emph{multiple} parameters \cite{metro_multip_rev16}, especially learning a many-body Hamiltonian \cite{learn_control23,learn_err23}. In this respect, our result may be surprising: HL for single-parameter estimation is actually robust even without quantum control during sensing. The price to pay is the nontrivial (but provably efficient) LOCC measurement procedure \emph{after} sensing, in order to accurately estimate the parameter.

We assume the system is isolated from the environment throughout the paper. If there is decoherence from coupling to the environment, HL is not robust anymore and reduces to SQL in general \cite{metro_noise11,metro_noise12}, even if the qubits couple to the environment independently. Intuitively, this comes from the fragility of the global many-body entanglement of the GHZ state. In certain cases, HL can be restored by active quantum error correction (QEC) \cite{metro_QEC14_1,metro_QEC14_2,metro_QEC14_3,metro_QEC17,metro_QEC18,metro_QEC23}.
However, it is usually assumed that the QEC operation is much faster than the decoherence rate. Our results may shed light on the actual QEC time scale needed, especially for decoherence that are correlated among qubits \cite{corre_noise14,corre_noise18,corre_noise19,corre_noise20,corre_noise22}.

\section{Robustness of the Heisenberg limit}\label{sec:QFI}
Having summarized our strategy, we can now begin by deriving our first key result: the Heisenberg limit is robust \red{if $Jt$ is smaller than a constant} even in the presence of strong perturbations.  Note that this result does not rule out the possibility of HL robustness on longer time scales (see e.g. Section \ref{sec:preth}).  Still, as we will see, this more limited result will be adequate for our purposes.

\subsection{Generalized initial states}
We allow for any initial state of the form \begin{equation}\label{eq:psi_in}
    \ket{\psi}_{\rm in}:=\lr{\ket{0}_{\rm in} + \ket{1}_{\rm in} }/\sqrt{2},
\end{equation}
where the two parts $\ket{\alpha}_{\rm in}$ ($\alpha=0,1$) are orthonormal and satisfy the following two conditions: \begin{enumerate}
    \item They have extensive $Z$ polarization difference: \begin{equation}\label{eq:Z0-Z1=cin}
    \alr{Z}_0-\alr{Z}_1 = 2c_{\rm in} N,\where \alr{O}_\alpha := \braket{\alpha|O|\alpha}_{\rm in}.
\end{equation}
Here $\alpha=0,1$, $O$ is any operator, and $c_{\rm in}>0$ is a constant.

\item They are short-range correlated: \begin{equation}\label{eq:initial_cor}
    \abs{\alr{O_A O_B}_\alpha - \alr{O_A}_\alpha \alr{O_B}_\alpha } \le c_\xi \ee^{-\mathsf{d}(A,B)/\xi},
\end{equation}
for any subsets $A,B\subset \Lambda$, and operators in them with $\norm{O_A}=\norm{O_B}=1$. Here $\mathsf{d}(\cdot, \cdot)$ is the distance function defined in Section \ref{sec:problem}, and $\xi$ is the correlation length. If the graph $G$ is not finite-dimensional, we assume the correlation length is relatively short \begin{equation}\label{eq:xi<K}
    \xi < 1/\log(K-1).
\end{equation}

\end{enumerate}  
The GHZ state is thus the extreme case $c_{\rm in}=1$ and $\xi=0$. We expect these conditions are satisfied by many more states of physical interest: e.g. equal superposition of $\mathbb{Z}_2$ symmetry broken states, or ``rotated'' version of GHZ state like $\ket{0\cdots 0} + \lr{\tilde{\alpha}\ket{0} + \tilde{\beta}\ket{1}}^{\otimes N}$ (with $\abs{\tilde{\alpha}}^2 + |\tilde{\beta}|^2=1$, $|\tilde{\beta}|>0$) where the two parts have negligible $\sim \exp(-N)$ overlap and can be massaged to form \eqref{eq:psi_in}.

The time-evolved state $\ket{\psi_\omega}$ \eqref{eq:psiw} is then \begin{equation}\label{eq:psi=phi}
    \ket{\psi_\omega} = \frac{1}{\sqrt{2}} \lr{\ket{\phi^0_\omega}+\ket{\phi^1_\omega} },\where \ket{\phi^\alpha_\omega} := \ee^{-\ii t H_\omega}\ket{\alpha}_{\rm in}.
\end{equation}

\subsection{Robustness of the extensive phase difference}
We first show that before some $\order(1)$ time, the two parts in \eqref{eq:psi=phi} keep gaining an extensive phase difference when varying $\omega$, just like the ideal case.

\begin{thm}\label{thm:phase}
If the initial state satisfies \eqref{eq:Z0-Z1=cin} and \eqref{eq:initial_cor}, then \begin{equation}\label{eq:partial_phi}
    \partial_\omega \ket{\phi^\alpha_\omega} = -\ii\mlr{c'+(-1)^\alpha c_\omega }N\ket{\phi^\alpha_\omega} + \order (\sqrt{N}),
\end{equation}
where \begin{equation}\label{eq:cw>}
    c_\omega\ge 
    t(c_{\rm in}-Jt).
\end{equation}
As a result, the whole state satisfies (dropping the unphysical global phase) \begin{equation}\label{eq:psi'_omega}
    \ket{\psi_\omega} =\frac{1}{\sqrt{2}} \lr{\ee^{-\ii f(\omega)/2} \ket{\phi^0_{\omega'}} + \ee^{\ii f(\omega)/2} \ket{\phi^1_{\omega'}} } + \order(1/\sqrt{N}),
\end{equation}
where the function $f(\omega): \mathcal{I}_{\omega'} \rightarrow \mlr{-\frac{\pi}{2}, \frac{\pi}{2}}$ satisfies $f(\omega')=0$, and is an increasing function for $t<c_{\rm in}/J$ with slope proportional to $N$: \begin{equation}\label{eq:dfw_N}
    2t(c_{\rm in}-Jt)N\le \partial_\omega f(\omega)\le 2tN.
\end{equation}
\end{thm}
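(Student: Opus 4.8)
The plan is to differentiate $\ket{\phi^\alpha_\omega}=\ee^{-\ii t H_\omega}\ket{\alpha}_{\rm in}$ directly and split the result into a part parallel to $\ket{\phi^\alpha_\omega}$, which produces the phase in \eqref{eq:partial_phi}, and an orthogonal part, which must be shown to be $\order(\sqrt N)$. Using $\partial_\omega H_\omega = Z$ and the Duhamel identity, one finds $\partial_\omega\ket{\phi^\alpha_\omega} = -\ii\,\ee^{-\ii t H_\omega}\bar Z\ket{\alpha}_{\rm in}$, where $\bar Z:=\int_0^t Z(s)\,\dd s$ and $Z(s):=\ee^{\ii sH_\omega}Z\ee^{-\ii sH_\omega}$. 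Because $\bar Z$ is Hermitian, the component along $\ket{\phi^\alpha_\omega}$ carries coefficient $-\ii\alr{\bar Z}_\alpha$, while the orthogonal remainder has squared norm exactly $\alr{\bar Z^2}_\alpha-\alr{\bar Z}^2_\alpha$. Thus \eqref{eq:partial_phi} reduces to two independent claims: that $\alr{\bar Z}_\alpha$ has the form $[c'+(-1)^\alpha c_\omega]N$ with $c_\omega$ obeying \eqref{eq:cw>}, and that the variance of $\bar Z$ in $\ket{\alpha}_{\rm in}$ is $\order(N)$.

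For the phase I would write $\alr{\bar Z}_\alpha=\int_0^t z_\alpha(s)\,\dd s$ with $z_\alpha(s):=\alr{Z}_{\phi^\alpha_s}$, and define $c'N$ and $c_\omega N$ to be the half-sum and half-difference of $\alr{\bar Z}_0$ and $\alr{\bar Z}_1$, so that $c_\omega N=\frac12\int_0^t[z_0(s)-z_1(s)]\,\dd s$. The only input needed is that $Z$ polarization cannot change quickly: since the $\omega Z$ term commutes with $Z$, $\frac{\dd}{\dd s}z_\alpha(s)=\alr{\ii[V,Z]}_{\phi^\alpha_s}$, and expanding $V=\sum_S V_S$ gives $\abs{\alr{\ii[V,Z]}}\le\sum_S 2\abs S\,\norm{V_S}\le 2JN$ via $\sum_{S\ni j}\norm{V_S}\le J$. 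Hence $z_0(s)-z_1(s)\ge 2c_{\rm in}N-4JNs$ starting from \eqref{eq:Z0-Z1=cin}, and integrating yields $c_\omega\ge t(c_{\rm in}-Jt)$, which is \eqref{eq:cw>}; the trivial bound $\abs{z_\alpha}\le N$ likewise gives $c_\omega\le t$, needed later for the upper slope.

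The orthogonal part is where the real work lies, and I expect it to be the main obstacle. Writing $\bar Z=\sum_j\bar Z_j$ with $\bar Z_j=\int_0^t Z_j(s)\,\dd s$, the variance equals $\sum_{j,k}\mathrm{Cov}_\alpha(\bar Z_j,\bar Z_k)$, and it suffices to show each row sum $\sum_k\abs{\mathrm{Cov}_\alpha(\bar Z_j,\bar Z_k)}=\order(1)$. The idea is to combine a Lieb--Robinson bound with the clustering hypothesis \eqref{eq:initial_cor}: at short times $Z_j(s)$ is approximated by a truncation $Z_j^{(r)}(s)$ supported on the ball of radius $r$ about $j$, with error controlled by the Lieb--Robinson tail once $r\gtrsim vt$ for the Lieb--Robinson velocity $v$; for $\mathsf{d}(j,k)$ large, $Z_j^{(r)}(s)$ and $Z_k^{(r)}(s')$ act on disjoint sets, so \eqref{eq:initial_cor} bounds their connected correlator by $\sim c_\xi\ee^{-(\mathsf{d}(j,k)-\order(vt))/\xi}$. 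Summing this exponential decay over a graph of degree at most $K$ converges to $\order(1)$; the condition \eqref{eq:xi<K} is exactly what guarantees convergence when $G$ is not finite-dimensional, since then the number of vertices within distance $R$ grows like $(K-1)^R$. This gives $\alr{\bar Z^2}_\alpha-\alr{\bar Z}^2_\alpha=\order(N)$ and establishes \eqref{eq:partial_phi}.

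Finally, to pass from \eqref{eq:partial_phi} to \eqref{eq:psi'_omega} I would remove the phase by setting $\ket{\chi^\alpha_\omega}:=\ee^{\ii\theta_\alpha(\omega)}\ket{\phi^\alpha_\omega}$ with $\theta_\alpha(\omega):=N\int_{\omega'}^\omega[c'+(-1)^\alpha c_{\tilde\omega}]\,\dd\tilde\omega$, so that $\partial_\omega\ket{\chi^\alpha_\omega}$ is exactly the $\order(\sqrt N)$ remainder. Integrating over $\mathcal{I}_{\omega'}$, whose width is $\order(1/N)$ by \eqref{eq:w-w'}, tames the remainder: $\norm{\ket{\chi^\alpha_\omega}-\ket{\phi^\alpha_{\omega'}}}\le\abs{\omega-\omega'}\cdot\order(\sqrt N)=\order(1/\sqrt N)$. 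Substituting $\ket{\phi^\alpha_\omega}=\ee^{-\ii\theta_\alpha(\omega)}\ket{\phi^\alpha_{\omega'}}+\order(1/\sqrt N)$ into $\ket{\psi_\omega}=\lr{\ket{\phi^0_\omega}+\ket{\phi^1_\omega}}/\sqrt2$ and discarding the common phase $N\int_{\omega'}^\omega c'\,\dd\tilde\omega$ gives \eqref{eq:psi'_omega} with $f(\omega)=2N\int_{\omega'}^\omega c_{\tilde\omega}\,\dd\tilde\omega$. Then $f(\omega')=0$ is immediate, $\partial_\omega f=2Nc_\omega$ combined with $t(c_{\rm in}-Jt)\le c_\omega\le t$ yields the two-sided bound \eqref{eq:dfw_N} and monotonicity for $t<c_{\rm in}/J$, and evaluating at the endpoints of $\mathcal{I}_{\omega'}$ confirms that $f$ maps into $\mlr{-\frac{\pi}{2},\frac{\pi}{2}}$.
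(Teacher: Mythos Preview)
Your proposal is correct and follows essentially the same route as the paper: Duhamel to express $\partial_\omega\ket{\phi^\alpha_\omega}$ in terms of $\bar Z$, split into parallel and orthogonal parts, bound the parallel (phase) part via $\norm{[V,Z]}\le 2JN$, bound the orthogonal (variance) part via Lieb--Robinson localization combined with the clustering hypothesis \eqref{eq:initial_cor}, then integrate over $\mathcal{I}_{\omega'}$ to obtain \eqref{eq:psi'_omega}. The only cosmetic difference is that for the phase bound you differentiate $z_\alpha(s)=\alr{Z}_{\phi^\alpha_s}$ pointwise in $s$ and integrate, whereas the paper bounds the operator norm $\norm{\bar Z - Z}$ directly; both arguments use the same commutator estimate and yield the same constant $t(c_{\rm in}-Jt)$.
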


We will provide the proof shortly.
Without QFI, \eqref{eq:psi'_omega} is already transparent on why HL is achievable for constant time $t<c_{\rm in}/J$, even for global quantum estimation: the two states $\ket{\phi^0_\omega}$ and $\ket{\phi^1_\omega}$ just gain opposite phases that are proportional to $N$ when tuning $\omega$. Suppose the function $f$ is known, then for $t<c_{\rm in}/J$ up to a negligible error $\order(N^{-1/2})$, a protocol achieves HL as long as it measures the relative phase\footnote{For example, one can measure the observable $\ket{\phi^0_{\omega'}}\bra{\phi^1_{\omega'}}+\ket{\phi^1_{\omega'}}\bra{\phi^0_{\omega'}}$ which does not depend on $\omega$ but only on the prior knowledge $\omega'$. However, this is a nonlocal operator in general that is hard to implement experimentally, so starting from next section we will find efficient measurement protocols. } between the two states with outcome $f_{\rm E}$ (a number), because one can then solve $f(\omega_{\rm est})=f_{\rm E}$ to get an estimate $\omega_{\rm est}$ of the true $\omega$. As we will show, the function $f$ will be efficiently computable by a classical computer: one does not need to compute the value $f(\tilde{\omega})$ for every $\tilde{\omega}$; it suffices if for any given $\tilde{\omega}\in \mathcal{I}_{\omega'}$, the classical algorithm outputs $f(\tilde{\omega})$ with good accuracy. The reason is that one can easily perform a classical binary search algorithm when comparing to the quantum experimental value $f_{\rm E}$:

\begin{prop}\label{prop:binary}
    Suppose there is an oracle such that for each given $\tilde{\omega}\in \mathcal{I}_{\omega'}$, it outputs an approximation $f_{\rm O}(\tilde{\omega})$ with error \begin{equation}\label{eq:dfw<M}
        \abs{f_{\rm O}(\tilde{\omega})- f(\tilde{\omega})} \le 1/\sqrt{M}.
    \end{equation} Then based on the quantum experimental result $f_{\rm E}$, calling the oracle $\order(\log M)$ times gives an estimate $\omega_{\rm est}$ for the true $\omega$ with HL.
\end{prop}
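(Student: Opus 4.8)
The plan is to treat Proposition~\ref{prop:binary} as a purely classical root-finding problem. By Theorem~\ref{thm:phase} the map $f$ is strictly increasing on $\mathcal{I}_{\omega'}$ with slope bounded between $2t(c_{\rm in}-Jt)N$ and $2tN$ (see \eqref{eq:dfw_N}), so it is invertible, and the experimental phase $f_{\rm E}$, which estimates the true $f(\omega)$ with precision $\order(1/\sqrt{M})$, determines $\omega$ through the equation $f(\omega_{\rm est})=f_{\rm E}$. I would first define the \emph{ideal} estimate $\omega^\ast := f^{-1}(f_{\rm E})$, clipping $f_{\rm E}$ into the range $f(\mathcal{I}_{\omega'})$ (an event of negligible cost, since for large $M$ the value $f(\omega)\in(-\pi/2,\pi/2)$ lies in the interior with high probability). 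The slope lower bound then converts phase precision into parameter precision, $\abs{\omega^\ast-\omega} \le \abs{f_{\rm E}-f(\omega)}/[2t(c_{\rm in}-Jt)N] = \order(1/(Nt\sqrt{M}))$, which already has Heisenberg scaling; the remaining work is to show that $\omega^\ast$ can be located to the same accuracy using only $\order(\log M)$ noisy oracle calls.

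For this I would run ordinary bisection on $[\omega_-,\omega_+]$, initialized to $\mathcal{I}_{\omega'}$ of width $W_0 = \pi/(2Nt)$: at each step query the oracle at the midpoint $\tilde\omega$ and set $\omega_-\leftarrow\tilde\omega$ if $f_{\rm O}(\tilde\omega)<f_{\rm E}$, else $\omega_+\leftarrow\tilde\omega$, halving the width each time. The only danger is that inside the ``ambiguous band'' $R:=\{\tilde\omega : \abs{f(\tilde\omega)-f_{\rm E}}\le 1/\sqrt{M}\}$ the oracle error \eqref{eq:dfw<M} can flip the comparison. The key step is a one-line invariant that tolerates this: writing $R=[\omega_L,\omega_U]$ (an interval by monotonicity of $f$, of width $\omega_U-\omega_L \le (2/\sqrt M)/[2t(c_{\rm in}-Jt)N]=:\rho$), I claim that $\omega_-\le\omega_U$ and $\omega_+\ge\omega_L$ hold at every step. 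Indeed, the bound $\abs{f_{\rm O}-f}\le 1/\sqrt M$ guarantees that a misfire can occur only \emph{inside} $R$: if $f_{\rm O}(\tilde\omega)<f_{\rm E}$ then $f(\tilde\omega)<f_{\rm E}+1/\sqrt M$, so $\tilde\omega\le\omega_U$ and the update $\omega_-\leftarrow\tilde\omega$ preserves $\omega_-\le\omega_U$; symmetrically, $f_{\rm O}(\tilde\omega)\ge f_{\rm E}$ forces $f(\tilde\omega)\ge f_{\rm E}-1/\sqrt M$, hence $\tilde\omega\ge\omega_L$, preserving $\omega_+\ge\omega_L$. Since $\omega^\ast\in R\subseteq[\omega_L,\omega_U]$, the invariant together with $\omega_+-\omega_-=W_k$ gives, for the output $\omega_{\rm est}=(\omega_-+\omega_+)/2$ after $k$ steps, the bound $\abs{\omega_{\rm est}-\omega^\ast}\le \rho + W_k/2$.

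Finally I would choose $k=\lceil \tfrac12\log_2 M\rceil = \order(\log M)$ so that $W_k = (\pi/2Nt)2^{-k}=\order(1/(Nt\sqrt M))$ matches $\rho=\order(1/(Nt\sqrt M))$; then $\abs{\omega_{\rm est}-\omega^\ast}=\order(1/(Nt\sqrt M))$, and combining with the ideal bound by the triangle inequality yields $\abs{\omega_{\rm est}-\omega}=\order(1/(Nt\sqrt M))$, i.e. $(\delta\omega)^2=\mathbb{E}[(\omega_{\rm est}-\omega)^2]=\order(1/(MN^2t^2))$, the Heisenberg limit, using exactly one oracle call per bisection step. The main obstacle is precisely the unreliability of the comparison inside the band $R$: a naive ``the bisection converges to the crossing of $f_{\rm O}$'' argument fails because $f_{\rm O}$ need not be monotone, and I expect the cleanest route is the stated two-sided invariant, which sidesteps any trajectory analysis of the bisection and shows that entering $R$ contributes only the benign $\order(\rho)$ error already consistent with the Heisenberg limit.
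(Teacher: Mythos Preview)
Your proposal is correct and follows essentially the same approach as the paper: both invert the monotone function $f$ by bisection on $\mathcal{I}_{\omega'}$, use the slope bound \eqref{eq:dfw_N} to convert the $\order(1/\sqrt{M})$ phase error into Heisenberg-scaled parameter error, and terminate after $\order(\log M)$ oracle calls. The only notable difference is in the bookkeeping for noisy comparisons: the paper uses an explicit stopping rule (halt once $|f_{\rm O}(\tilde\omega)-f_{\rm E}|=\order(1/\sqrt{M})$, which guarantees that any comparison \emph{not} triggering the stop has the correct sign), whereas you run bisection for a fixed number of steps and control the error via the two-sided invariant $\omega_-\le\omega_U$, $\omega_+\ge\omega_L$; your invariant argument is a bit more explicit about why wrong comparisons inside the ambiguous band cannot derail the search, but the underlying mechanism is identical.
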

\begin{proof}
    After $M$ measurements, the precision $\abs{f_{\rm E} - f(\omega)}\le \pi/(2\sqrt{M})$, because the range of function $f$ is bounded by $|f(\omega)|\le \pi/2$. As a result, since the oracle has similar precision \eqref{eq:dfw<M}, it suffices to find an $\omega_{\rm est}$ such that the oracle output $f_{\rm O}(\omega_{\rm est})$ is within distance \red{$\order(1/\sqrt{M})$} to the experimental value $f_{\rm E}$. This would achieve HL due to \eqref{eq:dfw_N}.

    To find $\omega_{\rm est}$, we use the following binary search algorithm. If one is lucky that \red{$\abs{f_{\rm E}} = \order( 1/\sqrt{M})$}, then $\omega_{\rm est}=\omega'$. Otherwise depending on whether $f_{\rm E}$ is positive or negative (we assume positive for simplicity), calculate $f_{\rm O}(\omega'+\pi/(4Nt))$ and compare it to $f_{\rm E}$: if it is within distance \red{$\order(1/\sqrt{M})$} then $\omega_{\rm est}=\omega'+\pi/(4Nt)$; otherwise call the oracle at the middle point $\omega'+\pi/(8Nt)$ of the candidate interval, and repeat the above process. The length of the candidate interval shrinks exponentially with the number of steps, so that after $\order(\log M)$ calls of the oracle, one is guaranteed to get an $\omega_{\rm est}$ with HL \eqref{eq:HL}.
\end{proof}

For completeness, we show that \eqref{eq:partial_phi} indeed implies HL for QFI.

\begin{cor}\label{cor:QFI}
If the initial state satisfies \eqref{eq:Z0-Z1=cin} and \eqref{eq:initial_cor}, then QFI defined in \eqref{eq:QFI} satisfies
\begin{equation}\label{eq:F>N2}
    {\cal F}(\ket{\psi_\omega}) \ge 4t^2N^2(c_{\rm in}-Jt)^2 + \order(N^{3/2}),
\end{equation}
which scales as $\sim N^2$ so long as $t<c_{\rm in}/J$.   Note that we assume this short time inequality when stating (\ref{eq:F>N2}). 
\end{cor}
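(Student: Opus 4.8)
The plan is to work directly from the projector form of the QFI in \eqref{eq:QFI}, namely ${\cal F}(\ket{\psi_\omega}) = 4\norm{\lr{1-\ket{\psi_\omega}\bra{\psi_\omega}}\partial_\omega\ket{\psi_\omega}}^2$, and feed in the derivative formula \eqref{eq:partial_phi} already established in Theorem \ref{thm:phase}. First I would differentiate the decomposition \eqref{eq:psi=phi}, giving $\partial_\omega\ket{\psi_\omega} = \tfrac{1}{\sqrt2}\lr{\partial_\omega\ket{\phi^0_\omega}+\partial_\omega\ket{\phi^1_\omega}}$, and substitute \eqref{eq:partial_phi} term by term. The common phase $-\ii c'N$ factors out as a multiple of $\ket{\psi_\omega}$ itself, while the relative phase $(-1)^\alpha c_\omega N$ produces a multiple of the antisymmetric combination $\ket{\psi^-_\omega} := \tfrac{1}{\sqrt2}\lr{\ket{\phi^0_\omega}-\ket{\phi^1_\omega}}$. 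This yields $\partial_\omega\ket{\psi_\omega} = -\ii c'N\ket{\psi_\omega} - \ii c_\omega N\ket{\psi^-_\omega} + \ket{R}$, where $\ket{R}$ collects the two $\order(\sqrt N)$ remainders and so obeys $\norm{\ket{R}}=\order(\sqrt N)$.

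The key structural observation is that $\ket{\psi_\omega}$ and $\ket{\psi^-_\omega}$ are orthonormal. This holds because $\ket{\phi^0_\omega}$ and $\ket{\phi^1_\omega}$ are images of the orthonormal states $\ket{0}_{\rm in},\ket{1}_{\rm in}$ under the unitary $\ee^{-\ii tH_\omega}$, hence are themselves orthonormal, so $\braket{\psi_\omega|\psi^-_\omega}=\tfrac12(1-1)=0$. Consequently the projector $1-\ket{\psi_\omega}\bra{\psi_\omega}$ annihilates the $\ket{\psi_\omega}$ component carrying the physically irrelevant global phase $c'N$ and preserves the antisymmetric component, so that $\lr{1-\ket{\psi_\omega}\bra{\psi_\omega}}\partial_\omega\ket{\psi_\omega} = -\ii c_\omega N\ket{\psi^-_\omega} + \ket{R'}$, where $\ket{R'} := \lr{1-\ket{\psi_\omega}\bra{\psi_\omega}}\ket{R}$ still satisfies $\norm{\ket{R'}}\le\norm{\ket{R}}=\order(\sqrt N)$.

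Expanding the squared norm then gives $\norm{-\ii c_\omega N\ket{\psi^-_\omega}+\ket{R'}}^2 = c_\omega^2N^2 + 2\,\mathrm{Re}\braket{-\ii c_\omega N\,\psi^-_\omega|R'} + \norm{\ket{R'}}^2$, where the cross term is bounded in magnitude by $2c_\omega N\norm{\ket{R'}} = \order(N^{3/2})$ via Cauchy--Schwarz and the last term is $\order(N)$. Hence ${\cal F}(\ket{\psi_\omega}) = 4c_\omega^2N^2 + \order(N^{3/2})$, and applying the lower bound \eqref{eq:cw>}, which for $t<c_{\rm in}/J$ gives $c_\omega\ge t(c_{\rm in}-Jt)\ge 0$ and therefore $c_\omega^2\ge t^2(c_{\rm in}-Jt)^2$, establishes \eqref{eq:F>N2}. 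The main point requiring care is the error bookkeeping: one must verify that the cross term is genuinely subleading $\order(N^{3/2})$ rather than contaminating the leading $N^2$ coefficient, and that the orthonormality of $\ket{\phi^0_\omega},\ket{\phi^1_\omega}$ (inherited exactly from the orthonormality of the initial parts, not merely from their short-range-correlated structure) holds so that the global-phase piece is cleanly removed by the projector.
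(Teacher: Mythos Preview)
Your proposal is correct and follows essentially the same approach as the paper: both start from \eqref{eq:partial_phi}, use the exact orthonormality $\braket{\phi^\beta_\omega|\phi^\alpha_\omega}=\delta_{\alpha\beta}$ inherited from unitarity, control the cross terms at $\order(N^{3/2})$, and finish with \eqref{eq:cw>}. The only cosmetic difference is packaging---you work with the projector form of \eqref{eq:QFI} and the antisymmetric combination $\ket{\psi^-_\omega}$, while the paper expands the second form of \eqref{eq:QFI} into double sums over $\alpha,\beta$; the content is identical.
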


\begin{proof}
We derive from the bound \eqref{eq:partial_phi}. From \eqref{eq:QFI} and \eqref{eq:psi=phi}, \begin{align}\label{eq:F>partial_phi}
    {\cal F}(\ket{\psi_\omega}) &= 2\sum_{\alpha,\beta=0,1} \braket{ \partial_\omega \phi^\beta_\omega | \partial_\omega \phi^\alpha_\omega} + \lr{\sum_{\alpha,\beta=0,1}\braket{ \phi^\beta_\omega | \partial_\omega \phi^\alpha_\omega}}^2 \nonumber\\ &= 2c_\omega^2 N^2 \sum_{\alpha,\beta=0,1}(-1)^{\alpha-\beta} \braket{ \phi^\beta_\omega | \phi^\alpha_\omega} - c_\omega^2N^2\lr{\sum_{\alpha,\beta=0,1}(-1)^{\alpha} \braket{ \phi^\beta_\omega | \phi^\alpha_\omega}}^2 + \order(N^{3/2}) \nonumber\\
    &\ge 4c_\omega^2N^2+ \order(N^{3/2}) \ge 4t^2N^2(c_{\rm in}-Jt)^2+ \order(N^{3/2}).
\end{align}
Here in the second line, we have used $\norm{\partial_\omega\ket{\phi^\alpha_\omega}} =\order(N)$ according to \eqref{eq:partial_phi}, so that the second term in \eqref{eq:partial_phi} only contributes to $\order(N^{3/2})$ here. In the last line we have used $\braket{\phi^\beta_\omega|\phi^\alpha_\omega}=\braket{\beta|\alpha}_{\rm in}=\delta_{\beta\alpha}$ and \eqref{eq:cw>}.
\end{proof}

\subsection{Proof of Theorem \ref{thm:phase}}

We first sketch the idea, focusing on \eqref{eq:partial_phi} since \eqref{eq:psi'_omega} will follow by a simple integration over $\omega$. The $c'$ term in \eqref{eq:partial_phi} is an unimportant global phase. The $c_\omega$ term, on the other hand, is the crucial \emph{relative} phase between the two parts $\alpha=0,1$. Intuitively, an extensive $Z$ polarization leads to an extensive relative phase gained by tuning $\omega$, so we will first show the initial extensive $\alr{Z}$ difference \eqref{eq:Z0-Z1=cin} persists in short time. This is summarized in Lemma \ref{prop:extensive_phase}, which does not rely on the condition on correlation \eqref{eq:initial_cor}. Due to the presence of $V$, the two parts do not simply gain phases, but also rotate in the physical Hilbert space (where states differing by an overall phase are identified). Nevertheless, since the two parts are initially short-range correlated \eqref{eq:initial_cor}, the state rotated by a single term in the local Hamiltonian is orthogonal to each other. As a simple example, $X_i\ket{0\cdots 0} \perp X_j\ket{0\cdots 0}$ for $i\neq j$. Therefore, the norm \emph{squared} of the rotated part of the state scales linearly with $N$. This leads to the last term in \eqref{eq:partial_phi}, which is subdominant $\sim \sqrt{N}$ comparing to the relative phase $\sim N$. We note that similar locality estimates are also used to prove that even if the Hamiltonian is interacting, SQL cannot be surpassed by separable initial states \cite{metro_sep_bound23}.

We now show that the extensive phase difference $c_\omega$ term in \eqref{eq:partial_phi} comes from condition \eqref{eq:Z0-Z1=cin} alone. 

\begin{lem}\label{prop:extensive_phase}
The two components in \eqref{eq:psi=phi} satisfy 
    \begin{equation}\label{eq:0-1>N}
    \ii \lr{ \bra{\phi^0_\omega}\partial_\omega \ket{\phi^0_\omega}-\bra{\phi^1_\omega}\partial_\omega \ket{\phi^1_\omega}} \ge 2t(c_{\rm in}-Jt)N.
\end{equation}
This does not rely on condition \eqref{eq:initial_cor}.
\end{lem}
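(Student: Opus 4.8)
The plan is to reduce the quantity $\ii\langle\phi^\alpha_\omega|\partial_\omega|\phi^\alpha_\omega\rangle$ to a Heisenberg-evolved $Z$ observable and then track how its $0$-versus-$1$ difference drifts from the initial extensive gap $2c_{\rm in}N$. First I would differentiate $|\phi^\alpha_\omega\rangle = \ee^{-\ii t H_\omega}|\alpha\rangle_{\rm in}$ with respect to $\omega$. Since $H_\omega = V+\omega Z$ and only the $\omega Z$ piece carries $\omega$-dependence, I would use the standard variation-of-unitary formula $\partial_\omega \ee^{-\ii t H_\omega} = -\ii\int_0^t \ee^{-\ii(t-s)H_\omega}\,Z\,\ee^{-\ii s H_\omega}\,\dd s$ (Duhamel). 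Contracting with $\bra{\phi^\alpha_\omega}$ gives
\begin{equation}
\ii\,\bra{\phi^\alpha_\omega}\partial_\omega\ket{\phi^\alpha_\omega} = \int_0^t \bra{\alpha}_{\rm in}\, Z(s)\,\ket{\alpha}_{\rm in}\,\dd s,\where Z(s):=\ee^{\ii s H_\omega} Z\, \ee^{-\ii s H_\omega}
\end{equation}
is the Heisenberg-picture $Z$ at time $s$. So the left side of the claim becomes $\int_0^t \lr{\alr{Z(s)}_0 - \alr{Z(s)}_1}\dd s$, and it suffices to lower bound the integrand.

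The main work is to show $\alr{Z(s)}_0 - \alr{Z(s)}_1 \ge 2(c_{\rm in}-2Js)N$ for each $s\in[0,t]$, so that integrating from $0$ to $t$ yields $2t(c_{\rm in}-Jt)N$ as claimed (the factor-of-two in the drift rate gets halved by the $\int_0^t 2Js\,\dd s = Jt^2$ integration). At $s=0$ the gap is exactly $2c_{\rm in}N$ by condition \eqref{eq:Z0-Z1=cin}. To control its decay I would bound $\ddt{}\alr{Z(s)}_\alpha$ using $\ddt{} Z(s) = \ii[H_\omega, Z(s)] = \ii[V, Z(s)]$, where the $\omega Z$ term drops out because it commutes with $Z(s)$ (both are unitary conjugates built from commuting generators — more carefully, $[\ee^{\ii sH_\omega}Z\ee^{-\ii sH_\omega}, ?]$; the clean statement is that $\ddt{}(\ee^{\ii sH_\omega}Z\ee^{-\ii sH_\omega}) = \ee^{\ii sH_\omega}\ii[H_\omega,Z]\ee^{-\ii sH_\omega} = \ee^{\ii sH_\omega}\ii[V,Z]\ee^{-\ii sH_\omega}$). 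The key estimate is then $\abs{\ddt{}\tr(\rho_\alpha Z(s))} = \abs{\tr(\rho_\alpha\, \ee^{\ii sH_\omega}\ii[V,Z]\ee^{-\ii sH_\omega})} \le \norm{[V,Z]}$, and $\norm{[V,Z]}\le \sum_j \norm{[V,Z_j]}$. Since $[V_S,Z_j]$ vanishes unless $j\in S$, and $\sum_{S\ni j}\norm{V_S}\le J$ by the definition \eqref{eq:J=}, one gets $\norm{[V,Z_j]}\le 2\sum_{S\ni j}\norm{V_S}$, hence $\sum_j\norm{[V,Z_j]} \le 4JN$ counting the commutator factor. This bounds the rate of change of \emph{each} $\alr{Z(s)}_\alpha$ by $\order(JN)$, and therefore the rate of change of the difference by $\le 4JN$, giving the linear-in-$s$ erosion $2(c_{\rm in}-2Js)N$.

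The step I expect to be the main obstacle is getting the constants exactly right so that they line up with the stated bound $2t(c_{\rm in}-Jt)N$: the crude bound $\norm{[V,Z_j]}\le 2\sum_{S\ni j}\norm{V_S}$ summed over $j$ double-counts, since a single $V_S$ contributes to every $j\in S$, so I must be careful to bound $\sum_j\norm{[V,Z_j]}$ by $2\sum_S \abs{S}\norm{V_S}$ and relate this back to $J$ through the locality (each $S$ has bounded size because $\diam(S)\le\ell$ and degree $\le K$). This bounded-support structure is exactly what keeps the decay rate $\order(JN)$ rather than something larger, and it is the only place the geometry enters this particular lemma (consistent with the remark that \eqref{eq:initial_cor} is not needed here). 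Once the per-$s$ gap bound $\alr{Z(s)}_0-\alr{Z(s)}_1 \ge 2(c_{\rm in}-2Js)N$ is established, the final integral $\int_0^t 2(c_{\rm in}-2Js)N\,\dd s = 2N(c_{\rm in}t - Jt^2) = 2tN(c_{\rm in}-Jt)$ delivers \eqref{eq:0-1>N} directly.
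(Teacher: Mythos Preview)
Your approach is essentially the paper's: Duhamel reduces $\ii\langle\phi^\alpha_\omega|\partial_\omega|\phi^\alpha_\omega\rangle$ to $\int_0^t \langle Z(s)\rangle_\alpha\,\dd s$, and the gap is controlled by bounding $\norm{[V,Z]}$. The paper packages the same estimate slightly differently by defining the time-averaged operator $\overline{Z}=t^{-1}\int_0^t Z(s)\,\dd s$ and bounding the single operator norm $\norm{\overline{Z}-Z}\le \tfrac{t}{2}\norm{[V,Z]}$, but the content is identical.

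Your worry about double-counting is misplaced, and in fact the geometry (bounded $|S|$, degree $K$) is \emph{not} needed for this lemma at all. The bound $\norm{[V,Z]}\le 2NJ$ follows directly from the definition of $J$: write $[V,Z]=\sum_j\sum_{S\ni j}[V_S,Z_j]$, so $\norm{[V,Z]}\le \sum_j 2\sum_{S\ni j}\norm{V_S}\le \sum_j 2J=2NJ$. No appeal to $|S|$ or $\ell$ is required; your intermediate $4JN$ for $\sum_j\norm{[V,Z_j]}$ is a slip (it is $2JN$), but since you then correctly take the rate of change of the \emph{difference} to be $\le 4JN$ (two states, each contributing $\le 2JN$), your pointwise bound $\langle Z(s)\rangle_0-\langle Z(s)\rangle_1\ge 2(c_{\rm in}-2Js)N$ and the final integral $2tN(c_{\rm in}-Jt)$ come out right.
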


\begin{proof}

Taking derivative on the matrix $\ee^{-\ii t H_\omega}$ with respect to $\omega$, we have \begin{equation}\label{eq:cZ=Z}
    \partial_\omega \ket{\phi^\alpha_\omega} =-\ii \int^t_0 \dd s\, \ee^{-\ii (t-s) H_\omega} Z \ee^{-\ii s H_\omega} \ket{\alpha}_{\rm in} = -\ii t\ee^{-\ii t H_\omega}\overline{Z} \ket{\alpha}_{\rm in},\where \overline{Z} = \frac{1}{t} \int^t_0 \dd s\, \ee^{\ii s H_\omega} Z \ee^{-\ii s H_\omega}.
\end{equation}
Therefore \begin{equation}\label{eq:partial=Z}
    \ii\bra{\phi^\alpha_\omega}\partial_\omega \ket{\phi^\alpha_\omega} = t \bra{\alpha}_{\rm in} \ee^{\ii t H_\omega} \ee^{-\ii t H_\omega}\overline{Z} \ket{\alpha}_{\rm in} = t\braket{\overline{Z}}_\alpha.
\end{equation} 
Using \eqref{eq:Z0-Z1=cin}, \begin{align}\label{eq:>N-Z-Z}
    \ii\lr{ \bra{\phi^0_\omega}\partial_\omega \ket{\phi^0_\omega}-\bra{\phi^1_\omega}\partial_\omega \ket{\phi^1_\omega}} &= t\lr{\braket{\overline{Z}}_0 - \braket{\overline{Z}}_1} \nonumber\\ &= t\lr{\braket{Z}_0 - \braket{Z}_1+ \braket{\overline{Z}-Z}_0 - \braket{\overline{Z}-Z}_1}\nonumber\\ &\ge 2t (c_{\rm in}N-\norm{\overline{Z}-Z}).
\end{align}
In general, the time-averaged operator $\overline{Z}$ is close to $Z$ in short time: \begin{align}\label{eq:Z-Z<}
    \norm{\overline{Z}-Z} &\le \frac{1}{t} \int^t_0 \dd s\, \norm{\ee^{\ii s H_\omega} Z \ee^{-\ii s H_\omega}-Z} \nonumber\\ &= \frac{1}{t} \int^t_0 \dd s\, \norm{\int^s_0 \dd s'\ee^{\ii s' H_\omega} [H_\omega, Z]\ee^{-\ii s' H_\omega} } \nonumber\\
    &\le \frac{1}{t} \int^t_0 \dd s\, s\norm{[H_\omega, Z]}= \frac{t}{2}\norm{[V, Z]} = \frac{t}{2} \norm{\sum_j \mlr{\sum_{S\ni j} V_S,Z_j} } \le NJt,
\end{align}
where we have used \eqref{eq:J=} and $\norm{[A,B]}\le2 \norm{A}\norm{B}$. This establishes \eqref{eq:0-1>N} due to \eqref{eq:>N-Z-Z}.
\end{proof}

\begin{proof}[Proof of Theorem \ref{thm:phase}]
Since we have proven the extensive relative phase \eqref{eq:0-1>N}, to prove \eqref{eq:partial_phi} it remains to show the extra rotation in the physical Hilbert space contributes $\order(\sqrt{N})$ in \eqref{eq:partial_phi}:
\begin{equation}\label{eq:partial_phi<N}
    \bra{\partial_\omega \phi^\alpha_\omega}\lr{1-\ket{\phi^\alpha_\omega}\bra{\phi^\alpha_\omega}} \ket{\partial_\omega \phi^\alpha_\omega} \le Nt^2 g(Jt), 
\end{equation}
where the function $g(Jt)$ is independent of $N$, and $t^2$ comes from dimensional analysis due to two $\omega$-derivatives on the left hand side. We have projected out the part in $\partial_\omega \ket{\phi^\alpha_\omega}$ proportional to $\ket{\phi^\alpha_\omega}$, which would contribute to the $c_\omega$ term in \eqref{eq:partial_phi}. One can verify that combining \eqref{eq:0-1>N} and \eqref{eq:partial_phi<N} yields \eqref{eq:partial_phi}.
\red{The reason is as follows. One can always decompose \begin{equation}\label{eq:partial=decompose}
    \ket{\partial_\omega \phi^\alpha_\omega} = \ket{\phi^\alpha_\omega}\braket{\phi^\alpha_\omega|\partial_\omega \phi^\alpha_\omega} + \lr{1-\ket{\phi^\alpha_\omega}\bra{\phi^\alpha_\omega}} \ket{\partial_\omega \phi^\alpha_\omega},
\end{equation}
where \eqref{eq:partial_phi<N} implies that $\lVert \lr{1-\ket{\phi^\alpha_\omega}\bra{\phi^\alpha_\omega}} \ket{\partial_\omega \phi^\alpha_\omega} \rVert = \order(t\sqrt{N})$ and thus corresponds to the second term of $\order(\sqrt{N})$ in \eqref{eq:partial_phi}. For the first term of \eqref{eq:partial_phi}, the $c'$ term just corresponds to a global phase, and is unimportant. Taking the inner product between \eqref{eq:partial=decompose} and $\bra{\phi^\alpha_\omega}$, the $c_\omega$ term is further extracted by subtracting the two choices of $\alpha$, which leads to \eqref{eq:cw>} from \eqref{eq:0-1>N}. }

To show \eqref{eq:partial_phi<N}, we first rewrite its left hand side using \eqref{eq:cZ=Z}, \eqref{eq:partial=Z} and $\braket{\partial_\omega \phi^\alpha_\omega| \partial_\omega \phi^\alpha_\omega}=t^2 \bra{\alpha}_{\rm in} \overline{Z} \ee^{\ii t H_\omega} \ee^{-\ii t H_\omega}\overline{Z} \ket{\alpha}_{\rm in}=t^2\braket{\overline{Z}^2}_\alpha$. The result is\footnote{Note that (\ref{eq:F<N}) is proportional to the QFI of $\ket{\phi^\alpha_\omega}$ as a function of $\omega$.} 
\begin{align}\label{eq:F<N}
    t^2\lr{ \braket{\overline{Z}^2}_\alpha-\braket{\overline{Z}}^2_\alpha} = t^2\sum_{i} \sum_j \lr{ \braket{\overline{Z}_i\overline{Z}_j}_\alpha-\braket{\overline{Z}_i}_\alpha \braket{\overline{Z}_j}_\alpha }\le t^2 N \max_i \sum_j \lr{ \braket{\overline{Z}_i\overline{Z}_j}_\alpha-\braket{\overline{Z}_i}_\alpha \braket{\overline{Z}_j}_\alpha }.
\end{align}
In some sense, the final sum quantifies the total correlation shared by spin $i$ with all other spins, in the final state $\ket{\phi^\alpha_\omega}$. Intuitively, since in finite time $i$ can only build correlation effectively with a finite set of neighboring spins, we expect \begin{equation}\label{eq:cor<g}
    \sum_j\braket{\overline{Z}_i\overline{Z}_j}_\alpha-\braket{\overline{Z}_i}_\alpha \braket{\overline{Z}_j}_\alpha\le g(Jt),
\end{equation}
i.e., bounded by some function independent of $N$, which leads to \eqref{eq:partial_phi<N}. 

More rigorously, since $V$ is local, we have Lieb-Robinson bound \cite{Lieb1972}: \red{There exists an operator $Z_{i,r}(t)$ supported inside the ball $\mathcal{B}_{i,r}:=\{k\in \Lambda: \mathsf{d}(k,i)\le r\}$ centered at $i$ of radius $r$, such that} \begin{equation}\label{eq:LRB}
    \norm{Z_i(t)-Z_{i,r}(t)}\le c_{\rm LR} \ee^{\mu(vt-r)},\quad \forall r\ge 1,
\end{equation}
where $\norm{Z_{i,r}(t)}\le 1$, and $c_{\rm LR},\mu, v>0$ are constants.\footnote{\red{$Z_{i,r}(t)$ can be obtained by projecting $Z_i(t)$ onto its components supported inside the ball $\mathcal{B}_{i,r}$.; see Propositions 3.5 and 4.1 in \cite{review_us23}.}} The bound \eqref{eq:LRB} holds similarly for \red{the time-averaged operator $\overline{Z}_i(t)$, because one can bound the approximation error of the integral in \eqref{eq:cZ=Z} by a triangle inequality like the first line of \eqref{eq:Z-Z<}}. Then using the initial condition \eqref{eq:initial_cor}, connected correlation for any pair $i,j$ is bounded by an exponential-decaying function \begin{equation}\label{eq:ZiZj<}
    \braket{\overline{Z}_i\overline{Z}_j}_\alpha-\braket{\overline{Z}_i}_\alpha \braket{\overline{Z}_j}_\alpha \le (c_\xi + 6 c_{\rm LR}) \exp\lr{-\frac{\mathsf{d}(i,j)-2vt }{\xi+2\mu^{-1}} },
\end{equation}
following
\cite{LRB_cor06} (see also Theorem 5.9 in \cite{review_us23}). 

Now sum \eqref{eq:ZiZj<} over $j$. The result is independent of $N$ for finite-dimensional lattices, because the exponential decay in distance $\mathsf{d}(i,j)$ overcomes the polynomial number of $j$s at the given distance. For general graphs with bounded degree $K$, the number of spins at a given distance to $i$ is upper bounded by $K(K-1)^{\mathsf{d}(i,j)-1}$, so the decay in \eqref{eq:ZiZj<} still wins due to \eqref{eq:xi<K}, and the fact that $\mu$ can be chosen arbitrarily large in \eqref{eq:LRB} (by adjusting $c_{\rm LR}$ and $v$ accordingly) \cite{review_us23}. This establishes \eqref{eq:cor<g} and further \eqref{eq:partial_phi<N} from \eqref{eq:F<N}. 

We have proven \eqref{eq:partial_phi} from \eqref{eq:0-1>N} and \eqref{eq:partial_phi<N}. It remains to integrate it from $\omega'$ to $\omega$ to derive \eqref{eq:psi'_omega}. Since \red{$\abs{\omega-\omega'}=\order( N^{-1})$} \eqref{eq:w-w'}, the $\order(\sqrt{N})$ term in \eqref{eq:partial_phi} translates to the $\order(1/\sqrt{N})$ term in \eqref{eq:psi'_omega}. At the same time, $f(\omega)=\int^\omega_{\omega'} c_{\omega''} \dd\omega''$, so $f(\omega')=0$, and the first inequality of \eqref{eq:dfw_N} simply comes from \eqref{eq:partial_phi}. The second inequality of \eqref{eq:dfw_N} comes from the first line of \eqref{eq:>N-Z-Z}, which implies $c_\omega \le tN$ because \begin{equation}\label{eq:cZ<N}
    \norm{\overline{Z}}\le t^{-1} \int^t_0 \dd s \norm{Z}=\norm{Z}=N.
\end{equation}
Due to this bound on the slope, the range of $f(\omega)$ is contained in $[-\pi/2, \pi/2]$.
\end{proof}

\subsection{An alternative viewpoint: correlation cannot decay in $\order(1)$ time }

Theorem \ref{thm:phase} above is one central result of this paper: it shows the robustness of the relative phase for GHZ-like states, proves HL for QFI in \eqref{eq:F>N2}, and sets foundations for the latter sections on efficient protocols. As an interlude, this subsection is devoted to an \emph{alternative} argument for robustness of metrology, at least at the level of QFI. The intuition is that the super-sensitivity of $\ket{\GHZ}$ with respect to the $Z$ field stems from its large $ZZ$ correlation/fluctuation. Mathematically, one can verify that QFI \eqref{eq:QFI} is just proportional to $ZZ$ connected correlation function (see \eqref{eq:ZZ=cin}) when $V=0$. Then for robustness, we basically need to show such long-range correlation cannot be killed in $\order(1)$ time evolution. Formalizing this idea, we have:

\begin{prop}\label{prop:cor_decay}
    If initially \begin{equation}\label{eq:ZZ=cin}
        \alr{ZZ}_{\rm in}-\alr{Z}_{\rm in}\alr{Z}_{\rm in}=c'_{\rm in} N^2,
    \end{equation} 
    with $c'_{\rm in}>0$, then QFI defined in \eqref{eq:QFI} satisfies \begin{equation}\label{eq:QFI>c'}
        {\cal F}(\ket{\psi_\omega}) \ge 4t^2 N^2 (c'_{\rm in}-4Jt ),
    \end{equation}
    which is HL for $t<c'_{\rm in}/(4J)$.
\end{prop}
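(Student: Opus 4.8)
The plan is to evaluate the QFI directly through its second expression in \eqref{eq:QFI}, reduce it to a variance of the time-averaged field $\overline{Z}$ in the initial state, and then argue that a norm-bounded perturbation cannot destroy an extensive variance over short times. The correlation-decay hypothesis \eqref{eq:initial_cor} will play no role here; only the extensive fluctuation \eqref{eq:ZZ=cin} and the short-time norm estimate \eqref{eq:Z-Z<} are needed.

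First, note that the computation in \eqref{eq:cZ=Z} and \eqref{eq:partial=Z} never used the special structure of $\ket{\alpha}_{\rm in}$, so it applies verbatim to the full state: $\partial_\omega\ket{\psi_\omega} = -\ii t\,\ee^{-\ii t H_\omega}\overline{Z}\ket{\psi}_{\rm in}$, with $\overline{Z}$ as in \eqref{eq:cZ=Z}. Since $\overline{Z}$ is Hermitian and $\ee^{-\ii t H_\omega}$ unitary, $\braket{\partial_\omega\psi_\omega|\partial_\omega\psi_\omega}=t^2\alr{\overline{Z}^2}_{\rm in}$ and $\braket{\psi_\omega|\partial_\omega\psi_\omega}=-\ii t\,\alr{\overline{Z}}_{\rm in}$, so \eqref{eq:QFI} collapses to ${\cal F}(\ket{\psi_\omega}) = 4t^2\bigl(\alr{\overline{Z}^2}_{\rm in}-\alr{\overline{Z}}_{\rm in}^2\bigr)$, i.e.\ $4t^2$ times the initial variance of $\overline{Z}$.

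Next I would compare this variance to the given variance of $Z$. Write $\overline{Z}=Z+B$ with $B:=\overline{Z}-Z$ Hermitian, so that the bound \eqref{eq:Z-Z<} gives $\norm{B}\le NJt$. Introduce the standard-deviation seminorm $\Delta O := \norm{(O-\alr{O}_{\rm in})\ket{\psi}_{\rm in}} = \min_c\norm{(O-c)\ket{\psi}_{\rm in}}$, whose square is the variance; the minimization form makes the triangle inequality $\Delta\overline{Z}=\Delta(Z+B)\ge \Delta Z-\Delta B$ immediate. Since $\Delta B \le \norm{B\ket{\psi}_{\rm in}}\le\norm{B}\le NJt$ and, by hypothesis \eqref{eq:ZZ=cin}, $\Delta Z = \sqrt{c'_{\rm in}}\,N$, one obtains $\Delta\overline{Z}\ge N(\sqrt{c'_{\rm in}}-Jt)$. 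Squaring and using $c'_{\rm in}\le 1$ (as $\mathrm{Var}(Z)\le\norm{Z}^2=N^2$) yields $(\Delta\overline{Z})^2\ge N^2(c'_{\rm in}-2\sqrt{c'_{\rm in}}\,Jt)\ge N^2(c'_{\rm in}-4Jt)$, which combined with the QFI identity is exactly \eqref{eq:QFI>c'}. Equivalently, one may expand $\mathrm{Var}(Z+B)=\mathrm{Var}(Z)+\mathrm{Var}(B)+2\,\mathrm{cov}(Z,B)$ and bound the covariance by $2\abs{\mathrm{cov}(Z,B)}\le 2\,\Delta Z\,\Delta B$ via Cauchy--Schwarz.

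The only substantive step is controlling the cross term between $Z$ and the perturbation-induced shift $B=\overline{Z}-Z$; this is where $V$ enters, solely through $\norm{\overline{Z}-Z}\le NJt$ from \eqref{eq:Z-Z<}, and the triangle-inequality/Cauchy--Schwarz estimate guarantees that a perturbation of norm $\order(NJt)$ can reduce the extensive standard deviation $\sqrt{c'_{\rm in}}\,N$ by at most $\order(NJt)$. I do not expect this to present any real difficulty: the content is the QFI-as-variance identity plus a one-line seminorm inequality, and the loose constant $4$ in \eqref{eq:QFI>c'} leaves ample room (the sharper $c'_{\rm in}-2Jt$ is in fact available from the same argument).
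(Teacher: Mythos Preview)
Your proof is correct and follows essentially the same approach as the paper: reduce the QFI to $4t^2\,\mathrm{Var}_{\rm in}(\overline{Z})$, write $\overline{Z}=Z+(\overline{Z}-Z)$, and control the cross term via $\norm{\overline{Z}-Z}\le NJt$. Your packaging via the standard-deviation triangle inequality is slightly cleaner than the paper's term-by-term expansion (which bounds $\alr{Z(\overline{Z}-Z)}_{\rm in}$ and $\alr{Z}_{\rm in}\alr{\overline{Z}-Z}_{\rm in}$ separately using $\norm{Z}\le N$) and indeed yields the sharper constant $2\sqrt{c'_{\rm in}}\le 2$ in place of $4$, as you note.
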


The GHZ state corresponds to the case $c'_{\rm in}=1$. We do not need the conditions \eqref{eq:Z0-Z1=cin} and \eqref{eq:initial_cor} here, although it might be possible to derive \eqref{eq:ZZ=cin} from those (possibly under physical assumptions that off-diagonal elements like $\alr{0|Z|1}_{\rm in}$ are subdominant). Beyond GHZ metrology, Proposition \ref{prop:cor_decay} thus also applies to metrology using spin-squeezed states \cite{squeez_rev11}. However, we leave as future work to design efficient measurement protocols for such states.

\begin{proof}
Using \eqref{eq:cZ=Z} in a similar way as \eqref{eq:F<N}, QFI in \eqref{eq:QFI} becomes
    \begin{align}\label{eq:F>N2-N}
    \frac{1}{4t^2}{\cal F}(\ket{\psi_\omega}) &= \braket{\overline{Z}^2}_{\rm in}-\braket{\overline{Z}}^2_{\rm in} \nonumber\\ &= \braket{Z^2}_{\rm in}-\braket{Z}^2_{\rm in} + \braket{Z(\overline{Z}-Z)+(\overline{Z}-Z)Z+(\overline{Z}-Z)^2}_{\rm in} - 2\braket{Z}_{\rm in}\braket{\overline{Z}-Z}_{\rm in}-\braket{\overline{Z}-Z}^2_{\rm in} \nonumber\\
    &\ge c'_{\rm in}N^2 - 4N \norm{\overline{Z}-Z}.
\end{align}
Here in the second line we have used $\overline{Z}=Z+(\overline{Z}-Z)$. In the third line, we have used \eqref{eq:ZZ=cin}, $\braket{(\overline{Z}-Z)^2}_{\rm in}\ge \braket{\overline{Z}-Z}^2_{\rm in}$, $\abs{\braket{Z}_{\rm in} }\le N$ and $\braket{Z(\overline{Z}-Z)}_{\rm in}\ge -\norm{Z}\norm{\overline{Z}-Z}=-N\norm{\overline{Z}-Z}$. Plugging in \eqref{eq:Z-Z<} leads to \eqref{eq:QFI>c'}.
\end{proof}

\section{LOCC protocol in 1d: sampling matrix product states}\label{sec:1d}
\subsection{LOCC protocol}

For our situation where the system remains in a pure state, \cite{Zhou_LOCC20} shows that QCRB can always be saturated by a measurement protocol built of local operation and classical communication (LOCC), where the spins $\{1,2,\cdots,N\}$ are measured one after another, adaptively. Namely, first measure spin $1$ in some chosen basis, and then measure spin $2$ in a basis depending on the previous measurement outcome $x_1\in \{0,1\}$, and so on. After measuring all the spins, one obtains a binary string $\bm{x}:=x_1\cdots x_N$, from which one tries to decode the parameter $\omega$. Note that spins with nearby labels do not need to be close spatially. The measurement is described by the set of projectors \red{$E_{\bm{x}}$, or equivalently the set of product states $\ket{E_{\bm{x}}}$:} \begin{subequations}\begin{align}\label{eq:Ex_POVM}
    E_{\bm x} &= e_{x_1} \otimes e_{x_1x_2}\otimes \cdots \otimes e_{x_1\cdots x_N }, \\
    \ket{E_{\bm x}} &= \ket{e_{x_1}} \otimes \ket{e_{x_1x_2}}\otimes \cdots \otimes \ket{e_{x_1\cdots x_N } }
\end{align}\end{subequations}
where $e_{x_1\cdots x_j}=\ket{e_{x_1\cdots x_j}} \bra{e_{x_1\cdots x_j}}$ is a projector on spin $j$ that projects onto a basis \red{$\{\ket{e_{x_1\cdots x_{j-1}0}}, \ket{e_{x_1\cdots x_{j-1}1}}\}$} determined by the previous measurement outcomes $x_1\cdots x_{j-1}$. 
Like in measurement-based quantum computation, such classical communication is necessary in general \cite{Zhou_LOCC20}; see also \cite{Friedman:2022vqb}. Note that we use lowercase $e$ for operator/state on a single qubit, while uppercase $E$ for projector on multiple qubits. 

For $t< c_{\rm in}/J$, combining Corollary \ref{cor:QFI} with \cite{Zhou_LOCC20} implies the existence of a LOCC protocol that achieves HL, at least for local estimation.
However, although \cite{Zhou_LOCC20} describes a procedure to obtain the local basis $E_{\bm{x}}$ defining the protocol based on the quantum state, it is not guaranteed that this procedure is easy to implement in practice. More precisely, we need to show that the local basis is \emph{classically computable} in $\poly{N}$ runtime (which implies space complexity is also $\poly{N}$). From now on, we show that there is indeed a LOCC protocol of global estimation that (\emph{i}) achieves HL for certain class of GHZ-like initial states with $t< c_{\rm in}/J$, and (\emph{ii}) is provably efficient to implement. In this section, after showing the procedure to determine the local basis \eqref{eq:Ex_POVM}, we focus on 1d where the complexity result is relatively easy to establish, and applies to more general initial states, using matrix product states (MPSs). In the next two sections, we use sampling complexity results to deal with general interaction graphs.

\subsection{Finding the LOCC measurement basis}

In this subsection, we adapt the procedure in \cite{Zhou_LOCC20} to our case of GHZ metrology with global estimation. Namely, we show how to determine the local measurement basis $E_{\bm{x}}$ from knowledge of $\ket{\psi_{\omega'}}$ where $\omega'$ is known.
Observe that the probability distribution $\bra{\psi_{\omega'}} E_{\bm x} \ket{\psi_{\omega'}}$ of the measurement string for $\ket{\psi_{\omega'}}$ is \begin{equation}\label{eq:P'x=}
    \bra{\psi_{\omega'}} E_{\bm x} \ket{\psi_{\omega'}} = \frac{1}{2}\lr{\braket{E_{\bm x}}_{\phi^0_{\omega'}} + \braket{E_{\bm x}}_{\phi^1_{\omega'}} } + \mathrm{Re} \bra{\phi^0_{\omega'}} E_{\bm x} \ket{\phi^1_{\omega'}},
\end{equation}
Since the relative phase between $\phi^0_{\omega'}$ and $\phi^1_{\omega'}$ only contributes to the second term, the intuition is to make the second term large enough, and check that is also the case for any $\omega\in \mathcal{I}_{\omega'}$. Indeed, if the measurement basis is not chosen carefully, the second term would become exponentially small, leading to SQL as shown in Proposition \ref{prop:X_fail}. Strictly speaking, the second term has amplitude $\abs{\bra{\phi^0_{\omega'}} E_{\bm x} \ket{\phi^1_{\omega'}}}$ and phase $\theta'_{\bm x}=\mathrm{arg} \bra{\phi^0_{\omega'}} E_{\bm x} \ket{\phi^1_{\omega'}}$, and we demand a criterion for each: \begin{enumerate}
    \item We want the amplitude to be as large as possible: In order to saturate Cauchy-Schwarz inequality \red{\begin{align}\label{eq:01<0+1}
        \abs{\bra{\phi^0_{\omega'}} E_{\bm x} \ket{\phi^1_{\omega'}}} = \abs{\bra{\phi^0_{\omega'}} E_{\bm x}^{1/2}\cdot  E_{\bm x}^{1/2} \ket{\phi^1_{\omega'}}} &\le \frac{1}{2}\lr{\bra{\phi^0_{\omega'}} E_{\bm x}^{1/2}\cdot  E_{\bm x}^{1/2} \ket{\phi^0_{\omega'}} + \bra{\phi^1_{\omega'}} E_{\bm x}^{1/2}\cdot  E_{\bm x}^{1/2} \ket{\phi^1_{\omega'}}}\nonumber\\ &= \frac{1}{2}\lr{\braket{E_{\bm x}}_{\phi^0_{\omega'}} + \braket{E_{\bm x}}_{\phi^1_{\omega'}} },
    \end{align}}
    we desire $E^{1/2}_{\bm x} \ket{\phi^0_{\omega'}}$ and $E^{1/2}_{\bm x} \ket{\phi^1_{\omega'}}$ only differ by a phase, or equivalently, \begin{equation}\label{eq:EME=0'}
        \bra{E_{\bm x}} \cM \ket{ E_{\bm x}}=0,\where \cM := \ket{\phi^0_{\omega'}}\bra{\phi^0_{\omega'}}-\ket{\phi^1_{\omega'}}\bra{\phi^1_{\omega'}}.
    \end{equation}

    \item The phase \begin{equation}\label{eq:sign=-x}
        \ee^{\ii\theta'_{\bm x}}=(-1)^{\bm x}\ii=\pm \ii,
    \end{equation}
    where the $+$ ($-$) sign is for even (odd) string $\bm x$, i.e., determined by the parity. This requires $\bra{\phi^0_{\omega'}} E_{\bm x} \ket{\phi^1_{\omega'}}$ is pure imaginary, or equivalently, \begin{equation}\label{eq:EtME=0}
        \bra{E_{\bm x}} (\tcM+\tcM^\dagger) \ket{E_{\bm x}}=0, \where \tcM := \ket{\phi^1_{\omega'}}\bra{\phi^0_{\omega'}}.
    \end{equation}
\end{enumerate} 

These two conditions combine to \begin{equation}\label{eq:1=-x0}
    \braket{E_{\bm x}| \phi^1_{\omega'}} = (-1)^{\bm x}\ii \braket{E_{\bm x}| \phi^0_{\omega'}}.
\end{equation}
Note that the phase \eqref{eq:sign=-x} is chosen deliberately such that \eqref{eq:1=-x0} is satisfied by the standard $x$-basis measurement in the unperturbed case $V=0$, where a known field $\omega'$ \eqref{eq:w'=} would accumulate a phase $\theta =\pi/2$ modular $\pi$.
In general, the measurement basis satisfying \eqref{eq:1=-x0} can be determined by the following procedure that generalizes \cite{Zhou_LOCC20}.
\begin{proc}\label{proc:Ex} 
Procedure to determine the local basis $e_{x_1},e_{x_1x_2},\cdots$ along a given trajectory $x_1,x_2,\cdots$.
\begin{enumerate}
    \item Calculate ``reduced matrices'' on spin $1$: $\cM_{\varnothing}=\mathrm{tr}_{2\cdots N}{\cM}, \tcM_{\varnothing}=\mathrm{tr}_{2\cdots N}{\tcM}$. Here subscript $\varnothing$ means no previous substring.
    \item Find an orthonormal basis $\ket{e_{x_1}}$ for spin $1$ s.t. \begin{equation}\label{eq:E1ME1'=0}
        \bra{e_{x_1}} \cM_{\varnothing}\ket{e_{x_1}}=0, \quad \mathrm{and} \quad \bra{e_{x_1}} \lr{\tcM_{\varnothing}+\tcM^\dagger_{\varnothing}}\ket{e_{x_1}}=0.
    \end{equation}
    This is possible because two traceless Hermitian matrices can be simultaneously zero-diagonalized \cite{Zhou_LOCC20}: there is always a basis such that one matrix is proportional to Pauli $X$, while the other is linear combination of $X$ and $Y$. Moreover, ${\rm Im}\bra{e_{0}} \tcM_{\varnothing}\ket{e_{0}}$ is chosen to be positive. This completely determines $e_{x_1}$, as long as $\cM_{\varnothing}$ and $\tcM_{\varnothing}$ are nonvanishing.
    \item Calculate $\cM_{x_1}=\mathrm{tr}_{3\cdots N}\bra{e_{x_1}} \cM\ket{e_{x_1}}, \tcM_{x_1}=\mathrm{tr}_{3\cdots N}\bra{e_{x_1}} \tcM\ket{e_{x_1}}$.  We are using $\langle e_{x_1}|\mathcal{M}|e_{x_1}\rangle$ to denote an operator acting only on spins $2,\ldots,N$.
    \item Find an orthonormal basis $\{\ket{e_{x_1x_2}}:x_2=0,1\}$ for spin $2$ s.t. \begin{equation}\label{eq:finde2}
        \bra{e_{x_1x_2}} \cM_{x_1}\ket{e_{x_1x_2}}=\bra{e_{x_1x_2}} \lr{\tcM_{x_1}+\tcM_{x_1}^\dagger}\ket{e_{x_1x_2}}=0.
    \end{equation}  
    The sign of ${\rm Im}\bra{e_{x_10}} \tcM_{x_1}\ket{e_{x_10}}$ is chosen to be $(-1)^{x_1}$.
    \item Repeat steps 3 and 4 for spins $3,\cdots,N$.
\end{enumerate}
One can verify that condition \eqref{eq:1=-x0} is satisfied in the final step.
\end{proc}

\subsection{Matrix product state approximations}

\begin{figure}[t]
\centering
\includegraphics[width=.9\textwidth]{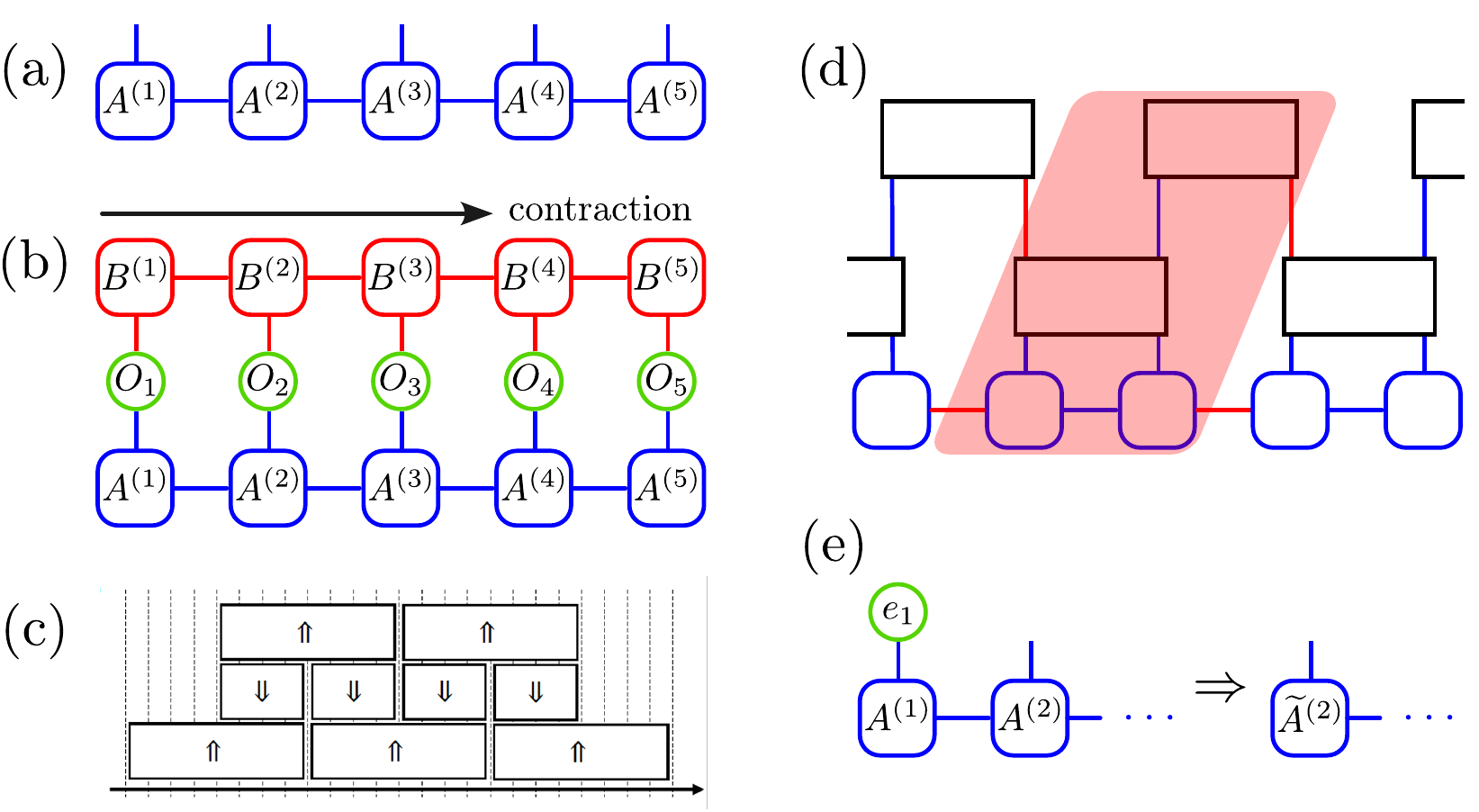}
\caption{\label{fig:MPS} (a) Tensor representation of MPS $\ket{\bf{A}}$ for $N=5$. (b) Tensor representation for $\braket{\mathbf{B}|O|\mathbf{A}}$, where $O=O_1\otimes O_2\otimes \cdots$ is a string operator. It can be contracted along the spatial direction in $\poly{N}$ memory and runtime, if the bond dimensions of the two MPSs are $\poly{N}$. More precisely, at the $n$-th step one has contracted the tensors on the leftmost $n$ qubits to a matrix $C^{(n)}$ of dimension $\poly{N}$; $C^{(n)}$ is then contracted with tensors $A^{(n+1)}, O_{n+1}, B^{(n+1)}$ to yield $C^{(n+1)}$, which costs $\poly{N}$ overhead. When finding the measurement basis for the LOCC protocol, we need e.g. $O=e_{x_1}\otimes \cdots \otimes e_{x_1\cdots x_n}\otimes O_{n+1}$ to calculate $\tcM_{x_1\cdots x_n}$. (c) $3$-layer circuit unitary $U'$ that approximates $\ee^{-\ii t H_{\omega'}}$ with error $\epsilon$. Each gate acts on at most $L=\order(\log(tN/\epsilon) )$ neighboring spins, so $U'$ maps MPS to MPS preserving the condition that the bond dimension is $\poly{N}$. Adapted from \cite{HHKL}. (d) Acting a $2$-layer unitary circuit on an MPS, the bond dimension is increased at most by a factor of the local physical dimension. This can be seen by viewing the shaded region as the new matrix for the final state, whose legs are shown by red. (e) A sketch of why MPS is easy to sample: Whenever a qubit is measured, that qubit can be contracted and the neighboring matrix gets updated. The state is still an MPS on one fewer qubit, with the bond dimension unchanged. }
\end{figure}

In general, the states $\ket{\phi^\alpha_{\omega'}}$ and therefore the matrices $\cM,\tcM$ cannot be computed exactly. Instead, we need approximations of them that (\emph{i}) can be computed efficiently, and (\emph{ii}) induces negligible error for quantum metrology. This will lead to approximation $E_{\bm{x}}^{\rm ap}$ of the local basis, along which the quantum measurement is actually done. In this section, we focus on the easier case of one dimension, where we use MPS to approximate states.

An MPS on a chain of qubits $\{1,\cdots,N\}$ is defined by a set of $D\times D$ matrices $\mathbf{A}=\{A^{(j)}_\alpha\}$: \begin{equation}
    \ket{\mathbf{A} } = \sum_{\alpha_1=0,1}\cdots \sum_{\alpha_N=0,1} \tr{A_{\alpha_1}^{(1)} \cdots A_{\alpha_N}^{(N)} } \ket{\alpha_1\cdots \alpha_N},
\end{equation}
where $D$ is called the bond dimension. Pictorially $\{A^{(j)}_\alpha\}$ are tensors with three legs: one physical leg and two bond legs, and the MPS is represented by contracting the bond legs of nearest-neighbors. See Fig.~\ref{fig:MPS}(a). Although any state can be expressed as an MPS, the bond dimension is typically exponentially large in $N$. In contrast, MPSs of physical interest are those whose bond dimension is upper bounded by $\poly{N}$, which contains only a polynomial number of parameters. Such an MPS is easy to manipulate, because many properties of it can be computed classically in polynomial time by contracting the tensors. For example, given a string operator $O=O_1\otimes O_2\otimes \cdots$, its matrix element between two MPSs is represented by sandwiching the operator by the tensors of the two MPSs (one is conjugated). As shown in Fig.~\ref{fig:MPS}(b), one can contract this tensor from left to right, using only $\poly{N}$ memory and runtime if bond dimensions are bounded by $\poly{N}$.  Our first result shows  that given our assumptions in the previous section, such an efficient MPS must exist.

\begin{prop}\label{prop:MPS}
Assuming \eqref{eq:initial_cor} holds for  $\ket{\alpha}_{\rm in}$ in one dimension,
there is an approximation $\ket{\Psi_{\omega'}}=\frac{1}{\sqrt{2}} \lr{\ket{\Phi^0_{\omega'}} + \ket{\Phi^1_{\omega'}}}$ for the state $\ket{\psi_{\omega'} }$,
such that \begin{equation}\label{eq:phi-phi<e}
    \norm{\ket{\Phi^\alpha_{\omega'}} - \ket{\phi^\alpha_{\omega'}}} \le 1/\sqrt{N},
\end{equation}
and $\ket{\Phi^\alpha_{\omega'}}$ is an MPS with $\poly{N}$ bond dimension.
\end{prop}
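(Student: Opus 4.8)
The plan is to build each component $\ket{\Phi^\alpha_{\omega'}}$ in two stages: first replace the short-range-correlated initial state $\ket{\alpha}_{\rm in}$ by an efficient MPS, then apply a shallow quantum circuit approximating $\ee^{-\ii t H_{\omega'}}$, all while tracking how the bond dimension grows. First I would approximate $\ket{\alpha}_{\rm in}$ by an MPS $\ket{A^\alpha}$ with $\poly{N}$ bond dimension and error $\norm{\ket{A^\alpha}-\ket{\alpha}_{\rm in}}\le 1/(2\sqrt{N})$. This is where the correlation hypothesis \eqref{eq:initial_cor} enters: in one dimension, exponential decay of correlations implies an entanglement area law, and hence an MPS approximation whose bond dimension is polynomial in $N$ at inverse-polynomial target error. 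For the main examples (product states such as the GHZ components, where $\xi=0$) this step is trivial, with bond dimension $1$.

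Next I would invoke the low-depth circuit construction of \cite{HHKL} (Fig.~\ref{fig:MPS}(c)): since $H_{\omega'}=V+\omega' Z$ is spatially local and $t$ is $N$-independent, $\ee^{-\ii t H_{\omega'}}$ is approximated to operator-norm error $\epsilon$ by a depth-$3$ circuit $U'$ whose gates each act on $L=\order(\log(tN/\epsilon))$ contiguous qubits. Setting $\epsilon=1/(2\sqrt{N})$ keeps $L=\order(\log N)$. Defining $\ket{\Phi^\alpha_{\omega'}}:=U'\ket{A^\alpha}$, the triangle inequality gives $\norm{\ket{\Phi^\alpha_{\omega'}}-\ket{\phi^\alpha_{\omega'}}} \le \norm{U'\lr{\ket{A^\alpha}-\ket{\alpha}_{\rm in}}} + \norm{\lr{U'-\ee^{-\ii t H_{\omega'}}}\ket{\alpha}_{\rm in}} \le 1/(2\sqrt{N})+1/(2\sqrt{N})=1/\sqrt{N}$, which is exactly \eqref{eq:phi-phi<e} (unitarity of $U'$ handles the first term, and operator-norm closeness the second).

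It remains to bound the bond dimension of $\ket{\Phi^\alpha_{\omega'}}$. As illustrated in Fig.~\ref{fig:MPS}(d), applying one layer of non-overlapping gates of range $L$ to an MPS increases its bond dimension across any cut by at most the Schmidt rank of the single gate straddling that cut, i.e.\ by a multiplicative factor at most $2^{\order(L)}$. Since $U'$ has only $3$ layers, the bond dimension of $U'\ket{A^\alpha}$ is at most $2^{\order(L)}$ times that of $\ket{A^\alpha}$, which is $2^{\order(\log N)}\times\poly{N}=\poly{N}$. Finally $\ket{\Psi_{\omega'}}=\lr{\ket{\Phi^0_{\omega'}}+\ket{\Phi^1_{\omega'}}}/\sqrt{2}$ inherits the same error bound and a $\poly{N}$ bond dimension.

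The main obstacle is the rigorous version of the first step: passing from the correlation-decay assumption \eqref{eq:initial_cor} to a genuinely efficient MPS with $\poly{N}$ bond dimension at error $\order(1/\sqrt{N})$. The subtlety is that a von Neumann area law alone does not control the truncation error of an MPS; one needs a bound on a R\'enyi entropy of index below $1$, which is precisely what the one-dimensional ``exponential decay of correlations implies area law'' results supply. For product or manifestly finitely-correlated initial states this difficulty disappears entirely, so it affects only the generality claimed beyond the GHZ case; the circuit step and bond-dimension accounting are then mechanical.
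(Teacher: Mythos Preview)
Your proposal is correct and follows essentially the same approach as the paper: approximate $\ket{\alpha}_{\rm in}$ by a $\poly{N}$-bond-dimension MPS using the 1d ``correlation decay $\Rightarrow$ area law $\Rightarrow$ efficient MPS'' result (the paper cites \cite{corDecay_area13} for this), apply the depth-$3$ HHKL circuit $U'$ with $L=\order(\log N)$ gates, and combine the two errors by the triangle inequality. The only cosmetic differences are the specific error split (the paper takes $N^{-3/2}$ for the initial-state MPS and $1/\sqrt{2N}$ for the circuit) and the bond-dimension bookkeeping (the paper blocks $L$ qubits into qudits and views $U'$ as a $\poly{N}$-bond-dimension MPO, equivalent to your Schmidt-rank-across-cuts argument).
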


\begin{proof}
In 1d, any $\ket{\alpha}_{\rm in}$ with exponential decay of correlation \eqref{eq:initial_cor} can be approximated by an MPS $\ket{\alpha}_{\rm IN}$ with bond dimension $\poly{N}$, with a small error \cite{corDecay_area13}: \begin{equation}\label{eq:a'-a}
    \norm{\ket{\alpha}_{\rm IN} -\ket{\alpha}_{\rm in}}\le N^{-3/2}.
\end{equation} 

On the other hand, there exists a quantum circuit unitary $U'$ (which can be found using the HHKL algorithm \cite{HHKL}) shown in Fig.~\ref{fig:MPS}(c) that approximates the evolution unitary $\ee^{-\ii t H_{\omega'}}$ to error \begin{equation}\label{eq:HHKL}
    \norm{U'-\ee^{-\ii t H_{\omega'}}} \le \epsilon,
\end{equation}
where $U'$ consists of three layers of local gates. Each gate is acting on at most $L=\order(\log(tN/\epsilon) )$ neighboring spins, and is simply $\ee^{\mp \ii t H_{\omega', {\rm local}}}$, where $H_{\omega', {\rm local}}$ only contains terms in $H_{\omega'}$ that are supported inside the given region. Only the second layer is backward time evolution with $+$ sign on the exponent. We choose $\epsilon= 1/\sqrt{2N}$ and focus on constant time, so that $L=\order(\log N)$.\footnote{Note that \cite{HHKL} further splits $U'$ to ${\rm polylog}(N/\epsilon)$ layers of universal gate set, which is not needed here.}

We then define $\ket{\Phi^\alpha_{\omega'}}=U'\ket{\alpha}_{\rm IN}$, which satisfies \eqref{eq:phi-phi<e} from \eqref{eq:a'-a} and \eqref{eq:HHKL}.
Since $\ket{\alpha}_{\rm IN}$ is an MPS with $\poly{N}$ bond dimension,
the output $\ket{\Phi^\alpha_{\omega'}}$ from acting quantum circuit $U'$ on it remains to be such an MPS, because $U'$ increases the bond dimension at most by a multiplicative factor $\poly{N}$. To see this, one can combine each block of $L$ qubits to one qudit of dimension $2^L=\poly{N}$; the matrices $A^{(j)}$ for $\ket{\alpha}_{\rm IN}$ are simply multiplied within each block, which does not change the bond dimension. On the other hand, the quantum circuit $U'$ is simply two layers of nearest-neighbor 2-local gates on the qudits (with a layer of 1-local gates in between), which can multiply at most a factor of $\poly{N}$ to the bond dimension, because the quantum circuit can be expressed as a matrix product operator of $\poly{N}$ bond dimension \cite{MPU18}. One can also see this fact in Fig.~\ref{fig:MPS}(d) by identifying the MPS structure of the final state. 
\end{proof}

Based on the MPS approximation, the measurement basis is not exactly the one determined by \eqref{eq:1=-x0}, but an approximate version $E_{\bm x}^{\rm ap}$ that satisfies
\begin{equation}\label{eq:1=-x0_MPS}
    \braket{E_{\bm x}^{\rm ap}| \Phi^1_{\omega'}} = (-1)^{\bm x}\ii \braket{E_{\bm x}^{\rm ap}| \Phi^0_{\omega'}}.
\end{equation}
This basis is again determined by Procedure \ref{proc:Ex}; the only difference is that the matrices are substituted by 
\begin{subequations}\label{eq:Map}
\begin{align}
    \cM^{\rm ap} &= \ket{\Phi^0_{\omega'}}\bra{\Phi^0_{\omega'}}-\ket{\Phi^1_{\omega'}}\bra{\Phi^1_{\omega'}},\\
    \tcM^{\rm ap} &= \ket{\Phi^1_{\omega'}}\bra{\Phi^0_{\omega'}}.
\end{align}
\end{subequations}
This procedure becomes efficient to implement, as we will show.

\subsection{Efficient metrology protocol with Heisenberg limit}

With the developments above, we propose the following metrology protocol sketched in Fig.~\ref{fig:result}(b).
\begin{proc}\label{proc:metro} Robust metrology protocol with LOCC measurement in 1d: 
    \begin{enumerate}
        \item Prepare the quantum system in state $\ket{\psi}_{\rm in}$ in \eqref{eq:psi_in} that satisfies \eqref{eq:Z0-Z1=cin} and \eqref{eq:initial_cor}.
        \item Evolve under $H_{\omega}$ for time $t$.
        \item Measure $\{E_{\bm x}^{\rm ap}\}$ determined by Procedure \ref{proc:Ex} with MPS-approximated matrices \eqref{eq:Map}, record the parity $\pm 1$ of the string $x_1\cdots x_N$. In this step, do not calculate $E_{\bm x}^{\rm ap}$ for all $\bm x$ because there are exponentially many of them. Instead, just calculate the one chosen by the measurement trajectory. 
        
        \item Repeat steps 1-3 for $1\ll M\ll N$ times to extract an estimate of the average parity \begin{equation}
            \braket{\bf P} = \sum_{\bm x} (-1)^{\bm x} p_{\bm x}, \where p_{\bm x}:= \bra{\psi_{\omega}} E_{\bm x}^{\rm ap} \ket{\psi_{\omega}}.
        \end{equation}

        \item Classically simulate \begin{equation}
            \braket{\bf P}' = \sum_{\bm x} (-1)^{\bm x} p_{\bm{x}}',\where p_{\bm{x}}':=\bra{\psi_{\omega'}} E_{\bm x}^{\rm ap} \ket{\psi_{\omega'}},
        \end{equation}
        with error bounded by $1/\sqrt{N}$. Note that this quantity does not depend on the unknown $\omega$.

        \item Match the experimental result of $\braket{\bf P}-\braket{\bf P}'$ with $-\sin[f(\omega)]$ to read out the parameter $\omega$, using binary search in Proposition \ref{prop:binary}, where the function $f(\tilde{\omega})$ is classically computed for any given $\tilde{\omega}$ with error bounded by $1/\sqrt{N}$.
    \end{enumerate}
\end{proc}

Clearly, this protocol is to globally estimate any $\omega\in \mathcal{I}_{\omega'}$, and involves $\poly{N}$ quantum resources.
We show that for short-time dynamics, it achieves HL and involves only $\poly{N}$ classical computation.

\begin{thm}\label{thm:metro1d}
For $t<c_{\rm in}/J$ in 1d, Procedure \ref{proc:metro}

(0) gives an unbiased estimation of $\omega$ at large $N$: \begin{equation}\label{eq:P-P=f}
    \braket{\bf P} - \braket{\bf P}'= -\sin[f(\omega)] + \order(1/\sqrt{N}),
\end{equation}
and satisfies HL.

Furthermore, the classical computation involved has $\poly{N}$ runtime. Specifically, there is a $\poly{N}$ classical algorithm for calculating each of the following up to error $\order(1/\sqrt{N})$:

(1) $f(\tilde{\omega})$ for any $\tilde{\omega}\in \mathcal{I}_{\omega'}$.

(2) The local basis by Procedure \ref{proc:Ex} for a given trajectory $\bm x$.

(3) $\braket{\bf P}'$.

\end{thm}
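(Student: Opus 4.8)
The plan is to separate the \emph{correctness} claim (0) from the three \emph{efficiency} claims, and to tame the exponentially large sum over measurement strings by packaging it into a single bounded operator. I would introduce the parity observable $\mathbf{P}:=\sum_{\bm x}(-1)^{\bm x}\ket{E^{\rm ap}_{\bm x}}\bra{E^{\rm ap}_{\bm x}}$, which is Hermitian with $\mathbf{P}^2=1$ (hence $\norm{\mathbf{P}}=1$) since the $\ket{E^{\rm ap}_{\bm x}}$ form an orthonormal basis, so that $\braket{\mathbf{P}}=\bra{\psi_\omega}\mathbf{P}\ket{\psi_\omega}$ and $\braket{\mathbf{P}}'=\bra{\psi_{\omega'}}\mathbf{P}\ket{\psi_{\omega'}}$. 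The point of this reformulation is that, because $\norm{\mathbf{P}}=1$, every state-level error of size $\order(1/\sqrt N)$ — from \eqref{eq:psi'_omega} in Theorem \ref{thm:phase} and from the MPS truncation \eqref{eq:phi-phi<e} in Proposition \ref{prop:MPS} — propagates linearly to an $\order(1/\sqrt N)$ error in these expectation values, so I may freely replace $\ket{\phi^\alpha_{\omega'}}$ by $\ket{\Phi^\alpha_{\omega'}}$ and $\ket{\psi_\omega}$ by $\tfrac1{\sqrt2}(\ee^{-\ii f/2}\ket{\Phi^0_{\omega'}}+\ee^{\ii f/2}\ket{\Phi^1_{\omega'}})$.

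To prove (0), i.e.\ \eqref{eq:P-P=f}, I would compute the four matrix elements $\bra{\Phi^\beta_{\omega'}}\mathbf{P}\ket{\Phi^\alpha_{\omega'}}$ using the defining relation \eqref{eq:1=-x0_MPS}, namely $\braket{E^{\rm ap}_{\bm x}|\Phi^1_{\omega'}}=(-1)^{\bm x}\ii\,\braket{E^{\rm ap}_{\bm x}|\Phi^0_{\omega'}}$, together with completeness $\sum_{\bm x}\ket{E^{\rm ap}_{\bm x}}\bra{E^{\rm ap}_{\bm x}}=1$. Writing $a_{\bm x}:=\braket{E^{\rm ap}_{\bm x}|\Phi^0_{\omega'}}$, the phase factors combine so that the two diagonal elements both equal $Q:=\sum_{\bm x}(-1)^{\bm x}\abs{a_{\bm x}}^2$, while the off-diagonal ones equal $\pm\ii\sum_{\bm x}\abs{a_{\bm x}}^2=\pm\ii$. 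Substituting into $\bra{\psi_\omega}\mathbf{P}\ket{\psi_\omega}$ with coefficients $\ee^{\mp\ii f/2}$ collapses the exponential sum to $Q-\sin f(\omega)$; at $\omega=\omega'$ this gives $\braket{\mathbf{P}}'=Q$, so the offset $Q$ cancels in the difference and $\braket{\mathbf{P}}-\braket{\mathbf{P}}'=-\sin f(\omega)+\order(1/\sqrt N)$. That the associated estimator reaches HL then follows from Proposition \ref{prop:binary}: the monotone function $-\sin f(\omega)$ has slope $\Theta(N)$ by \eqref{eq:dfw_N}, the empirical value of $\braket{\mathbf{P}}$ has statistical error $\le 1/\sqrt M$ (as $\norm{\mathbf{P}}=1$), and for $M\ll N$ the systematic $\order(1/\sqrt N)$ is negligible beside $1/\sqrt M$, so a binary search calling the oracle $\order(\log M)$ times locates $\omega_{\rm est}$ at Heisenberg precision.

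For the efficiency claims (1)--(3), the common engine is Proposition \ref{prop:MPS}: $\ket{\Phi^\alpha_{\omega'}}$ (and, running the same construction at $\tilde\omega$, $\ket{\Phi^\alpha_{\tilde\omega}}$) are MPS of bond dimension $\poly N$, and by the left-to-right contraction of Fig.~\ref{fig:MPS}(b) any overlap or single-site reduced matrix of such MPS is computable in $\poly N$ time and memory. For (1) I would compute $f(\tilde\omega)$ as a relative phase: by Theorem \ref{thm:phase} the projections of $\ket{\psi_{\tilde\omega}}$ onto $\ket{\phi^{0,1}_{\omega'}}$ carry phases $\ee^{\mp\ii f/2}$ up to a common global phase, so $f(\tilde\omega)=\arg\braket{\Phi^1_{\omega'}|\Psi_{\tilde\omega}}-\arg\braket{\Phi^0_{\omega'}|\Psi_{\tilde\omega}}$, a difference of args of the four MPS overlaps $\braket{\Phi^\beta_{\omega'}|\Phi^\alpha_{\tilde\omega}}$, in which the unwanted global phase cancels and the $\order(1/\sqrt N)$ overlap error translates into an $\order(1/\sqrt N)$ error in $f$. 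For (2), along the single given trajectory $\bm x$ Procedure \ref{proc:Ex} only requires, at each site $n$, the two $2\times 2$ reduced matrices $\cM^{\rm ap}_{x_1\cdots x_n}$ and $\tcM^{\rm ap}_{x_1\cdots x_n}$ of \eqref{eq:Map}; each is one MPS contraction projecting the first $n$ sites onto the already-fixed $\ket{e_{x_1\cdots x_j}}$ and tracing out the rest, so the whole trajectory costs $N$ such contractions, i.e.\ $\poly N$ — crucially one never enumerates the $2^N$ strings. For (3), the same relation \eqref{eq:1=-x0_MPS} that collapsed the sum in part (0) gives $\braket{\mathbf{P}}'=Q=-\ii\braket{\Phi^0_{\omega'}|\Phi^1_{\omega'}}$, a single MPS overlap, again $\poly N$ (and in fact $\order(1/\sqrt N)$-small, consistent with the near-orthogonality of the two branches).

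I expect the main obstacle to be the error bookkeeping in part (0): one must verify that the collapse of the string sum survives all three approximation layers at once — the $\order(1/\sqrt N)$ distortion of $\ket{\psi_\omega}$ in \eqref{eq:psi'_omega}, the MPS truncation \eqref{eq:phi-phi<e}, and the fact that the basis actually used satisfies \eqref{eq:1=-x0_MPS} for the $\Phi$-states rather than the true $\phi$-states. The operator reformulation via $\mathbf{P}$ with $\norm{\mathbf{P}}=1$ is precisely what makes these errors additive and $N$-independent in number, which is why I would set the proof up this way from the outset; the one remaining subtlety is confirming that the slope bound \eqref{eq:dfw_N}, and hence HL, is not spoiled by the $\cos f$ Jacobian as $f$ approaches $\pm\pi/2$ at the edges of $\mathcal{I}_{\omega'}$, which only affects the constant prefactor and not the $N^{-1}$ scaling.
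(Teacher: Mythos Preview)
Your argument for part (0) is essentially the paper's: package the string sum into the unit-norm observable $\mathbf{P}$, pass to the MPS surrogate $\ket{\Psi_\omega}$ via \eqref{eq:psi'_omega} and \eqref{eq:phi-phi<e}, and collapse the matrix elements using \eqref{eq:1=-x0_MPS}. Part (2) likewise matches.

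Parts (1) and (3) are correct but take genuinely different routes. For (1), the paper rewrites $f(\tilde\omega)$ as the double integral \eqref{eq:fw=Z} of local $\alr{Z_j(s)}_\alpha$, discretizes it, and uses the Lieb--Robinson bound to confine each term to an $\order(\log N)$ region; this avoids building an MPS at $\tilde\omega$ and yields an explicit error budget. Your alternative extracts $f(\tilde\omega)$ directly as the relative phase between the two branch overlaps $\braket{\Phi^\alpha_{\omega'}|\Phi^\alpha_{\tilde\omega}}$, which is cleaner conceptually and reuses the MPS infrastructure already set up; the price is that you must rebuild an HHKL circuit $U'_{\tilde\omega}$ at each queried $\tilde\omega$, though with only $\order(\log M)$ queries this is harmless. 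For (3), the paper classically \emph{samples} the MPS $\ket{\Phi^0_{\omega'}}$ in the adaptive basis and averages the parity over $N$ samples to hit $\order(1/\sqrt N)$ statistical error. Your observation that $Q=\sum_{\bm x}(-1)^{\bm x}|a_{\bm x}|^2=-\ii\braket{\Phi^0_{\omega'}|\Phi^1_{\omega'}}$ sidesteps sampling entirely: it reduces $\braket{\mathbf P}'$ (up to $\order(1/\sqrt N)$) to a single deterministic MPS overlap, and moreover makes manifest that $\braket{\mathbf P}'=\order(1/\sqrt N)$ from the near-orthogonality of the branches. Both alternatives are shorter; the paper's versions have the virtue of being more self-contained (LR-based for (1)) and of motivating the sampling machinery that is genuinely needed later in higher dimensions (for (3)).
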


\begin{proof}
We prove the claims (0-3) one by one.

(0) $\braket{\bf P}$ is the expectation of observable ${\bf P}=\sum_{\bm x}(-1)^{\bm x} E_{\bm x}^{\rm ap}$ in the final state $\psi_\omega$ with $\norm{\bf P}=1$. \eqref{eq:psi'_omega} together with \eqref{eq:phi-phi<e} implies \begin{equation}\label{eq:psiw=Psiw}
    \ket{\psi_\omega} = \ket{\Psi_\omega} + \order(1/\sqrt{N}), \where \ket{\Psi_\omega} =\frac{1}{\sqrt{2}} \lr{\ee^{-\ii f(\omega)/2} \ket{\Phi^0_{\omega'}} + \ee^{\ii f(\omega)/2} \ket{\Phi^1_{\omega'}} },
\end{equation}
so \begin{subequations} \label{eq:P=P+1_N}
\begin{align}
    \braket{\bf P} &= \bra{\Psi_\omega} {\bf P} \ket{\Psi_\omega} + \order(1/\sqrt{N}), \\
    \braket{\bf P}' &= \bra{\Psi_{\omega'}} {\bf P} \ket{\Psi_{\omega'}} + \order(1/\sqrt{N}).
\end{align} \end{subequations}
By explicit calculation, \begin{align}\label{eq:P=P-f_MPS}
    \bra{\Psi_\omega} {\bf P} \ket{\Psi_\omega} &= \sum_{\bm x}(-1)^{\bm x} \glr{ \frac{1}{2}\lr{\braket{E_{\bm x}^{\rm ap}}_{\Phi^0_{\omega'}} + \braket{E_{\bm x}^{\rm ap}}_{\Phi^1_{\omega'}} } + \mathrm{Re} \mlr{\ee^{\ii f(\omega)} \bra{\Phi^0_{\omega'}} E_{\bm x}^{\rm ap} \ket{\Phi^1_{\omega'}} } } \nonumber\\
    &= \bra{\Psi_{\omega'}} {\bf P} \ket{\Psi_{\omega'}} - \sum_{\bm x} \abs{\bra{\Phi^0_{\omega'}} E_{\bm x}^{\rm ap} \ket{\Phi^1_{\omega'}}} \sin[f(\omega)] \nonumber\\
    &= \bra{\Psi_{\omega'}} {\bf P} \ket{\Psi_{\omega'}} - \sum_{\bm x} \frac{1}{2}\lr{\braket{E_{\bm x}^{\rm ap}}_{\Phi^0_{\omega'}} + \braket{E_{\bm x}^{\rm ap}}_{\Phi^1_{\omega'}} } \sin[f(\omega)]\nonumber\\
    &= \bra{\Psi_{\omega'}} {\bf P} \ket{\Psi_{\omega'}} -\sin[f(\omega)].
\end{align}
Here the first line follows from \eqref{eq:P'x=} and \eqref{eq:psiw=Psiw}. In the third line, we have used the MPS version of \eqref{eq:1=-x0}. To get the last line, we used $\sum_{\bm x} E_{\bm x}^{\rm ap}=1$. Thus \eqref{eq:P-P=f} is established by combining \eqref{eq:P=P+1_N} and \eqref{eq:P=P-f_MPS}.

Since $M\ll N$, the experimental error $1/\sqrt{M}$ for $\alr{\bf P}$ is dominant in \eqref{eq:P-P=f}. As a result, we can ignore the $1/\sqrt{N}$ errors from the final term, or the computation of $\alr{\bf P}'$ and $f(\omega)$. The HL is therefore satisfied, because the experimental $\alr{\bf P}$ is compared to a function of $\omega$ that has $\propto N$ slope according to \eqref{eq:dfw_N}, using the binary search algorithm in Proposition \ref{prop:binary}.

(1) From \eqref{eq:psi'_omega} and \eqref{eq:cZ=Z},
\begin{align}\label{eq:fw=Z}
    f(\tilde{\omega}) &= \ii \int^{\tilde{\omega}}_{\omega'} \dd \omega''\, \bra{\phi^0_{\omega''}}\partial_{\omega''}\ket{\phi^0_{\omega''}} - \bra{\phi^1_{\omega''}}\partial_{\omega''}\ket{\phi^1_{\omega''})} \nonumber\\ 
    &= \int^{\tilde{\omega}}_{\omega'} \dd \omega''\, \int^t_0 \dd s\, \braket{\ee^{\ii s H_{\omega''}}Z\ee^{-\ii s H_{\omega''} }}_0- \braket{\ee^{\ii s H_{\omega''}}Z\ee^{-\ii s H_{\omega''} }}_1 \nonumber\\
    &\approx \Delta_{\omega}\Delta_t \sum_{\omega''=\omega',\omega'+\Delta_\omega,\cdots,\tilde{\omega}}\sum_{s=0,\Delta_t,\cdots,t}  \braket{\ee^{\ii s H_{\omega''}}Z\ee^{-\ii s H_{\omega''} }}_0- \braket{\ee^{\ii s H_{\omega''}}Z\ee^{-\ii s H_{\omega''} }}_1,
\end{align}
\red{where we approximate the integral by a sum in the last line with step-sizes $\Delta_\omega,\Delta_t$ to be determined. \eqref{eq:fw=Z}} reduces to expectation of local $Z_j$ in finite time evolution from $\ket{\alpha}_{\rm in}$. According to Lieb-Robinson bound \eqref{eq:LRB}, to simulate $\braket{\ee^{\ii s H_{\omega''}}Z_j\ee^{-\ii s H_{\omega''} }}_0$ with error $\order(N^{-1/2})$,\footnote{\red{This scaling is chosen to ensure that the overall error in $f(\tilde\omega)$ is $\mathrm{O}(N^{-1/2})$.  To ensure this, note that $\sum_{\omega^{\prime\prime}} \Delta_\omega = |\tilde\omega - \omega^\prime| = \mathrm{O}(1/N)$, while we have $N$ on site terms in $Z$ and thus the difference of $\langle \cdots\rangle_0 - \langle \cdots \rangle_1$ will scale as $\mathrm{O}(N)$.  We therefore need the accuracy in simulating $\braket{\ee^{\ii s H_{\omega''}}Z_j\ee^{-\ii s H_{\omega''} }}_0$ to be $\mathrm{O}(N^{-1/2})$.}} one can truncate the dynamics in a region of size $\order(\log N)$ around $j$. Furthermore, $\ket{\alpha}_{\rm in}$ can be replaced by MPS $\ket{\alpha}_{\rm IN}$, because the error \eqref{eq:a'-a} contributes $\order\lr{1/\sqrt{N}}$ to \eqref{eq:fw=Z}. It takes $\poly{N}$ classical runtime to contract the matrices to get the initial density matrix of $\ket{\alpha}_{\rm IN}$ on the $\order(\log N)$-size region. Evolving the state and taking expectation of $Z_j$ invokes another $\poly{N}$ overhead. So the total complexity for computing \red{the last line of \eqref{eq:fw=Z} is $\poly{N}/\lr{\Delta_\omega\Delta_t}$, because there are $\order(1/\lr{N\Delta_\omega\Delta_t})$ number of summation terms. Finally, since each summation term varies with both $\omega''$ and $s$ by a bounded slope $\order(N^2)$ (e.g. $\abs{\partial_s \braket{\ee^{\ii s H_{\omega''}}Z\ee^{-\ii s H_{\omega''}}}_0 }\le \norm{[H_{\omega''}, Z]} = \order(N^2)$), the error of approximating the integral is bounded by $\order(1/\lr{N\Delta_\omega\Delta_t}\times N^2\Delta_\omega^2\times N^2\Delta_t^2)= \order(N^3\Delta_\omega\Delta_t)=\order(N^{-1})$ by choosing $\Delta_\omega,\Delta_t=\order(N^{-2})$. To summarize, $f(\tilde{\omega})$ can be computed with error $\order(N^{-1/2})$ in $\poly{N}$ time.}  %This easiness of calculating $f(\omega)$ actually generalizes to higher dimensions, where the algorithm is quasi-polynomial in $N$.

(2) Steps 2 and 4 in Procedure \ref{proc:Ex} are just calculations for $2$-by-$2$ matrices, so it suffices to show $\poly{N}$ complexity for steps 1 and 3, where in general one wants to compute \begin{equation}\label{eq:Mxn+1}
    \cM_{x_1\cdots x_{n} }^{\rm ap}=\mathrm{tr}_{n+2\cdots N}\bra{e_{x_1}^{\rm ap}}\cdots \bra{e_{x_1\cdots x_n}^{\rm ap}} \cM^{\rm ap} \ket{e_{x_1}^{\rm ap}}\cdots \ket{e_{x_1\cdots x_n}^{\rm ap}},
\end{equation}
and $\tcM_{x_1\cdots x_{n} }^{\rm ap}$ defined in a similar way. Since $\cM^{\rm ap}$ is the difference of (the density matrices of) two MPSs \eqref{eq:Map}, \eqref{eq:Mxn+1} can be computed by contracting tensors as shown in Fig.~\ref{fig:MPS}(b), which takes $\poly{N}$ classical runtime due to $\poly{N}$ bond dimension. This contraction algorithm works for $\tcM^{\rm ap}$ similarly.  
An alternative way to see this is depicted in Fig.~\ref{fig:MPS}(e): each time a qubit gets projected, the remaining qubits keep in an MPS with bond dimensions unchanged, so that one can easily iterate this process.

(3) Since the next measurement basis is always easily calculable as shown in (2), sampling string $\bm{x}$ with probability $p'_{\bm{x}}$ is classically easy, because any adaptive single-site measurement (e.g. measurement-based quantum computation) on MPS with $\poly{N}$ bond dimension is classically simulable \cite{tree_MPS06,MPS_rep07}. The idea is that one can sample the classical distribution $p'_{\bm x}$ by outputing the bits $x_1,x_2,\cdots$ one by one, where $x_1$, for example, is generated by a random variable $0,1$ with probabilities $\braket{e_{0}^{\rm ap}}_{\Psi_{\omega'}}, \braket{e_{1}^{\rm ap}}_{\Psi_{\omega'}}$ accordingly. These two probabilities are equal to $\braket{e_{0}^{\rm ap}}_{\Phi^0_{\omega'}}, \braket{e_{1}^{\rm ap}}_{\Phi^0_{\omega'}}$ due to the MPS version of \eqref{eq:P'x=} and \eqref{eq:Map}, which are easy to calculate from contracting tensors of the MPS $\Phi^0_{\omega'}$ described in (2). 
After sampling $N$ strings $\bm{x}$, one gets an estimate for $\braket{\bf P}'$ with error $\sim 1/\sqrt{N}$. This in total takes $\poly{N}$ classical resources.
\end{proof}

\section{Sampling of real time evolution}\label{sec:samp}
The MPS techniques in the previous section do not generalize to higher dimensions. For example, we used a $\poly{N}$ classical algorithm that samples LOCC measurement on MPS; while 2d tensor network states like the cluster state are hard to sample classically, because they have the potential to perform universal quantum computation. To make progress, in this section we restrict the state to be obtainable by short-time evolution with a local Hamiltonian from a product state.  For such states, we give a $\poly{N}$ sampling algorithm under certain conditions. This section can be read independently of the rest of the paper, which will subsequently apply the results of this section to develop an efficient LOCC metrology protocol in a general interaction graph $G$. 

We consider a time-dependent Hamiltonian \begin{equation}\label{eq:H=Ha}
    H(s) = \sum_a \lambda_a J_a(s) H_a,
\end{equation}
acting on qubits $\Lambda=\{1,2,\cdots,N\}$,
where each $H_a$ is a local Pauli string with a time-dependent coefficient $\lambda_a J_a(s)$ satisfying $|J_a(s)|\le \tilde{J}$, and $|\lambda_a|\le 1$. \footnote{Note that we use $\tilde{J}$ to differ from the local strength $J$ defined in \eqref{eq:J=}, although they can be related by a constant determined by geometry.}
We further require the whole Hamiltonian has $F$ Fourier components $\{\omega_\nu\}$, i.e., \begin{equation}\label{eq:Jas=}
    J_a(s) = \sum_{\nu=1}^F \ee^{-\ii \omega_\nu s} J_{a,\nu},
\end{equation}
where $\omega_\nu$ does not depend on $a$. We say two Pauli strings $H_a$ and $H_b$ ($a\neq b$) overlap if there is a qubit on which they both act nontrivially. Let $\fkd$ be the maximal number of $H_b$s that an $H_a$ overlaps with, and $k$ be the maximal number of qubits an $H_a$ acts on (i.e., $H$ is $k$-local). 

The system starts from initial product state $\ket{0\cdots 0}$, evolves under $H(s)$ for time $t$, and is then measured by LOCC in local basis $E_{\bm{x}}$ introduced in \eqref{eq:Ex_POVM}. This process generates a binary string $\bm{x}:=(x_1,\cdots, x_N) \in \{0,1\}^N$ with probability distribution \begin{equation}
    p_{\bm x} = \braket{\phi|E_{\bm{x}}|\phi}, \where \ket{\phi}=\mathcal{T} \ee^{-\ii \int^t_0 \dd s H(s)} \ket{0\cdots 0},
\end{equation}
where $\mathcal{T}$ means time-ordering.
We will show that, under certain conditions, $p_{\bm{x}}$ is easy to sample on a classical computer \red{if $\tilde{J}t$ is smaller than a constant}. In particular, the frequencies $\omega_\nu$s can be arbitrarily large comparing to $\tilde{J}$. We focus on qubit systems, but the results generalize naturally to higher-dimensional local degree-of-freedom, i.e., qudits. 

\red{In Section \ref{sec:intera_pic}, we show how the results about the time-dependent case \eqref{eq:H=Ha} can be modified to the time-independent case \eqref{eq:Hw} with an extra local field $\omega Z$, which we no longer assume needs to be small.}
    
\subsection{Assumption on the local measurement basis}
Our rigorous results will hold for the following situation.
Suppose the local measurement basis always have a finite overlap with the local initial state $\ket{0}$:  
\begin{equation}\label{eq:overlap>cm}
    \abs{ \alr{0| e_{x_1\cdots x_n}} } \ge c_{\rm m},\quad \forall \text{ substring } x_1\cdots x_n,
\end{equation}
for some constant $0<c_{\rm m} \le 1/\sqrt{2} $ (subscript $\rm m$ means measurement). The upper bound $1/\sqrt{2}$ has to be there in order to satisfy \eqref{eq:overlap>cm} for $ \ket{e_{x_1\cdots x_{n-1}0}}$ and $\ket{e_{x_1\cdots x_{n-1}1}}$ that comprise a local orthonormal basis. Due to the same reason, \eqref{eq:overlap>cm} also implies \begin{equation}\label{eq:1overlap>cm}
    \abs{ \alr{1| e_{x_1\cdots x_n}} }^2 = 1-\abs{ \alr{0| e_{x_1\cdots x_n} }}^2 = \abs{ \alr{0| e_{x_1\cdots x_{n-1} (1-x_n)} }}^2 \ge c_{\rm m}^2.
\end{equation}
Therefore condition \eqref{eq:overlap>cm} merely says the measurement basis is far from the computational basis (that contains the initial state). This is satisfied by, for example, measurement in the $x$-basis.

Intuitively, to do useful quantum information processing in short time, one typically wants to measure in a way like \eqref{eq:overlap>cm}, instead of in the computational basis $\{\ket{0}, \ket{1}\}$. The reason is that after short time evolution, the qubit is still close to its initial state $\ket{0}$, so the computational-basis measurement would just output a boring $0$ with high probability. Indeed, we will see in the next section that for robust metrology, one always measures in a basis satisfying \eqref{eq:overlap>cm}, like the ideal case with $x$-basis measurement.

%We note that results in this section can be generalized to arbitrary local fields $H=V+\sum_{j=1}^N \omega_j S_j$ ($S_j$ is a spin operator pointing in an arbitrary $j$-dependent direction), where the assumption \eqref{eq:overlap>cm} should be replaced by $\abs{ \bra{0}\ee^{\ii \omega_n t S_n}\ket{ e_{x_1\cdots x_n}} } > c_{\rm m}$. In other words, \eqref{eq:overlap>cm} really means the measurement basis cannot coincide with the final state if only the local fields are present in $H$.

\subsection{Polynomial-time classical sampling of short-time evolved product states}

\begin{thm}\label{thm:samp}
    For any LOCC measurement satisfying \eqref{eq:overlap>cm}, there is a constant time \begin{equation}\label{eq:tstar}
        t_*= \mlr{4\ee^2c_{\rm m}^{-2k}\fkd (\fkd+1)\tilde{J}}^{-1},
    \end{equation}
    such that for all $t<t_*$, the measurement string $\bm{x}$ can be classically sampled in $\poly{N}$ runtime, where the output probability distribution $p^{\rm cl}_{\bm{x}}$ ($\rm cl$ means ``classical'') is close to the true one $p_{\bm{x}}$ in total variation distance: \begin{equation}\label{eq:Pcl-P<}
        \norm{p^{\rm cl}-p}_1:= \sum_{\bm{x}} \abs{p^{\rm cl}_{\bm{x}} - p_{\bm{x}} } \le 1/\poly{N}.
    \end{equation}
\end{thm}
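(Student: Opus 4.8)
\emph{Proof proposal.} The plan is to sample the string $\bm{x}$ bit by bit using the chain rule $p_{\bm{x}} = \prod_{n=0}^{N-1} p_{x_{n+1} \mid x_1 \cdots x_n}$, generating $x_{n+1}$ from a Bernoulli variable with conditional probability $p_{x_{n+1}\mid x_1\cdots x_n}$. This conditional is exactly the expectation $\braket{e_{x_1\cdots x_{n+1}}}$ of the rank-one projector in the \emph{post-measurement} reduced state on qubit $n+1$, i.e. in $\rho_{\bm{x}_n} := \mathrm{tr}_{(n+1)^{\rm c}}\big[ P_n \ket{\phi}\bra{\phi} P_n \big] / \mathrm{tr}\big[ P_n \ket{\phi}\bra{\phi} P_n\big]$, where $P_n = e_{x_1}\otimes\cdots\otimes e_{x_1\cdots x_n}$ records the first $n$ outcomes. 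Crucially, along the single realized trajectory we only ever need $N$ such conditionals, never the exponentially many $E_{\bm{x}}$ in \eqref{eq:Ex_POVM}. Since the sampled distribution is a product of the computed Bernoulli conditionals, a standard hybrid (telescoping) argument bounds $\norm{p^{\rm cl}-p}_1$ by the \emph{sum} over $n$ of the per-step errors in $p_{x_{n+1}\mid x_1\cdots x_n}$; hence it suffices to compute each conditional to accuracy $1/\poly{N}$ with a few extra powers of $N$ to spare, which still yields \eqref{eq:Pcl-P<}.

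The heart of the argument is to show that $\rho_{\bm{x}_n}$ depends, up to error $1/\poly{N}$, only on the Hamiltonian terms and prior measurements supported within a ball $\mathcal{B}$ of radius $m = \Theta(\log N)$ around qubit $n+1$. First I would expand the time-ordered evolution $\ket{\phi} = \mathcal{T}\ee^{-\ii\int_0^t H(s)\dd s}\ket{0\cdots 0}$ in its Dyson series, so that both numerator and denominator of $\rho_{\bm{x}_n}$ become sums over finite multisets (clusters) of terms drawn from \eqref{eq:H=Ha}, with two terms declared adjacent when they overlap on a qubit or are tied together through a shared measurement projector $e_{x_1\cdots x_j}$. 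Organizing this as a \emph{cluster expansion} in the style of \cite{learn_highT21,Loschmidt_echo23}, the connected clusters not touching $\mathcal{B}$ factor out and cancel between the normalized numerator and denominator, leaving only clusters rooted at qubit $n+1$; truncating these at size $w_* = \Theta(\log N)$ confines the computation to a region of size $\order(\log N)$, with the truncation error controlled by the convergence tail below.

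To control convergence and pin down the constant $t_*$, I would bound the weight of a cluster of $w$ terms by three factors: (i) a factor $(\tilde{J}t)^w$ from the $w$ ordered time integrals, using $\abs{J_a(s)}\le \tilde{J}$ uniformly in $s$ — so the Fourier data \eqref{eq:Jas=} and in particular arbitrarily large frequencies $\omega_\nu$ are harmless, since we never integrate by parts and only use the amplitude bound; (ii) a factor $c_{\rm m}^{-2kw}$, i.e. $c_{\rm m}^{-2k}$ per term, from the at-most-$k$ measurement overlaps attached to each term, where the denominator is kept bounded below using precisely \eqref{eq:overlap>cm} and \eqref{eq:1overlap>cm} (this is what makes the ratio, rather than the exponentially small normalization, the well-behaved object, \emph{uniformly in} $n$); and (iii) a combinatorial factor for the number of size-$w$ clusters rooted at a fixed term, which by a Cayley/tree-counting bound is at most $\lr{\ee(\fkd+1)}^{w}$. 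Multiplying and summing over $w$ converges whenever $4\ee^2 c_{\rm m}^{-2k}\fkd(\fkd+1)\tilde{J}\,t<1$, i.e. for $t<t_*$ with $t_*$ as in \eqref{eq:tstar}, the $4\ee^2$ absorbing the rooted-cluster counting constants.

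Finally, for efficiency I would truncate the series at $w_* = \Theta(\log N)$, making the truncation error $(t/t_*)^{w_*} = 1/\poly{N}$ per conditional; enumerating all rooted clusters up to size $w_*$ costs $\lr{\ee(\fkd+1)}^{w_*} = \poly{N}$, and each cluster contributes an explicit small matrix/integral evaluated in $\order(1)$ time. Iterating over the $N$ qubits gives overall $\poly{N}$ runtime. The main obstacle I anticipate is \emph{organizing the ratio}: incorporating the sequence of measurement projectors into the cluster connectivity so that (a) the exponentially small normalization $\mathrm{tr}[P_n\ket{\phi}\bra{\phi}P_n]\sim c_{\rm m}^{2n}$ genuinely cancels against the numerator, and (b) the per-term penalty $c_{\rm m}^{-2k}$ does not compound into an $n$-dependent blow-up. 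Getting the bookkeeping to yield \emph{exactly} the stated $t_*$, and verifying that the cancellation of far-away clusters is exact rather than merely approximate, are the delicate points; the remaining estimates are routine cluster-expansion bounds.
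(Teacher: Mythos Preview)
Your approach is essentially the paper's: reduce sampling to computing the $N$ conditional marginals $p(x_{n+1}\mid\bm{x}_n)$, expand each as a convergent cluster series rooted at qubit $n+1$, and truncate at order $\Theta(\log N)$. The paper organizes the ratio via the device $p(0\mid\bm{x}_n) - |\langle 0|e_{\bm{x}_n 0}\rangle|^2 = \partial_\kappa \log\cZ\big|_{\kappa=0}$ for a suitable partition function $\cZ$ (a matrix element between the product operators $\keto{\bm{x}_n}$ and $\keto{\bm{0}}$), which makes the factorization of disconnected clusters and the cancellation of the exponentially small normalization rigorous; this is exactly the bookkeeping you correctly flag as the delicate point, and once set up the convergence bound and the value of $t_*$ come out as you sketch.

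There is, however, a genuine gap in your runtime argument. You assert that the Fourier data \eqref{eq:Jas=} are ``harmless, since we never integrate by parts and only use the amplitude bound'', and that ``each cluster contributes an explicit small matrix/integral evaluated in $\order(1)$ time''. The amplitude bound $|J_a(s)|\le\tilde{J}$ is indeed all that is needed for \emph{convergence} and for the value of $t_*$, but a cluster of size $m$ contributes a time-ordered integral $\int_{\mathbb{T}_m}\dd\bm{s}\, J_{a_{\sigma_1}}(s_1)\cdots J_{a_{\sigma_m}}(s_m)$ over the $m$-simplex, with $m=\Theta(\log N)$ at the truncation order --- this is \emph{not} an $\order(1)$ computation for generic time dependence. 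The paper uses the finite-Fourier assumption precisely here: expanding each $J_a(s)$ into its $F$ harmonics turns the simplex integral into a sum of at most $F^m$ terms, each an elementary exponential integral evaluable in closed form. This is why the runtime constant $c$ in \eqref{eq:c=logF} carries the factor $F$, and why the paper remarks that if each local term had its own independent frequency set the algorithm would degrade to quasi-polynomial. So the Fourier structure is irrelevant for $t_*$ but essential for the $\poly{N}$ runtime claim, and your proposal as written does not yet establish the latter.
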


As will be shown in the proof, it is crucial that the \emph{whole} Hamiltonian has a bounded number $F$ of frequencies; if each local term has its own set of $F$ frequencies that do not agree with those of other terms, one can adjust the proof to achieve a quasi-polynomial $\order(N^{\log\log N})$ runtime algorithm. 

\begin{proof}
The proof closely follows \cite{our_highT}, where we showed the imaginary-time version of Theorem \ref{thm:samp}, namely sampling high-temperature thermal states is easy (see also \cite{zero_free22}). 

As the starting point, sampling the whole string $\bm{x}$ is reduced to the problem of computing conditional probabilities (marginals) $p(x_{n+1}|\bm{x}_n)$ for measurement outcome $x_{n+1}$ to appear after a previous substring $\bm{x}_n:=(x_1,\cdots,x_n)$. More precisely, the sampling algorithm works as follows: First choose $x_1\in \{0,1\}$ from probability $p(x_1)$, then choose $x_2$ based on marginal $p(x_2|x_1)$, then choose $x_3$ based on $p(x_3|x_1x_2)$, and so on. Lemma 4 in \cite{our_highT} guarantees that to achieve \eqref{eq:Pcl-P<} with the right hand side being $N^{1-\alpha}$, it suffices to get an approximation $p^{\rm cl}(x_{n+1}=0|\bm{x}_n)$ with error ($p^{\rm cl}(1|\bm{x}_n)$ will be just $1-p^{\rm cl}(0|\bm{x}_n)$) \begin{equation}\label{eq:marg-marg<}
    \abs{p^{\rm cl}(0|\bm{x}_n) - p(0|\bm{x}_n)} \le N^{-\alpha}, \where \alpha>1.
\end{equation}

For each marginal, we invoke the following Lemma that we prove later.
\begin{lem}\label{lem:clust}
    If $t<t_*$ in \eqref{eq:tstar} and \eqref{eq:overlap>cm} holds, the marginal is given by
    \begin{equation}\label{eq:p=expan}
        p(0|\bm{x}_n) = \abs{\braket{0| e_{x_1\cdots x_n0}} }^2+ \sum_{m=1}^\infty \gamma_m t^m,
    \end{equation}
    where $\gamma_m$ is bounded: \begin{equation}\label{eq:gamma<beta}
        \abs{\gamma_m} \le m t_*^{-m}.
    \end{equation}
    This implies the series \eqref{eq:p=expan} converges absolutely. Moreover, $\gamma_m$ can be computed in time %$\order\mlr{(c\fkd)^m}$ 
    $\order\mlr{\exp(cm)}$ where \begin{equation} \label{eq:c=logF}
        c=\log\lr{ \fkd 2^{2k+1} F},
    \end{equation}
    is a constant; recall the definitions of $\fkd$ and $k$ below \eqref{eq:Jas=}.
\end{lem}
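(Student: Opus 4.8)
The plan is to prove all three claims — the convergent expansion \eqref{eq:p=expan}, the coefficient bound \eqref{eq:gamma<beta}, and the runtime $\order[\exp(cm)]$ — by adapting the abstract cluster (polymer) expansion of \cite{our_highT} from imaginary to real time. First I would write the marginal as the ratio
\[
    p(0|\bm{x}_n) = \frac{\bra{0^N} U^\dagger\, Q_{n+1}\, U \ket{0^N}}{\bra{0^N} U^\dagger\, Q_{n}\, U \ket{0^N}},\qquad U = \mathcal{T}\ee^{-\ii\int_0^t H(s)\dd s},
\]
where $Q_j$ is the product of the single-qubit projectors $e_{x_1\cdots x_i}$ on qubits $1,\dots,j$, and $Q_{n+1}$ carries the extra projector onto outcome $0$ on qubit $n+1$. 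Setting $t=0$ gives $U=1$, and the ratio collapses to $\abs{\braket{0|e_{x_1\cdots x_n0}}}^2$, which is the stated leading term; since the finite-dimensional matrix elements are analytic in $t$ and the denominator is bounded below, the ratio admits the power series in \eqref{eq:p=expan}.

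The core step is the cluster expansion of numerator and denominator. I would expand both time-ordered exponentials in their Dyson series, so each contributes a sum over ordered tuples of terms $H_{a_1},\dots,H_{a_p}$ carrying frequency labels $\nu_1,\dots,\nu_p$ from \eqref{eq:Jas=} and a simplex integral $\int_{0<s_1<\cdots<s_p<t}\prod_i \ee^{-\ii\omega_{\nu_i}s_i}\,\dd\bm{s}$. Because $\ket{0^N}$ and every $Q_j$ are products of single-qubit objects, the matrix element factorizes over the connected components of the tuple, where connectivity is the ``overlap'' relation of \eqref{eq:H=Ha} enriched by the projector supports. Taking $\log$ of numerator and denominator and subtracting cancels all disconnected bulk components common to both, leaving $\log p(0|\bm{x}_n)-\log\abs{\braket{0|e_{x_1\cdots x_n0}}}^2$ as a sum over clusters \emph{anchored on qubit $n+1$}; this is precisely what makes the answer $N$-independent. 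Exponentiating and collecting powers of $t$ then lets me read off $\gamma_m$ as the total weight of clusters whose terms sum to size $m$, establishing \eqref{eq:p=expan}.

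For the bound \eqref{eq:gamma<beta} I would apply a Kotecký--Preiss-type convergence criterion to this polymer model, with two inputs to the per-term weight. First, each simplex integral has magnitude at most $(\tilde{J}t)^{p}/p!$ using only $\abs{J_a(s)}\le\tilde{J}$ — crucially independent of the frequencies, which is why $t_*$ in \eqref{eq:tstar} is frequency-free. Second, each term touching an already-measured qubit forces a division by a projector overlap $\abs{\braket{0|e_{\cdots}}}^2\ge c_{\rm m}^2$; since a term acts on at most $k$ qubits this costs at most $c_{\rm m}^{-2k}$ per term. Combining these weights with the standard $(\fkd+1)^m$-type count of connected clusters of size $m$ out of a fixed degree-$\fkd$ term yields both the radius $t_*=[4\ee^2 c_{\rm m}^{-2k}\fkd(\fkd+1)\tilde{J}]^{-1}$ and $\abs{\gamma_m}\le m t_*^{-m}$. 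For computability, I enumerate the connected clusters of $m$ terms anchored at qubit $n+1$ — at most $(\fkd\,2^{2k+1})^m$ once the forward/backward branch and Pauli content on the $\le k$ involved qubits are fixed — assign one of $F$ frequencies to each term ($F^m$ choices), and evaluate each simplex integral in closed form; this gives runtime $\order[\exp(cm)]$ with $c=\log(\fkd\,2^{2k+1}F)$, matching \eqref{eq:c=logF}.

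The main obstacle I expect is controlling the division by the normalization. The denominator $p(\bm{x}_n)$ is itself exponentially small in $n$ (as small as $c_{\rm m}^{2n}$), so a naive term-by-term bound on the ratio would blow up; the argument works only because the connected-cluster structure of the difference of logarithms makes this common exponentially-small background cancel \emph{exactly} between numerator and denominator, leaving a weight penalized by $c_{\rm m}^{-2k}$ per term rather than globally. Getting this cancellation clean — while simultaneously verifying that the oscillatory frequency factors can never enlarge the magnitude bound, so that $t_*$ truly does not depend on the $\omega_\nu$ — is the delicate part, and is exactly the place where the imaginary-time construction of \cite{our_highT} must be adapted with care.
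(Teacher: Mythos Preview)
Your proposal is correct and follows essentially the same route as the paper: adapt the cluster expansion of \cite{our_highT} to real time, exploit the product structure of the initial state and the measurement projectors so that each cluster contribution localizes to $\mathrm{Supp}(\V)$ (yielding the $c_{\rm m}^{-2k}$-per-term penalty rather than a global $c_{\rm m}^{-2n}$ one), and pay the $F^m$ frequency overhead only in the runtime, not in the convergence radius. The paper packages the computation slightly differently --- working in the Liouvillian/operator formalism and writing $p(0|\bm{x}_n) - \abs{\braket{0|e_{x_1\cdots x_n0}}}^2 = \partial_\kappa\log\cZ\,|_{\kappa=0}$ for a single ``partition function'' $\cZ$ with an auxiliary source $\kappa$ at qubit $n+1$, so the additive expansion \eqref{eq:p=expan} comes out directly without your exponentiate-and-recollect step --- but the substance is the same.
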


This Lemma implies a polynomial algorithm for computing $p^{\rm cl}(0|\bm{x}_n)$ in \eqref{eq:marg-marg<}, by truncating the sum to $m\le M=\eta \log N$, where $\eta > \alpha / \log(t_*/t)$.  Combined with the sampling-to-computing reduction above, we get the desired sampling algorithm, with runtime $\order(N^{1+c\eta})$ to output one string $\bm{x}$.
\end{proof}

\subsection{Proof of Lemma \ref{lem:clust} using cluster expansion, \red{and its corollary}}
We need an operator formalism for the proof.\footnote{We refer to Section 2.1 of \cite{review_us23} for a comprehensive review, although some notations are slightly different from here.} Namely, each operator $O$ is viewed as a vector $\keto{O}$ in an ``operator Hilbert space". For example, the initial density matrix \begin{equation}\label{eq:bm0}
    \keto{\bm{0}} := \ket{0\cdots 0} \bra{0\cdots 0} = \bigotimes_{j=1}^N \ket{0}_j\bra{0},
\end{equation}
is a direct-product vector in the operator space. Define the inner product between operators \begin{equation}\label{eq:O_inner}
    \lr{O|O'}:=\tr{O^\dagger O'},
\end{equation}
which induces the operator $2$-norm \begin{equation}
    \norm{\keto{O}}_2 := \sqrt{\lr{O|O}}.
\end{equation}
We rewrite the (forward-in-time) Heisenberg evolution as \begin{equation}
    \frac{\dd}{\dd s} \keto{O(s)} = \cL(s) \keto{O(s)}, \where \cL(s):= -\ii [H(s),\cdot].
\end{equation}
One can verify that the Liouvillian $\cL(s)$ is anti-Hermitian in terms of inner product \eqref{eq:O_inner}.
As a result, the final density matrix $\ket{\phi}\bra{\phi}$, as an operator $\keto{\phi}$, is \begin{equation}\label{eq:final_oper}
    \keto{\phi} = \mathcal{T} \ee^{\int^t_0 \dd s \cL(s)} \keto{\bm{0}},
\end{equation}
where $\ii \cL(s)$ plays the role of a Hermitian Hamiltonian.

\begin{proof}[Proof of Lemma \ref{lem:clust}]
Using \eqref{eq:final_oper}, the conditional probability satisfies
\begin{equation}\label{eq:p=partial}
    p(0|\bm{x}_n)-\abs{\braket{0| e_{x_1\cdots x_n0}} }^2= \frac{ \brao{\bm{x}_n 0} \mathcal{T} \ee^{\int^t_0 \dd s \cL(s)} \keto{\bm{0}} }{ \brao{\bm{x}_n} \mathcal{T} \ee^{\int^t_0 \dd s \cL(s)} \keto{\bm{0}} } -\abs{\braket{0| e_{x_1\cdots x_n0}} }^2= \partial_\kappa \log \cZ |_{\kappa=0},
\end{equation}
where the ``partition function'' 
\begin{equation}\label{eq:Z=}
    \cZ := \frac{\brao{\bm{x}_n}\ee^{\kappa \mathcal{E}_{n+1} } \mathcal{T} \ee^{\int^t_0 \dd s \cL(s)} \keto{\bm{0}}}{ \lr{\bm{x}_n|\bm{0}}}.
\end{equation}
Note that we extract the term $\abs{\braket{0| e_{x_1\cdots x_n0}} }^2$ in \eqref{eq:p=partial} such that the right hand side vanishes for $\beta=0$, and will become the $m\ge 1$ expansion in \eqref{eq:p=expan}.
In \eqref{eq:Z=}, $\mathcal{E}_{n+1}$ is the superoperator that multiplies $e'_{n+1}$ to the left of an operator -- $\mathcal{E}_{n+1}|O) = |e'_{n+1} O)$ --  where $e'_{n+1}:= e_{x_1\cdots x_n 0}-\abs{\braket{0| e_{x_1\cdots x_n0}} }^2$ acts on qubit $n+1$,
and \begin{equation}
    \keto{\bm{x}_n} := \bigotimes_{j=1}^n e_{x_1\cdots x_j} \bigotimes_{j=n+1}^N I_j, \label{eq:keto_xn}
\end{equation}
is a direct-product operator.

Thanks to our assumption \eqref{eq:overlap>cm}, the denominator in \eqref{eq:Z=} is nonzero. It does not matter that the denominator is exponentially small in $n$, as we will see.

Now compare to Lemma 5 in \cite{our_highT}. The conclusions of the lemmas are almost the same, where the value of $t_*$ \eqref{eq:tstar} has an extra factor $\lr{2c_{\rm m}^{-2k}\tilde{J}}^{-1}$ comparing to $\beta_*$ in \cite{our_highT}, and an extra $F$ appears in \eqref{eq:c=logF} for the runtime. On the other hand, the starting point,
\eqref{eq:p=partial} and \eqref{eq:Z=}, are in a similar form to Eqs. (A2) and (A3) in \cite{our_highT}. The differences are (\emph{1}) \eqref{eq:Z=} is real-time evolution viewing $\ii \cL$ as the Hamiltonian, not imaginary-time $\ee^{-\beta H}$; (\emph{2}) \eqref{eq:Z=} is a ratio of matrix elements (in the operator space), not a trace; (\emph{3}) evolution in \eqref{eq:Z=} is time-dependent. We show that these differences can be incorporated into the proof of Lemma 5 in \cite{our_highT}, which leads exactly to the extra factors in Lemma \ref{lem:clust}. We encourage the reader to get familiar with the proof in \cite{our_highT} first, which relies on the technique of cluster expansion developed in \cite{learn_highT21, Loschmidt_echo23}.

Difference (\emph{1}) is straightforward to tackle, because the proof in \cite{our_highT} does not rely on the fact that the inverse-temperature $\beta$ is real: it can be replaced by $\ii t$ here. Below we show how to deal with (\emph{2}) and (\emph{3}).

The key quantity to be bounded is the cluster derivative of $\cZ$, the simplest case $\kappa=0$ being \begin{equation}\label{eq:DVZ}
    \lr{\mathcal{D}_{\V} \cZ}_0 = \sum_{\sigma\in \mathsf{S}_m } \int_{\mathbb{T}_m} \dd \bm{s} J_{a_{\sigma_1}}(s_1)\cdots J_{a_{\sigma_m}}(s_m) \frac{\brao{\bm{x}_n} \cL_{a_{\sigma_1}}\cdots\cL_{a_{\sigma_m}} \keto{\bm{0}}}{ \lr{\bm{x}_n|\bm{0}}}
\end{equation}
where $\V=\{a_1,\cdots,a_m\}$ is a \emph{connected} cluster of size $m$, $\mathsf{S}_m$ is the $m$-th permutation group, and $\cL_a := -\ii [H_a, \cdot]$. Due to time-ordering, the integral in \eqref{eq:DVZ} \begin{equation}
    \int_{\mathbb{T}_m} \dd \bm{s}:= \int^t_0 \dd s_1 \cdots \int^{s_{m-1}}_0 \dd s_m,
\end{equation}
is on a simplex $\mathbb{T}_m$, whose volume is $\mathrm{Vol}(\mathbb{T}_m)=t^m/m!$. Since the states $\keto{\bm{0}}, \keto{\bm{x}_n}$ are both product states, the matrix element in \eqref{eq:DVZ} can be restricted in the support region $\mathrm{Supp}(\V)\subset \Lambda$ of $\V$: the numerator and denominator factorize (e.g. $\lr{\bm{x}_n | \bm{0} } = \lr{\bm{x}_n | \bm{0} }_{\mathrm{Supp}(\V)}\cdot \lr{\bm{x}_n | \bm{0} }_{\Lambda \backslash \mathrm{Supp}(\V)}$) so that contributions outside the region $\mathrm{Supp}(\V)$ cancel. We then bound the restricted numerator and denominator as follows. The numerator is bounded by Cauchy-Schwarz inequality: \begin{equation}
    \brao{\bm{x}_n} \cL_{a_{\sigma_1}}\cdots\cL_{a_{\sigma_m}} \keto{\bm{0}}_{\mathrm{Supp}(\V)} \le \sqrt{\lr{\bm{x}_n| \bm{x}_n}_{\mathrm{Supp}(\V)}} \lr{\max_a \norm{\cL_a} }^m \sqrt{\lr{\bm{0}| \bm{0}}_{\mathrm{Supp}(\V)}} = 2^{\frac{1}{2}\lr{\abs{\mathrm{Supp}(\V)}-n_{\V}} }\cdot 2^m,
\end{equation}
where we used the superoperator norm bounded by H\"{o}lder inequality $\norm{AB}_2\le \norm{A} \norm{B}_2$: \begin{equation}\label{eq:L<2}
    \norm{\cL_a}:= \max_{\norm{\keto{O}}_2=1} \norm{\cL_a\keto{O}}_2 = \max_{\norm{\keto{O}}_2=1} \norm{\keto{H_aO-OH_a}}_2 \le \max_{\norm{\keto{O}}_2=1} 2 \norm{H_a} \norm{\keto{O}}_2 = 2\norm{H_a} = 2,
\end{equation} 
and $n_{\V}$ denoting the number of measured qubits in $\mathrm{Supp}(\V)$. The denominator is bounded by \eqref{eq:overlap>cm}: \begin{equation}
    \lr{\bm{x}_n | \bm{0} }_{\mathrm{Supp}(\V)} \ge c_{\rm m}^{2n_{\V}}.
\end{equation}
Putting ingredients together,
\eqref{eq:DVZ} is bounded by \begin{align}\label{eq:DVZ<}
    \abs{\lr{\mathcal{D}_\V \cZ}_0} &\le m! \mathrm{Vol}(\mathbb{T}_m) \tilde{J}^m 2^{m+\frac{1}{2}\lr{\abs{\mathrm{Supp}(\V)}-n_{\V}} } c_{\rm m}^{-2n_{\V}}\nonumber\\
    &\le \lr{t\tilde{J}}^m 2^m c_{\rm m}^{-2\abs{\mathrm{Supp}(\V)}} \le \lr{2t\tilde{J}}^m c_{\rm m}^{-2km} =\lr{2c_{\rm m}^{-2k}t\tilde{J}}^m.
\end{align}
Here in the second line, we maximize the expression at the largest $n_{\V}=\abs{\mathrm{Supp}(\V)}\le km$ due to $c_{\rm m}\le 1/\sqrt{2}$. Comparing \eqref{eq:DVZ<} to the bound $\beta^m$ in Eq. (A14) of \cite{our_highT}, we see difference (\emph{2}) just invokes the advertised extra factor $2c_{\rm m}^{-2k}\tilde{J}$. Here $2$ comes from taking commutator in \eqref{eq:L<2}, $c_{\rm m}^{-2k}$ comes from the denominator in \eqref{eq:DVZ}, and $\tilde{J}$ is just fixing the energy unit. \eqref{eq:gamma<beta} then follows from the proof in \cite{our_highT}.

To bound the runtime, observe that difference (\emph{3}) only modifies the first step to calculate a given cluster derivative $\mathcal{D}_\V \log \cZ$. In \cite{our_highT}, the first step is to compute $\mathsf{Z}^{\mathcal{K}}$ (originating from the $\mathcal{K}$-th order $\beta$-expansion of $\ee^{-\beta H}$), where $\mathsf{Z}:=\sum_{a\in \V} z_a H_a$ with $z_a$ being some integers, and $\mathcal{K}\le m$ is a positive integer. Here, due to time dependence, the first step is to compute \begin{equation}
    \mathcal{Q}:=\int_{\mathbb{T}_{\mathcal{K}}} \dd \bm{s}\, \mathsf{Z}(s_1)\cdots \mathsf{Z}(s_{\mathcal{K}}),
\end{equation}
where $\mathsf{Z}(s):=\sum_{a\in \V} z_a J_a(s) H_a$. Thanks to \eqref{eq:Jas=}, $\mathsf{Z}(s)$ only has $F$ frequencies: \begin{equation}
    \mathsf{Z}(s) = \sum_{\nu=1}^F \ee^{-\ii \omega_\nu s} \mathsf{Z}_\nu,
\end{equation}
so that \begin{equation}
    \mathcal{Q} = \sum_{\nu_1 = 1}^F \cdots \sum_{\nu_{\mathcal{K}} = 1}^F \lr{ \int_{\mathbb{T}_{\mathcal{K}}} \dd \bm{s} \ee^{-\ii\sum_{\eta=1}^{\mathcal{K}} \omega_{\nu_\eta} s_\eta} } \mathsf{Z}_{\nu_1}\cdots \mathsf{Z}_{\nu_{\mathcal{K}}}.
\end{equation}
Since the integral inside the bracket can be done analytically, the overhead incurred by time-dependence is merely that instead of a single product of matrices $\mathsf{Z}^{\mathcal{K}}$, one needs to compute $F^{\mathcal{K}}\le F^m$ such products.\footnote{If each coupling $H_a$ has its own set of $F$ frequencies, $\mathsf{Z}(s)$ will have $Fm$ frequencies and $(Fm)^m$ products need to be evaluated. For the final algorithm, this leads to quasi-polynomial overhead $N^{\log \log N}$ because $m=\Theta(\log N)$.} Each product $\mathsf{Z}_{\nu_1}\cdots \mathsf{Z}_{\nu_{\mathcal{K}}}$ is no harder to compute than $\mathsf{Z}^{\mathcal{K}}$: one simply multiplies the $\mathcal{K}$ sparse matrices one by one. \eqref{eq:c=logF} then follows from \cite{our_highT} where one adds an extra factor $F^m$ to the runtime. This finishes the proof of Lemma \ref{lem:clust}.
\end{proof}

\red{
In the next Section, we will also consider the following quantity
\begin{equation}\label{eq:tp=}
    \widetilde{p}(0|\bm{x}_n)
:=\frac{ \brao{\bm{x}_n 0} \mathcal{T} \ee^{\int^t_0 \dd s \cL(s)} \keto{\bm{1,0}} }{ \brao{\bm{x}_n} \mathcal{T} \ee^{\int^t_0 \dd s \cL(s)} \keto{\bm{1,0}} },
\end{equation}
similar to the form of $p(0|\bm{x}_n)$ in \eqref{eq:p=partial}, where \begin{equation}\label{eq:bm10}
    \keto{\bm{1,0}} := \ket{1\cdots 1}\bra{0\cdots 0}.
\end{equation}
Although the operator $\keto{\bm{1,0}}$ is not a quantum state so that $\widetilde{p}(0|\bm{x}_n)$ is not interpreted as a marginal probability, Lemma \ref{lem:clust} generalizes to yield a cluster expansion for $\widetilde{p}(0|\bm{x}_n)$ because of the following: In the above proof, we only need two properties of the initial operator \eqref{eq:bm0}: its direct-product structure, and its nonvanishing local overlap with the previous measurement basis $\keto{\bm{x}_n}$. These properties also hold for $\keto{\bm{1,0}}$, because $\abs{\alr{0| e_{x_1\cdots x_n}} \alr{e_{x_1\cdots x_n}|1}}\ge c_{\rm m}^2$ from \eqref{eq:overlap>cm} and \eqref{eq:1overlap>cm}. As a result, we have the following corollary of Lemma \ref{lem:clust}:
\begin{cor}\label{cor:clust}
    If $t<t_*$ in \eqref{eq:tstar} and \eqref{eq:overlap>cm} holds, 
    \begin{equation}\label{eq:tp=expan}
        \widetilde{p}(0|\bm{x}_n) = \braket{e_{x_1\cdots x_n0}|1} \braket{0| e_{x_1\cdots x_n0}} + \sum_{m=1}^\infty \widetilde{\gamma}_m t^m,
    \end{equation}
    where $\widetilde{\gamma}_m$ has the same bound \eqref{eq:gamma<beta} for $\gamma_m$.
    This implies the series \eqref{eq:tp=expan} converges absolutely. Moreover, $\widetilde{\gamma}_m$ can be computed in the same runtime %$\order\mlr{(c\fkd)^m}$ 
    $\order\mlr{\exp(cm)}$ as $\gamma_m$ where $c$ is defined in \eqref{eq:c=logF}.
\end{cor}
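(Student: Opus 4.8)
The plan is to re-run the proof of Lemma~\ref{lem:clust} essentially verbatim, replacing the reference operator $\keto{\bm{0}}$ of \eqref{eq:bm0} by $\keto{\bm{1,0}}$ everywhere. First I would recast $\widetilde{p}(0|\bm{x}_n)$ in the form of \eqref{eq:p=partial}, writing $\widetilde{p}(0|\bm{x}_n) - \braket{e_{x_1\cdots x_n0}|1}\braket{0|e_{x_1\cdots x_n0}} = \partial_\kappa\log\widetilde{\cZ}|_{\kappa=0}$, where $\widetilde{\cZ}$ is obtained from \eqref{eq:Z=} by the same substitution $\keto{\bm{0}}\to\keto{\bm{1,0}}$ and the superoperator $\mathcal{E}_{n+1}$ is adjusted to subtract the corresponding $m=0$ piece $\braket{e_{x_1\cdots x_n0}|1}\braket{0|e_{x_1\cdots x_n0}}$, so that its $\kappa$-derivative reproduces exactly the $m\ge1$ tail in \eqref{eq:tp=expan}. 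The remaining work is then to bound the cluster derivatives $\lr{\mathcal{D}_\V\widetilde{\cZ}}_0$ of the form \eqref{eq:DVZ} and their evaluation cost, mirroring \eqref{eq:DVZ<}.

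The engine is the observation that the derivation of \eqref{eq:DVZ<} uses only two features of $\keto{\bm{0}}$, both shared by $\keto{\bm{1,0}}=\bigotimes_j\ket{1}_j\bra{0}$. The first is its direct-product structure: it lets the matrix element in \eqref{eq:DVZ} factorize over $\mathrm{Supp}(\V)$ and its complement, and it leaves the operator $2$-norm unchanged, $\norm{\keto{\bm{1,0}}}_2=1$. Hence the Cauchy--Schwarz bound on the numerator carries over unchanged, still picking up the factor $2^m$ from $\norm{\cL_a}\le2$ in \eqref{eq:L<2}. The second is a nonvanishing local overlap with $\keto{\bm{x}_n}$ on each measured qubit: by \eqref{eq:overlap>cm} and \eqref{eq:1overlap>cm}, $\abs{\alr{0|e_{x_1\cdots x_j}}\alr{e_{x_1\cdots x_j}|1}}\ge c_{\rm m}^2$, which replaces the $\abs{\alr{0|e_{x_1\cdots x_j}}}^2\ge c_{\rm m}^2$ used before and gives the identical denominator lower bound $c_{\rm m}^{2n_\V}$. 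Therefore \eqref{eq:DVZ<} holds with the same right-hand side, producing the same $t_*$ of \eqref{eq:tstar} and the same coefficient bound $\abs{\widetilde{\gamma}_m}\le m t_*^{-m}$ as in \eqref{eq:gamma<beta}.

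For the runtime I would simply note that the three differences from \cite{our_highT} already handled in the proof of Lemma~\ref{lem:clust} are insensitive to the substitution $\keto{\bm{0}}\to\keto{\bm{1,0}}$: differences (1) (real versus imaginary time) and (3) (time dependence, absorbed by the $F^m$ Fourier bookkeeping) do not refer to the product operator on the right at all, while difference (2) is exactly the measured-qubit overlap just discussed. Consequently each $\widetilde{\gamma}_m$ is computable in the same $\order\mlr{\exp(cm)}$ time with the same $c$ of \eqref{eq:c=logF}, and truncating the series at $m=\order(\log N)$ yields the corollary.

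The step I expect to be delicate is the one genuinely new feature: $\keto{\bm{1,0}}$ is not positive, and unlike $\keto{\bm{0}}$ its overlap with the identity on an unmeasured qubit vanishes, $\mathrm{tr}\lr{\ket{1}\bra{0}}=0$. Thus the bare reference $\lr{\bm{x}_n|\bm{1,0}}$ that normalizes \eqref{eq:Z=} is zero and cannot be reused directly. The fix is to normalize $\widetilde{\cZ}$ instead by the nonvanishing product of measured-qubit overlaps $\prod_{j\le n}\alr{0|e_{x_1\cdots x_j}}\alr{e_{x_1\cdots x_j}|1}$, which is legitimate because any overall constant cancels in $\partial_\kappa\log\widetilde{\cZ}$; one must then check that the cluster expansion of $\log\widetilde{\cZ}$ and the support-factorization leading to \eqref{eq:DVZ<} still go through with this reference, the trivial factors on unmeasured qubits cancelling between the numerator and denominator of the ratio. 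Once this bookkeeping is arranged, every bound above coincides with those of Lemma~\ref{lem:clust} and the corollary follows.
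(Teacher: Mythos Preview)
Your approach matches the paper's: the paragraph immediately preceding Corollary~\ref{cor:clust} argues exactly as you do, citing the direct-product structure of $\keto{\bm{1,0}}$ and the local overlap bound $\abs{\alr{0|e_{x_1\cdots x_j}}\alr{e_{x_1\cdots x_j}|1}}\ge c_{\rm m}^2$ from \eqref{eq:overlap>cm} and \eqref{eq:1overlap>cm} as the only two ingredients needed to rerun the proof of Lemma~\ref{lem:clust}. You are right to flag the vanishing of the unmeasured-qubit factors in $\lr{\bm{x}_n|\bm{1,0}}$ as a bookkeeping subtlety the paper leaves implicit, and your proposed fix (normalizing only by the measured-qubit overlaps, since the zero factors cancel formally between numerator and denominator of each cluster derivative once restricted to $\mathrm{Supp}(\V)$) is the natural resolution.
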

}

\subsection{Measuring short-time evolved states cannot generate long-range correlation}

Similar to Theorem 3 in \cite{our_highT}, a straightforward modification of the proof of Lemma \ref{lem:clust} leads to the following: 
\begin{thm}\label{thm:nocorr}
At short time $t<t_*$ \eqref{eq:tstar}, after projecting $n$ qubits onto local basis using projector $E_{\bm{x}_n}:=\keto{\bm{x}_n}$ \eqref{eq:keto_xn} that satisfies \eqref{eq:overlap>cm}, two unmeasured qubits $i,j$ have correlation exponentially small in their distance $\mathsf{d}(i,j)$
\begin{equation}\label{eq:cor<}
    \mathrm{Cor}(i,j):=\max_{O_i,O_j} \frac{ \bra{\phi}O_iO_j E_{\bm{x}_n} \ket{\phi} }{ \bra{\phi}E_{\bm{x}_n} \ket{\phi}}- \frac{ \bra{\phi}O_i E_{\bm{x}_n} \ket{\phi} }{ \bra{\phi}E_{\bm{x}_n} \ket{\phi}} \frac{ \bra{\phi}O_j E_{\bm{x}_n} \ket{\phi} }{ \bra{\phi}E_{\bm{x}_n} \ket{\phi}}\le c'_{\rm cor} \lr{\frac{t}{t_*}}^{c_{\rm cor}\mathsf{d}(i,j)},
\end{equation}
for some constants $c_{\rm cor},c'_{\rm cor}>0$,
where $O_i,O_j$ act on $i$ and $j$ respectively with $\norm{O_i},\norm{O_j}\le 1$.
\end{thm}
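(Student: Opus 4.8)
The plan is to reduce the bound on the post-measurement correlation function $\mathrm{Cor}(i,j)$ to a cluster expansion almost identical to the one used for the marginals in Lemma~\ref{lem:clust}, and then invoke the same combinatorial control on connected clusters. First I would observe that each of the three ratios appearing in \eqref{eq:cor<} has exactly the structure of \eqref{eq:p=partial}--\eqref{eq:Z=}: a matrix element $\brao{\bm{x}_n}\mathcal{T}\ee^{\int_0^t \dd s\,\cL(s)}\keto{\bm{0}}$ with some local insertion, divided by the same quantity without the insertion, with the denominator guaranteed nonzero by \eqref{eq:overlap>cm}. Concretely, I would introduce a two-parameter ``partition function''
\begin{equation}
    \cZ(\kappa_i,\kappa_j):=\frac{\brao{\bm{x}_n}\,\ee^{\kappa_i\mathcal{O}_i}\ee^{\kappa_j\mathcal{O}_j}\,\mathcal{T}\ee^{\int_0^t\dd s\,\cL(s)}\keto{\bm{0}}}{\lr{\bm{x}_n|\bm{0}}},
\end{equation}
where $\mathcal{O}_i,\mathcal{O}_j$ are the superoperators that left-multiply the (traceless parts of) $O_i,O_j$. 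Then $\mathrm{Cor}(i,j)=\partial_{\kappa_i}\partial_{\kappa_j}\log\cZ|_{\kappa=0}$, i.e. the \emph{connected} two-point function of $\log\cZ$. This is the standard trick: the mixed second derivative of the log of a partition function picks out exactly the clusters in the cluster expansion that simultaneously touch both $i$ and $j$.

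The key step is then to write $\partial_{\kappa_i}\partial_{\kappa_j}\log\cZ$ as a sum over connected clusters $\V$ that contain (in their support) both vertices $i$ and $j$. Reusing the cluster-derivative bound \eqref{eq:DVZ<}, each cluster of size $m$ contributes at most $\lr{2c_{\rm m}^{-2k}t\tilde J}^m$, exactly as in the proof of Lemma~\ref{lem:clust}; the two extra local insertions only multiply the numerator by $\order(1)$ factors (since $\norm{O_i},\norm{O_j}\le1$ and the Cauchy--Schwarz/H\"older estimates in \eqref{eq:L<2} go through unchanged). The crucial new input is \emph{geometric}: a connected cluster whose support meets both $i$ and $j$ must contain a connected chain of overlapping Pauli terms linking $i$ to $j$, and therefore must have size $m\ge c_{\rm cor}\,\mathsf{d}(i,j)$ for some constant $c_{\rm cor}$ determined by $k$ and $\fkd$ (each term advances the support by at most $k$ sites). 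The number of connected clusters of size $m$ rooted so as to connect the two points is at most $(\fkd+1)^{m}$ by the standard Kruskal-tree / self-avoiding-walk counting already used in the cluster-expansion references. Summing the geometric series over $m\ge c_{\rm cor}\mathsf{d}(i,j)$ with ratio $\propto t/t_*<1$ yields the claimed bound $c'_{\rm cor}(t/t_*)^{c_{\rm cor}\mathsf{d}(i,j)}$.

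Concretely the steps are: (i) express each ratio via $\cZ$ and write $\mathrm{Cor}(i,j)=\partial_{\kappa_i}\partial_{\kappa_j}\log\cZ|_0$; (ii) expand $\log\cZ$ in connected clusters and note the mixed derivative retains only clusters touching both $i$ and $j$; (iii) bound each surviving cluster's contribution by $\lr{2c_{\rm m}^{-2k}t\tilde J}^m$ via \eqref{eq:DVZ<}; (iv) invoke the geometric lower bound $m\ge c_{\rm cor}\mathsf{d}(i,j)$ together with the $(\fkd+1)^m$ cluster count; (v) sum the resulting convergent geometric series for $t<t_*$. I expect the main obstacle to be step (ii): carefully justifying that $\log\cZ$ admits a \emph{convergent} cluster expansion in the two-parameter setting and that taking $\partial_{\kappa_i}\partial_{\kappa_j}$ genuinely restricts the sum to clusters connecting $i$ and $j$. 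This is precisely the technical heart of the Mayer/cluster-expansion machinery of \cite{learn_highT21,Loschmidt_echo23,our_highT}, where the connectedness of the support is what forces the exponential decay in distance; the time-dependence and matrix-element (rather than trace) structure are handled exactly as in the proof of Lemma~\ref{lem:clust} via differences (\emph{1})--(\emph{3}), so the only genuinely new work is the factorization/connectivity argument for the mixed derivative.
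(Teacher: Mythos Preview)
Your proposal is correct and follows essentially the same route the paper indicates: the paper does not give a standalone proof but states that Theorem~\ref{thm:nocorr} follows from ``a straightforward modification of the proof of Lemma~\ref{lem:clust}'' in direct analogy with Theorem~3 of \cite{our_highT}, and your two-parameter partition function $\cZ(\kappa_i,\kappa_j)$ with $\mathrm{Cor}(i,j)=\partial_{\kappa_i}\partial_{\kappa_j}\log\cZ|_0$ is precisely that modification. The geometric input in your step~(iv)---that any connected cluster contributing to the mixed derivative must link $i$ to $j$ and hence have size $m\ge c_{\rm cor}\mathsf{d}(i,j)$---together with the bound \eqref{eq:DVZ<} and the standard $(\fkd+1)^m$ cluster count, is exactly the mechanism behind the cited Theorem~3.
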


It is well-known that one can perform universal measurement-based quantum computation (MBQC) on the 2d cluster state, which can be prepared from an initial product state by $\Theta(1)$-time Hamiltonian evolution. Theorem \ref{thm:nocorr} suggests that the evolution time has to be $\Omega(1)$ in order for the final state to be a universal MBQC resource. This constraint also holds for large-distance teleportation using measurements \cite{telep_finite_t21}, because teleportation is equivalent to building long-range correlation \cite{Friedman:2022vqb}. It is an interesting question whether the condition \eqref{eq:overlap>cm} can be removed, which may demand deeper understanding of the cluster expansion itself.

\subsection{Eliminating local fields via the interaction picture} \label{sec:intera_pic}
Results in this section generalize to the case \eqref{eq:Hw} where the Hamiltonian $H_\omega$ is time-independent with a local field $\omega Z$. For example, sampling is easy as long as \red{$Jt$ is smaller than a constant}, even if $\omega$ is much larger than the local strength $J$ of $V$. The reason is as follows. In the interaction picture, \begin{equation}
    \ee^{-\ii t H_\omega} =  \lr{\tilde{\mathcal{T}} \ee^{-\ii \int^t_0 \dd s V(s)} }\ee^{-\ii \omega t Z} , \where V(s) = \ee^{-\ii \omega sZ}V\ee^{\ii \omega sZ},
\end{equation}
where $\tilde{\mathcal{T}}$ is anti-time ordering. Acting on the initial state $\ket{0\cdots 0}$, the $\ee^{-\ii \omega t Z}$ part becomes trivial, so the evolution is just generated by a time-dependent Hamiltonian $H(s)=V(t-s)$. Since interaction $V$ has finite range and $Z$ is integer-valued, $V(s)$ has finite number of frequencies each being an integer multiple of $\omega$. Thus the situation reduces to the case \eqref{eq:H=Ha} with \eqref{eq:Jas=}, and we have the following:

\begin{thm}\label{thm:samp1}
    Suppose product state $\ket{0\cdots 0}$ is evolved under Hamiltonian \eqref{eq:Hw} for time $t$, after which LOCC measurement satisfying \eqref{eq:overlap>cm} is performed. There exists a constant $c_+>0$ (independent of $N$) determined by geometry and $c_{\rm m}$, such that for $t<c_+/J$, there is a $\poly{N}$ classical sampling algorithm achieving \eqref{eq:Pcl-P<}.
\end{thm}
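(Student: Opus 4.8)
The plan is to reduce Theorem~\ref{thm:samp1} to the already-established Theorem~\ref{thm:samp} via the interaction picture, so that the work is almost entirely in verifying that the time-dependent Hamiltonian produced by the rotating frame satisfies all the hypotheses needed to apply Theorem~\ref{thm:samp} (or more precisely Lemma~\ref{lem:clust}). The key conceptual input, already sketched in Section~\ref{sec:intera_pic}, is the factorization
\begin{equation}
    \ee^{-\ii t H_\omega} = \lr{\tilde{\mathcal{T}} \ee^{-\ii \int^t_0 \dd s\, V(s)} }\ee^{-\ii \omega t Z}, \qquad V(s) = \ee^{-\ii \omega sZ}V\ee^{\ii \omega sZ}.
\end{equation}
First I would observe that acting on $\ket{0\cdots 0}$ the rightmost factor contributes only a global phase $\ee^{-\ii\omega t N_0}$ (with $N_0$ the $Z$-eigenvalue of $\ket{0\cdots0}$), which cancels in the measurement probabilities $p_{\bm x}$ since they depend on $\ket{\phi}\bra{\phi}$. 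Hence sampling from $H_\omega$ is identical to sampling from the purely time-dependent generator $H(s):=V(t-s)$ starting from the same product state, and we are in the setting of \eqref{eq:H=Ha}.

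\emph{Second}, I would check the three structural requirements of Theorem~\ref{thm:samp}. Because each $V_S$ acts on a set of diameter $\le\ell$ on a degree-$K$ graph, conjugation by $\ee^{-\ii\omega sZ}$ preserves the support of every local term (it only rotates the Pauli content within $S$), so the locality parameter $k$ and the overlap degree $\fkd$ are inherited from $V$, bounded by constants depending only on $K$ and $\ell$. The coupling strengths are unchanged in norm, so $\tilde J=\Theta(J)$. The crucial point is the Fourier structure \eqref{eq:Jas=}: since $Z$ has integer spectrum, each Pauli string $H_a$ satisfies $\ee^{-\ii\omega sZ}H_a\ee^{\ii\omega sZ}=\sum_\nu \ee^{-\ii\omega_\nu s}H_{a,\nu}$ where the frequencies $\omega_\nu$ are integer multiples of $\omega$ with $|\omega_\nu|\le 2k\omega$ (the total $Z$-weight a $k$-local Pauli can move is bounded). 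Critically, these frequencies are shared across all terms because $\omega$ is a single global scale, so the \emph{whole} Hamiltonian has a bounded number $F=\order(1)$ of frequencies --- exactly the condition that keeps the runtime polynomial rather than quasi-polynomial, as emphasized after Theorem~\ref{thm:samp}.

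\emph{Third}, with the hypotheses verified, I would simply invoke Theorem~\ref{thm:samp}: there is a threshold $t_*=\Theta(1/\tilde J)=\Theta(1/J)$ of the form \eqref{eq:tstar}, and setting $c_+$ so that $c_+/J<t_*$ (absorbing the constant relating $\tilde J$ and $J$, together with $c_{\rm m}$, $k$, $\fkd$, $F$ into $c_+$) gives the claimed $\poly{N}$ sampling algorithm achieving \eqref{eq:Pcl-P<} for all $t<c_+/J$. \textbf{The main obstacle} I anticipate is not any single hard estimate but the bookkeeping of the frequency count: one must confirm that $F$ stays $N$-independent and that the bound $|\omega_\nu|\le 2k\omega$ (with possibly $\omega\gg J$) does no damage, since the cluster-expansion bounds \eqref{eq:gamma<beta} and \eqref{eq:DVZ<} depend on the frequencies only through the analytically-integrable simplex integral and the counting factor $F^m$ in \eqref{eq:c=logF}, never through the magnitude of $\omega_\nu$. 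This is precisely why the result tolerates arbitrarily strong fields $\omega$ --- the convergence radius is governed by $\tilde J t$ alone --- and making that independence explicit is the one place the argument must be stated with care rather than deferred to Theorem~\ref{thm:samp}.
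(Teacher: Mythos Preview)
Your proposal is correct and follows exactly the same approach as the paper: the paper's entire argument is the paragraph of Section~\ref{sec:intera_pic} preceding the theorem statement, which passes to the interaction picture, observes that $\ee^{-\ii\omega tZ}$ acts trivially on $\ket{0\cdots0}$, and notes that $V(s)$ has finitely many frequencies (integer multiples of $\omega$) so that Theorem~\ref{thm:samp} applies directly. You have, in fact, spelled out the verification of the hypotheses (locality, $\tilde J=\Theta(J)$, $F=\order(1)$ independent of $N$ and of the magnitude of $\omega$) more carefully than the paper does.
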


\section{LOCC protocol in higher dimensions}\label{sec:>1d}
In Section \ref{sec:1d}, we presented a provably efficient LOCC measurement protocol for Heisenberg-limited metrology in 1d, using the fact that MPS with small bond dimension is easy to sample from. In this section, we establish similar results for general graphs, using the sampling result in Section \ref{sec:samp}. In contrast to 1d, here we deal with a more restricted class of initial states, since correlation decay \eqref{eq:initial_cor} no longer guarantees easiness of sampling.\footnote{Correlation decay is always present in finite-time evolved states \cite{review_us23}, but such states are expected to be hard to sample, since they include cluster states capable of MBQC.} For simplicity, we assume the initial state is just the ideal GHZ state. The following results can be generalized to other initial states, for example, sufficiently short time evolution on $\ket{\GHZ}$, or rotated version of GHZ states.

The metrology protocol is similar to Procedure \ref{proc:metro}. The only difference %here is that instead of computing \emph{exactly} the measurement basis $E_{\bm x}$ from the MPS approximation $\Psi_{\omega'}$ of $\psi_{\omega'}$, we compute an \emph{approximation} of $E_{\bm x}$ from the state $\psi_{\omega'}$ itself. Here $E_{\bm x}$ is determined by \eqref{eq:1=-x0} with $\Phi^\alpha_{\omega'}$ replaced by $\phi^\alpha_{\omega'}$, and the computation output is actually $E_{\bm x}^{\rm cl}$ with some error. The quantum measurement is also done in this basis with error. 
is how to approximate $E_{\bm x}$ determined by \eqref{eq:1=-x0}. In 1d, we used MPS approximations to get $E_{\bm x}^{\rm ap}$; here in the next subsection, we use cluster expansions developed in the previous section (which also applies to 1d as a special case). 
Note that although one still has a tensor network representation for $\psi_{\omega'}$ in higher dimensions \cite{HHKL}, it is not manifestly easy to sample so is not used here. 

\subsection{Finding the next LOCC measurement basis via cluster expansion: assuming no previous errors}

We want to compute an approximation \begin{equation}\label{eq:Eap=e}
    E_{\bm x}^{\rm ap} = e_{x_1}^{\rm ap}\otimes e_{x_1x_2}^{\rm ap}\otimes \cdots\otimes e_{x_1\cdots x_N}^{\rm ap},
\end{equation}
of $E_{\bm x}$ (containing $e_{x_1\cdots x_n}$) that is determined by matrices $\cM$ and $\tcM$ in \eqref{eq:EME=0'} and \eqref{eq:EtME=0}. Following Procedure \ref{proc:Ex}, we compute the local basis in \eqref{eq:Eap=e} one-by-one, where the previously computed $e_{x_1\cdots x_j}^{\rm ap}$s ($j\le n$) are used for the next one $e_{x_1\cdots x_{n+1}}^{\rm ap}$. For illustration purpose, in this subsection we first show that \emph{if the previously computed basis happen to equal the true ones $e_{x_1\cdots x_j}$, and they all satisfy \eqref{eq:overlap>cm},} then the next one can be computed efficiently with small error. We will analyze the accumulated errors, as well as justify \eqref{eq:overlap>cm}, in the next subsection.

Observe that the key quantity to compute in Procedure \ref{proc:Ex} is the $2$-by-$2$ matrix \footnote{From now on we normalize the matrix such that $\cM_{x_1\cdots x_n}=\cM_{x_1\cdots x_n}^{\rm old}/\mathcal{N}_{x_1\cdots x_n}$ where $\cM_{x_1\cdots x_n}^{\rm old}$ is the one used in Section \ref{sec:1d}. Here $\mathcal{N}_{x_1\cdots x_n} := \bra{\phi^\alpha_{\omega'}} e_{x_1}\otimes \cdots \otimes e_{x_1\cdots x_n} \ket{\phi^\alpha_{\omega'}}$ is a normalization factor (which holds for both $\alpha=0,1$ because the previous basis are chosen to zero-diagonalize $\cM_{x_1\cdots}^{\rm old}$ such as \eqref{eq:finde2}). This normalization makes the new matrix $\cM_{x_1\cdots x_n}$ to be of order $1$ (instead of exponentially small for $\cM_{x_1\cdots}^{\rm old}$), and does not change the basis obtained from Procedure \ref{proc:Ex}. $\tcM_{x_1\cdots}$ is normalized analogously.} \begin{equation}
    \cM_{x_1\cdots x_n} =\rho_{x_1\cdots x_{n}}^{0} - \rho_{x_1\cdots x_{n}}^{1},
\end{equation}
where $\rho^0_{x_1\cdots x_n}$ for example is the normalized reduced density matrix of $n+1$ after measuring the previous qubits $1,\cdots,n$, in the time-evolved \emph{all-zero} (instead of GHZ) state $\phi^0_{\omega'}$. One of its matrix elements is given by \begin{equation}\label{eq:M=rho-rho}
    \bra{e_{x_1\cdots x_n0}}\cM_{x_1\cdots x_n} \ket{e_{x_1\cdots x_n0}}= p^0(0|\bm{x}_n) - p^1(0|\bm{x}_n),
\end{equation}
where $p^\alpha(0|\bm{x}_n)$ is just the marginal probability in \eqref{eq:p=expan} that is shown to obey the cluster expansion! Note that \eqref{eq:p=expan} is exactly the case $\alpha=0$, and is straightforwardly generalized to the case $\alpha=1$ where one just changes the initial state to be all-one. Furthermore, since the proof of Lemma \ref{lem:clust} does not depend on the precise basis $\ket{e_{x_1\cdots x_n0}}$, the expectation value of $\rho_{x_1\cdots x_{n}}^\alpha$ on \emph{any} state obeys the cluster expansion. Since the off-diagonal matrix element $\bra{e_{x_1\cdots x_n0}}\rho^\alpha_{x_1\cdots x_n} \ket{e_{x_1\cdots x_n1}}$ is a linear combination of such expectations in states $\ket{e_{x_1\cdots x_n0}}+\ee^{\ii \theta}\ket{e_{x_1\cdots x_n1}}$ where $\theta\in\{0,\pi/2,\pi,3\pi/2\}$, it also obeys a cluster expansion like \eqref{eq:p=expan}. Therefore, according to Lemma \ref{lem:clust}, \emph{all} matrix elements of $\rho^\alpha_{x_1\cdots x_n}$, and furthermore all matrix elements of $\cM_{x_1\cdots x_n}$, can be computed within error $1/\poly{N}$ using $\poly{N}$ classical runtime.

$\tcM_{x_1\cdots x_n}$ can be computed using cluster expansion in a similar way: Explicit calculation similar to \eqref{eq:M=rho-rho} yields \begin{equation}
    \bra{e_{x_1\cdots x_n0}}\tcM_{x_1\cdots x_n}\ket{e_{x_1\cdots x_n0}} =\frac{ \brao{\bm{x}_n 0} \ee^{t \cL_{\omega'}} \keto{\bm{1,0}} }{ \brao{\bm{x}_n} \ee^{t \cL_{\omega'}} \keto{\bm{1,0}} },
\end{equation}
where $\cL_{\omega'}:=-\ii[H_{\omega'},\cdot]$. This exactly corresponds to the quantity $\widetilde{p}(0|\bm{x}_n)$ defined in \eqref{eq:tp=}, which has a cluster expansion of the same form of $p(0|\bm{x}_n)$ in Corollary \ref{cor:clust}. As a result, all matrix elements of $\tcM_{x_1\cdots x_n}$ can also be computed within error $1/\poly{N}$ using $\poly{N}$ classical runtime. 

Due to the proper normalization of $\cM_{x_1\cdots x_n},\tcM_{x_1\cdots x_n}$, the above $1/\poly{N}$ approximation error for the two $2$-by-$2$ matrices (whose largest matrix element is $\Theta(1)$) propagates to $1/\poly{N}$ error on the next computed local basis: %``direction'' of the two $2$-by-$2$ matrices in the single-qubit operator space.
\begin{equation}
    \norm{e_{x_1\cdots x_{n+1}}^{\rm ap} - e_{x_1\cdots x_{n+1}} } \le 1/\poly{N}, \quad \mathrm{if}\quad e_{x_1\cdots x_j}^{\rm ap} = e_{x_1\cdots x_j}, (j\le n) \text{ and satisfy \eqref{eq:overlap>cm}}.
\end{equation}

\subsection{Finding the LOCC measurement basis: the whole procedure and error analysis}
In reality, $e_{x_1\cdots x_j}^{\rm ap}$ does not equal the ideal $e_{x_1\cdots x_j}$, but one can still perform the computation above using the computed $e_{x_1\cdots x_j}^{\rm ap}$ instead without knowing the ideal one, as long as it obeys \eqref{eq:overlap>cm}. More explicitly, we use the following procedure based on cluster expansion that mimics Procedure \ref{proc:Ex}. We focus on $\cM$, and $\tcM$ is treated analogously.

\begin{proc}\label{proc:Ex_clus} 
Procedure to determine a local basis $e_{x_1}^{\rm ap},e_{x_1x_2}^{\rm ap},\cdots$ along a given trajectory $x_1,x_2,\cdots$ using the cluster expansion algorithm.
\begin{enumerate}
    \item Compute the reduced density matrices $\rho^\alpha_{\varnothing}$ on spin $1$ in the state $\phi^\alpha_{\omega'}$ (whose initial state is all-$\alpha$ instead of GHZ). Using the cluster expansion algorithm above, a polynomial-time computation outputs an approximation $\rho^{\alpha, \mathrm{ap}}_{\varnothing}$ (here subscript $\varnothing$ means no previous substring measured) with $1/\poly{N}$ error for all of its matrix elements. This yields \begin{equation}
        \cM_{\varnothing}^{\rm ap}:= \rho^{0, \mathrm{ap}}_{\varnothing} - \rho^{1, \mathrm{ap}}_{\varnothing},
    \end{equation}
    which approximates $\cM_{\varnothing}$ with $1/\poly{N}$ error. Find an approximation $\tcM_{\varnothing}^{\rm ap}$ for $\tcM_{\varnothing}$ analogously.
    \item Find an orthonormal basis $\ket{e_{x_1}^{\rm ap}}$ for spin $1$ s.t. \begin{equation}\label{eq:E1ME1'=0ap}
        \bra{e_{x_1}^{\rm ap}} \cM_{\varnothing}^{\rm ap}\ket{e_{x_1}^{\rm ap}}=0, \quad \mathrm{and} \quad \bra{e_{x_1}^{\rm ap}} \lr{\tcM_{\varnothing}^{\rm ap}+\tcM^{\rm ap,\dagger}_{\varnothing}}\ket{e_{x_1}^{\rm ap}}=0.
    \end{equation}
    This is possible because two traceless Hermitian matrices can be simultaneously zero-diagonalized \cite{Zhou_LOCC20}. Moreover, ${\rm Im}\bra{e_{0}^{\rm ap}} \tcM_{\varnothing}^{\rm ap}\ket{e_{0}^{\rm ap}}$ is chosen to be positive, which completely determines $e_{x_1}^{\rm ap}$.
    \item Compute the reduced density matrices $\rho^{\alpha}_{x_1, \mathrm{ap}}$ on qubit $2$ after measuring $e^{\rm ap}_{x_1}$ on qubit $1$ in state $\phi^\alpha_{\omega'}$. As long as $e^{\rm ap}_{x_1}$ does not coincide with the computational basis \eqref{eq:overlap>cm}, the computation can be done via cluster expansion in $\poly{N}$ time and outputs an approximation $\rho^{\alpha, \mathrm{ap}}_{x_1}$ of $\rho^{\alpha}_{x_1, \mathrm{ap}}$ (note the conceptual differences between the two) with error $1/\poly{N}$. Use them to compute \begin{equation}
        \cM^{\mathrm{ap}}_{x_1} := \rho^{0, \mathrm{ap}}_{x_1} - \rho^{1, \mathrm{ap}}_{x_1}
    \end{equation}
    as an approximation of $\cM_{x_1}$. Compute $\tcM^{\mathrm{ap}}_{x_1}$ that approximates $\tcM_{x_1}$ similarly.
    \item Find an orthonormal basis $\{\ket{e_{x_1x_2}^{\rm ap}}:x_2=0,1\}$ for spin $2$ s.t. \begin{equation}\label{eq:Map=0}
        \bra{e_{x_1x_2}^{\rm ap}} \cM^{\rm ap}_{x_1}\ket{e_{x_1x_2}^{\rm ap}}=\bra{e_{x_1x_2}^{\rm ap}} \lr{\tcM^{\rm ap}_{x_1}+\tcM_{x_1}^{\rm ap, \dagger}}\ket{e_{x_1x_2}^{\rm ap}}=0.
    \end{equation}  
    The sign of ${\rm Im}\bra{e_{x_10}^{\rm ap}} \tcM_{x_1}^{\rm ap}\ket{e_{x_10}^{\rm ap}}$ is chosen to be $(-1)^{x_1}$.
    \item Repeat steps 3 and 4 for spins $3,\cdots,N$.
\end{enumerate}
\end{proc}

We want to show that Procedure \ref{proc:Ex_clus} is accurate and efficient, in the sense that the measurement basis automatically satisfy \eqref{eq:overlap>cm} enabling the cluster expansion algorithm, and that the errors accumulate in a mild way.

\begin{prop}\label{prop:Ex}
Suppose $\ket{\GHZ}$ is evolved under Hamiltonian \eqref{eq:Hw} for time $t$. Procedure \ref{proc:Ex} defines a LOCC measurement basis $E_{\bm{x}}$ for metrology, while Procedure \ref{proc:Ex_clus} yields an approximate basis $E_{\bm{x}}^{\rm ap}$ based on the cluster expansion algorithm.
There exists positive constants $c_{\rm M}, c_{\rm m}$ (here subscript $\rm M$ stands for ``metrology'') independent of $N$ such that the following holds.
If $t<c_{\rm M}/J$, the local basis elements are far from the computational basis \begin{equation}\label{eq:overlap'>cm}
    \abs{\alr{0| e_{x_1\cdots x_n}^{\rm ap}} } \ge c_{\rm m},\quad \forall \text{ substring } x_1\cdots x_n,
\end{equation}
so that Procedure \ref{proc:Ex_clus} only needs $\poly{N}$-time classical computation.
Furthermore, the result $E_{\bm x}^{\rm ap}$ is close to $E_{\bm x}$ in the sense that \begin{equation}\label{eq:E'-E<}
    \norm{e_{x_1\cdots x_{n+1}}^{\rm ap} - e_{x_1\cdots x_{n+1}} } \le 1/\poly{N}, \quad \forall \text{ substring } x_1\cdots x_n.
\end{equation}
\end{prop}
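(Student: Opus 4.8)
The plan is to prove the three assertions of the proposition — the overlap bound \eqref{eq:overlap'>cm}, the $\poly{N}$ runtime, and the error bound \eqref{eq:E'-E<} — together, by a single induction on the qubit index $n$, because they are mutually entangled: the cluster-expansion algorithm of Lemma \ref{lem:clust} and Corollary \ref{cor:clust} converges and runs in $\poly{N}$ time only if the \emph{previously fixed} basis elements obey an overlap condition of the type \eqref{eq:overlap>cm}, while the overlap condition for the \emph{next} element is itself read off from the output of that algorithm. The inductive hypothesis at step $n$ is that every $e^{\rm ap}_{x_1\cdots x_j}$ with $j\le n$ satisfies $\abs{\braket{0|e^{\rm ap}_{x_1\cdots x_j}}}\ge c_{\rm m}$ together with $\norm{e^{\rm ap}_{x_1\cdots x_j}-e_{x_1\cdots x_j}}\le \order(1/\poly{N})$. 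Since the initial state is the ideal GHZ state, the two branches $\ket{\phi^\alpha_{\omega'}}$ are short-time evolutions of the genuine product states $\ket{0\cdots0}$ and $\ket{1\cdots1}$, so the cluster expansion applies directly; the local field $\omega' Z$ is absorbed by the interaction-picture argument of Theorem \ref{thm:samp1}, which is why $\omega'$ may be arbitrarily large.

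To establish \eqref{eq:overlap'>cm} I would use that for the GHZ branches each conditional single-qubit reduced matrix of $\ket{\phi^\alpha_{\omega'}}$ stays near $\ket{\alpha}\bra{\alpha}$: qubit $n+1$ has only evolved for time $t$, and by Theorem \ref{thm:nocorr} conditioning on the earlier outcomes shifts its local state by an amount controlled by decaying correlations. Concretely, the leading $m=0$ term of \eqref{eq:p=expan} gives $p^0(0|\bm x_n)=\abs{\braket{0|e_{x_1\cdots x_n0}}}^2+\order(t/t_*)$ and likewise $p^1(0|\bm x_n)=\abs{\braket{1|e_{x_1\cdots x_n0}}}^2+\order(t/t_*)$, so the computed matrices satisfy $\cM^{\rm ap}_{x_1\cdots x_n}=Z_{n+1}+\order(t/t_*)+1/\poly{N}$ and $\tcM^{\rm ap}_{x_1\cdots x_n}+\tcM^{\rm ap,\dagger}_{x_1\cdots x_n}=X_{n+1}+\order(t/t_*)+1/\poly{N}$. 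A basis that simultaneously zero-diagonalizes these two is $\order(t/t_*)$-close to an eigenbasis of $Y_{n+1}$, whose overlap with $\ket{0}$ is exactly $1/\sqrt2$; choosing $c_{\rm M}$ small enough that the $\order(t/t_*)$ and $1/\poly{N}$ corrections stay below $1/\sqrt2-c_{\rm m}$ yields the overlap bound. The constants are fixed self-consistently: one first picks $c_{\rm m}$ (say $c_{\rm m}=1/4$), which fixes $t_*$ through \eqref{eq:tstar}, and then chooses $c_{\rm M}$ with $c_{\rm M}/J<t_*$ small enough for the preceding continuity estimate. With the overlap bound available at every step the runtime claim is immediate: each matrix element of $\cM^{\rm ap}_{x_1\cdots x_n}$ and $\tcM^{\rm ap}_{x_1\cdots x_n}$ is a marginal of the form handled by Lemma \ref{lem:clust} and Corollary \ref{cor:clust}, hence computable to error $1/\poly{N}$ in $\poly{N}$ time, and the subsequent $2\times2$ diagonalization is $\order(1)$ work.

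The hard part will be the error bound \eqref{eq:E'-E<}, because $\cM^{\rm ap}_{x_1\cdots x_n}$ is assembled from the \emph{approximate} projectors $e^{\rm ap}_{x_1\cdots x_j}$ whereas the ideal $\cM_{x_1\cdots x_n}$ uses the exact $e_{x_1\cdots x_j}$, so earlier errors feed forward and one must rule out a cascade. I would split the discrepancy into (i) the cluster-expansion truncation error, bounded by $1/\poly{N}$ via \eqref{eq:gamma<beta}, and (ii) the error incurred by conditioning on $e^{\rm ap}$ rather than $e$. The essential tool for (ii) is again Theorem \ref{thm:nocorr}: the normalized conditional reduced matrix of qubit $n+1$ responds to a perturbation of the measurement on an earlier qubit $j$ only through the conditional correlation of the pair $(n+1,j)$, which is at most $c'_{\rm cor}(t/t_*)^{c_{\rm cor}\mathsf{d}(n+1,j)}$. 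Summing over $j$ on a bounded-degree graph, the total sensitivity is $\order((t/t_*)^{c_{\rm cor}})$, so one obtains a recursion $\epsilon_{n+1}\le \order(1/\poly{N})+\delta\max_{j\le n}\epsilon_j$ with a contraction factor $\delta=\order((t/t_*)^{c_{\rm cor}})<1$ once $c_{\rm M}$ is small. This contraction keeps $\epsilon_n=\order(1/\poly{N})$ uniformly in $n$, which closes the induction and gives \eqref{eq:E'-E<}. The delicate point — and the step I expect to require the most care — is showing that this sensitivity is genuinely governed by the \emph{decaying conditional correlation} and is not spoiled by the exponentially small normalization $\mathcal{N}_{x_1\cdots x_n}$; this is exactly where the normalization convention introduced for $\cM_{x_1\cdots x_n}$ must be combined with Theorem \ref{thm:nocorr} so that all quantities being compared are $\Theta(1)$.
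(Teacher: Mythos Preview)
Your overall architecture---a single induction on $n$ that simultaneously carries the overlap bound \eqref{eq:overlap'>cm} and the error bound \eqref{eq:E'-E<}---matches the paper's proof exactly, as does your argument for \eqref{eq:overlap'>cm}: the paper also uses the $m=0$ term of \eqref{eq:p=expan} to conclude $\rho^\alpha_{x_1\cdots x_{n-1},\mathrm{ap}}\approx\ket{\alpha}\bra{\alpha}$, and from the zero-diagonalization of $\cM^{\rm ap}\approx Z$ obtains $\abs{\braket{0|e^{\rm ap}}}^2=1/2+\order(Jt)+\order(\epsilon)$ (your identification of the basis as approximately a $Y$-eigenbasis is a slightly sharper statement of the same fact).

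The genuine difference is in how you propose to control error propagation for \eqref{eq:E'-E<}. The paper does \emph{not} use Theorem \ref{thm:nocorr}. Instead it goes back inside the cluster expansion and bounds the coefficients order by order: an order-$m$ cluster has support on at most $km$ qubits, so at most $km$ of the approximate projectors $e^{\rm ap}_{x_1\cdots x_j}$ enter, and replacing each by its ideal counterpart costs $\epsilon$ per projector. This gives
\[
\abs{\gamma_m^{\rm ap}t^m-\gamma_m t^m}\le c_*'\,km\,\epsilon\,(c_*Jt)^m,
\]
and summing over $m$ yields a total projection error $\le c_*''Jt\cdot\epsilon$, which is a contraction once $c_{\rm M}$ is chosen so that $c_*''Jt<1$. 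Combined with the truncation error this keeps $\epsilon_n=\order(1/\poly{N})$ uniformly, closing the induction without ever invoking spatial correlation decay.

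Your route via Theorem \ref{thm:nocorr} is plausible in spirit but has a gap as written: that theorem bounds the conditional correlation between two \emph{unmeasured} qubits, whereas what you need is the \emph{sensitivity} of the conditional state on qubit $n{+}1$ to a perturbation of the measurement \emph{basis} on an already-measured qubit $j$. Converting one into the other is possible (roughly: fix all other projectors, regard $j$ as temporarily unmeasured, and use that a nearly-product two-qubit state is insensitive to which rank-one projector is applied on one factor), but it requires an additional lemma and careful handling of the normalization---precisely the ``delicate point'' you flag. The paper's direct cluster-level bookkeeping avoids this detour entirely, and has the added advantage that the contraction factor $c_*''Jt$ is manifest and does not rely on bounded-degree geometric sums over distances.
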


\begin{proof}
Let $\epsilon=1/\poly{N}$ be the desired accuracy on the right hand side of \eqref{eq:E'-E<}. We have shown in the previous subsection that \eqref{eq:E'-E<} holds for the first step $n=0$ (\eqref{eq:overlap'>cm} also holds trivially), because there is no previously computed basis.
We prove for the later measurements $n=1,2,\cdots$ by induction. Supposing \eqref{eq:overlap'>cm} and \eqref{eq:E'-E<} hold for $0,1,\cdots,n-1$, we want to show that they also hold for $n$.

To show \eqref{eq:overlap'>cm}, we know from \eqref{eq:p=expan} that at sufficiently short time, the conditioned reduced density matrix $\rho^\alpha_{x_1\cdots x_{n-1}, \mathrm{ap}}$ on spin $n$ after evolving $\ket{\alpha \cdots \alpha}$ and measuring $e_{x_1\cdots x_j}^{\rm ap}$ ($j\in\{1,\cdots,n-1\}$) that satisfies \eqref{eq:overlap'>cm} is close to $\ket{\alpha}\bra{\alpha}$. Since the difference, $\rho^0_{x_1\cdots x_{n-1}, \mathrm{ap}} - \rho^1_{x_1\cdots x_{n-1}, \mathrm{ap}}\approx \rho_{x_1\cdots x_{n-1}}^{0, \rm ap} - \rho_{x_1\cdots x_{n-1}}^{1, \rm ap}=\cM_{x_1\cdots x_{n-1}}^{\rm ap}$, have almost zero diagonal elements in basis $\ket{e_{x_1\cdots x_{n}}^{\rm ap}}$, the basis cannot be close to the computational basis. This argument can be shown explicitly by \begin{align}\label{eq:Ecl_far}
    \abs{\braket{0| e_{x_1\cdots x_{n}}^{\rm ap}} }^2 &= \bra{e_{x_1\cdots x_{n}}^{\rm ap}} \rho_{x_1\cdots x_{n-1}, \mathrm{ap}}^{0} \ket{e_{x_1\cdots x_{n}}^{\rm ap}} + \order(Jt) \nonumber\\
    &= \bra{e_{x_1\cdots x_{n}}^{\rm ap}} \rho_{x_1\cdots x_{n-1}}^{0, \mathrm{ap}} \ket{e_{x_1\cdots x_{n}}^{\rm ap}} + \order(\epsilon)+ \order(Jt) \nonumber\\
    &= \bra{e_{x_1\cdots x_{n}}^{\rm ap}} \rho_{x_1\cdots x_{n-1}}^{1, \mathrm{ap}} \ket{e_{x_1\cdots x_{n}}^{\rm ap}} + \order(\epsilon)+ \order(Jt)= \bra{e_{x_1\cdots x_{n}}^{\rm ap}} \rho_{x_1\cdots x_{n-1}, \mathrm{ap}}^{1} \ket{e_{x_1\cdots x_{n}}^{\rm ap}} + \order(\epsilon)+ \order(Jt) \nonumber\\ &= \abs{\braket{1| e_{x_1\cdots x_{n}}^{\rm ap}} }^2+ \order(\epsilon) + \order(Jt) = 1- \abs{\braket{0| e_{x_1\cdots x_{n}}^{\rm ap}} }^2+ \order(\epsilon) + \order(Jt), \nonumber\\
    \rarrow &\abs{\alr{0| e_{x_1\cdots x_{n}}^{\rm ap}} }^2 = 1/2 + \order(\epsilon)+ \order(Jt).
\end{align}
Here in the first line, $\order(Jt)$ comes from the $m\ge 1$ expansion of \eqref{eq:p=expan} and does not depend on $n,N$. To get the second line, we used the induction hypothesis \eqref{eq:overlap'>cm} for the previous measurements that enables a polynomial-time computation for $\rho_{x_1\cdots x_{n-1}, \mathrm{ap}}^{0}$ with error $\order(\epsilon)$. The third line uses \eqref{eq:Map=0}, and the rest uses the above arguments again for $\alpha=1$. For any $c_{\rm m}<1/2$, \eqref{eq:Ecl_far} satisfies \eqref{eq:overlap'>cm} if $c_{\rm M}$ is sufficiently small, because $Jt < c_{\rm M}$ and $\epsilon\ll 1$.

To show \eqref{eq:E'-E<}, we first show that the conditioned local density matrix $ \rho_{x_1\cdots x_{n}, \mathrm{ap}}^\alpha$ of $n+1$ after projection $e_{x_1}^{\rm ap}\otimes \cdots \otimes e_{x_1\cdots x_{n}}^{\rm ap}$ is close to the ideal one $\rho_{x_1\cdots x_{n}}^\alpha$ after projection $e_{x_1}\otimes \cdots \otimes e_{x_1\cdots x_{n}}$. From the proof of Lemma \ref{lem:clust}, each order in \eqref{eq:p=expan} can differ by \begin{equation}
    \abs{\gamma_m^{\rm ap} t^m - \gamma_m t^m } \le c_*' km \epsilon\, \lr{c_* Jt}^m,
\end{equation}
where $\gamma_m^{\rm ap}$ is the coefficient of the cluster expansion for (a matrix element of) $ \rho_{x_1\cdots x_{n}, \mathrm{ap}}^\alpha$.
Here $c_*,c_*'$ are $\order(1)$ constants, $km \epsilon$ comes from the fact that there are at most $km$ local projections involved for a cluster of size $m$, with error $\epsilon$ for each projection according to our induction hypothesis. Then the total error of the conditioned local density matrix is bounded by \begin{equation}\label{eq:proj_err}
    \sum_{m=1}^\infty c_*' km \epsilon\, \lr{c_* Jt}^m \le \epsilon\, c_*''Jt, \quad \mathrm{if}\quad c_*Jt<1,
\end{equation}
where $c_*''$ is a constant independent of $N$. Then if $Jt<1/2\max(c_*, c_*'')$, which we ensure by choosing \begin{equation}
    c_{\mathrm{M}} \le \frac{1}{2\max(c_*,c_*^{\prime\prime})},
\end{equation}
one can truncate to order $m\le \mathrm{\Theta}(\log N)$ with a truncation error $\epsilon \lr{1-c_*''Jt}>\frac{1}{2}\epsilon$, so that the \emph{computed} conditioned local density matrix $\rho_{x_1\cdots x_n}^{\alpha, \rm ap}$ (as an approximation to $\rho_{x_1\cdots x_n, \rm ap}^\alpha$) is of error $\epsilon$ (the projection error \eqref{eq:proj_err} and truncation error combined) to the ideal one $\rho_{x_1\cdots x_n}^\alpha$. Then the computed $e_{x_1\cdots x_{n+1} }^{\rm ap}$ from the conditioned local density matrix is also of error $\epsilon=\order\lr{1/\poly{N}}$ given by \eqref{eq:E'-E<}. This finishes the inductive proof.
\end{proof}

\subsection{Main result}

Proposition \ref{prop:Ex} implies the metrology protocol given in Procedure \ref{proc:metro} is efficient and achieves HL.  Therefore, we have an explicit protocol to achieve HL metrology in the presence of local interactions for arbitrary interaction graphs.

\begin{thm}\label{thm:metro}
Suppose $\ket{\GHZ}$ is evolved under Hamiltonian \eqref{eq:Hw} for time $t$.
For $t<c_{\rm M}/J$ where $c_{\rm M}$ is defined in Proposition \ref{prop:Ex}, the LOCC measurement protocol described in Procedure \ref{proc:metro} (with the basis determined by Procedure \ref{proc:Ex_clus}) achieves HL and requires only $\poly{N}$ classical computation. 
\end{thm}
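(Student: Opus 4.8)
The plan is to mirror the proof of Theorem \ref{thm:metro1d}, replacing every MPS-based computation by the cluster-expansion machinery of Section \ref{sec:samp} together with Proposition \ref{prop:Ex}. The robustness input is unchanged: since the ideal $\ket{\GHZ}$ satisfies \eqref{eq:Z0-Z1=cin} and \eqref{eq:initial_cor} with $c_{\rm in}=1$ and $\xi=0$ on any bounded-degree graph, Theorem \ref{thm:phase} supplies the decomposition \eqref{eq:psi'_omega} with a relative phase $f(\omega)$ whose slope is $\propto N$ by \eqref{eq:dfw_N}. Thus it again suffices to measure this relative phase, and the whole task reduces to (0) establishing the estimator identity \eqref{eq:P-P=f}, and showing that the three classical subroutines of Procedure \ref{proc:metro} --- (1) computing $f(\tilde\omega)$, (2) computing the local basis, and (3) computing $\braket{\bf P}'$ --- each run in $\poly{N}$ time for $t<c_{\rm M}/J$.

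For (0), I would first invoke Proposition \ref{prop:Ex}, which guarantees that Procedure \ref{proc:Ex_clus} outputs a basis $E_{\bm x}^{\rm ap}$ whose local elements satisfy \eqref{eq:overlap'>cm} and obey $\norm{e_{x_1\cdots x_{n+1}}^{\rm ap} - e_{x_1\cdots x_{n+1}}}\le 1/\poly{N}$ along every trajectory, where $E_{\bm x}$ is the ideal basis satisfying \eqref{eq:1=-x0} exactly with respect to the true states $\ket{\phi^\alpha_{\omega'}}$. Running the algebra of \eqref{eq:P=P-f_MPS} verbatim, but now with the true states $\ket{\phi^\alpha_{\omega'}}$ and the ideal basis $E_{\bm x}$ --- for which \eqref{eq:1=-x0} holds exactly and $\sum_{\bm x} E_{\bm x}=1$ --- yields $\braket{\bf P}-\braket{\bf P}' = -\sin[f(\omega)]+\order(1/\sqrt N)$, the error coming only from \eqref{eq:psi'_omega}. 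The further deviation incurred by actually measuring $E_{\bm x}^{\rm ap}$ rather than $E_{\bm x}$ is controlled by the accumulated per-site error: the observable ${\bf P} = \sum_{\bm x}(-1)^{\bm x} E_{\bm x}^{\rm ap}$ differs from its ideal counterpart by a telescoping sum of $N$ single-site replacements, each of operator-norm size $1/\poly{N}$ by \eqref{eq:E'-E<}, so the induced error in $\braket{\bf P}-\braket{\bf P}'$ is $N/\poly{N}=\order(1/\sqrt N)$ once the polynomial degree in \eqref{eq:E'-E<} is chosen large enough. Combined with the $\propto N$ slope from \eqref{eq:dfw_N}, the binary search of Proposition \ref{prop:binary} then gives HL exactly as in the 1d case.

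The efficiency claims are where the new machinery enters. Claim (2) is precisely the content of Proposition \ref{prop:Ex}: every matrix element of the $2\times 2$ matrices $\cM_{x_1\cdots x_n}$ and $\tcM_{x_1\cdots x_n}$ is a marginal of the form \eqref{eq:p=expan} or the quantity $\widetilde p(0|\bm x_n)$ of Corollary \ref{cor:clust}, hence computable to $1/\poly{N}$ accuracy in $\poly{N}$ time by the cluster expansion, with \eqref{eq:overlap'>cm} guaranteeing convergence at each step. For claim (1), the expression \eqref{eq:fw=Z} for $f(\tilde\omega)$ reduces, after the Riemann-sum discretization already performed there, to a sum of $\poly{N}$ local expectations $\braket{\ee^{\ii sH_{\omega''}} Z_j \ee^{-\ii sH_{\omega''}}}_\alpha$ of short-time Heisenberg-evolved single-site operators in the product states $\ket{\alpha\cdots\alpha}$; each is a cluster-expandable quantity of exactly the type controlled in Lemma \ref{lem:clust}, so it is $\poly{N}$ computable. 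I emphasize that here one cannot simply truncate via Lieb-Robinson \eqref{eq:LRB} to a radius-$\order(\log N)$ ball as in 1d, since in $d$ dimensions such a ball carries $\exp[(\log N)^d]$ amplitudes; the cluster expansion is what upgrades this to genuine $\poly{N}$ scaling on any bounded-degree graph. Finally, claim (3) is supplied directly by Theorem \ref{thm:samp1}: because the computed basis satisfies \eqref{eq:overlap'>cm}, the LOCC outcome distribution $p'_{\bm x}=\bra{\psi_{\omega'}}E_{\bm x}^{\rm ap}\ket{\psi_{\omega'}}$ on $\ket{\psi_{\omega'}}$ can be sampled in $\poly{N}$ time for $t<c_+/J$, and $N$ samples give $\braket{\bf P}'$ to error $\order(1/\sqrt N)$.

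The main obstacle is the bookkeeping of error propagation. Two effects must be kept subdominant to the unavoidable experimental fluctuation $1/\sqrt M$ (recall $M\ll N$): the per-site basis errors of \eqref{eq:E'-E<}, which accumulate linearly over the $N$ qubits, and the truncation/discretization errors in computing $f(\tilde\omega)$ and $\braket{\bf P}'$. Both are controlled by choosing the cluster-expansion truncation order $m=\Theta(\log N)$ with a large enough prefactor, so that the inductive bound \eqref{eq:E'-E<} holds with an arbitrarily high polynomial degree while the runtime stays $\poly{N}$; this is exactly the regime $t<c_{\rm M}/J$ secured by Proposition \ref{prop:Ex}. Once these errors are all $\order(1/\sqrt N)\ll 1/\sqrt M$, the estimator identity \eqref{eq:P-P=f} holds to the required accuracy and HL follows, with the total classical cost remaining $\poly{N}$.
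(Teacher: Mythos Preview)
Your proposal follows the same architecture as the paper's proof: reduce to parts (0)--(3) of Theorem \ref{thm:metro1d} and replace each MPS computation by a cluster-expansion one, with Proposition \ref{prop:Ex} supplying (2) and the error control in (0). The overall logic is sound, and your telescoping/accumulation argument for (0) is equivalent to the paper's appeal to the sampling-to-computing reduction around \eqref{eq:marg-marg<}.

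There is one genuine gap in your treatment of (3): you invoke Theorem \ref{thm:samp1} to sample $p'_{\bm x}=\bra{\psi_{\omega'}}E_{\bm x}^{\rm ap}\ket{\psi_{\omega'}}$ directly, but that theorem applies only to states obtained by evolving a \emph{product} state, and $\ket{\psi_{\omega'}}$ is evolved GHZ. The missing step --- made explicit in the 1d proof of part (3) of Theorem \ref{thm:metro1d} --- is the reduction $p'_{\bm x}=\braket{E^{\rm ap}_{\bm x}}_{\phi^0_{\omega'}}$ (up to $1/\poly{N}$). This follows from the basis condition \eqref{eq:1=-x0}: in \eqref{eq:P'x=} the cross term $\bra{\phi^0_{\omega'}}E^{\rm ap}_{\bm x}\ket{\phi^1_{\omega'}}$ is (approximately) pure imaginary so its real part drops, and the two diagonal terms coincide. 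Only after this reduction does Theorem \ref{thm:samp1} apply, to $\ket{\phi^0_{\omega'}}$.

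On (1) you actually diverge from the paper, which asserts that the 1d argument --- Lieb--Robinson truncation to a radius-$\order(\log N)$ ball followed by brute-force simulation --- ``still applies'' on general graphs. You are right to be cautious: that ball contains $\order((\log N)^d)$ sites on a $d$-dimensional lattice, giving only quasi-polynomial cost, and more on an expander. Your cluster-expansion route (effectively the $n=0$ case of Lemma \ref{lem:clust}, where no previous measurements are present so the basis condition \eqref{eq:overlap>cm} is not needed) is a correct fix and arguably sharpens the paper's brief argument here.
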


\begin{proof}
We show the facts established in the 1d Theorem \ref{thm:metro1d} also apply here. 

(0) \eqref{eq:P-P=f} holds, so that the estimation of $\omega$ is unbiased at large $N$.
Following the 1d proof of \eqref{eq:P-P=f}, here we only need to show that the error of local basis \eqref{eq:E'-E<} induces $\order(1/\sqrt{N})$ error for any observable like $\mathbf{P}$.\footnote{In 1d this is established in \eqref{eq:P=P+1_N} directly by the closeness of a state to its MPS approximation \eqref{eq:phi-phi<e}. Here we do not have an approximation of the state itself.} 
We choose the right hand side of \eqref{eq:E'-E<} to scale as $N^{-3/2}$. With this bound and a similar argument of the sampling-to-computing reduction around \eqref{eq:marg-marg<}, we know the measured probability distribution $p_{\bm x}$ is close to the ideal one (with $E_{\bm{x}}$) with total variational distance $\order(1/\sqrt{N})$. This error bound also holds for observables, and is thus absorbed in the last term in \eqref{eq:P-P=f}. 

The classical computation of the following quantities are $\poly{N}$:

(1) $f(\tilde{\omega})$ for any $\tilde{\omega}\in \mathcal{I}_{\omega'}$: The proof of part (\emph{1}) of Theorem \ref{thm:metro1d} still applies, as it only relies on a Lieb-Robinson bound, which holds for general graphs \cite{review_us23}.

(2) The local basis by Procedure \ref{proc:Ex_clus} for a given trajectory $\bm x$: This is guaranteed by Proposition \ref{prop:Ex}.

(3) $\braket{\bf P}'$: This is established by \eqref{eq:overlap'>cm} and the easiness of sampling in Theorem \ref{thm:samp1}.
\end{proof}

\section{Robustness against small perturbations}\label{sec:preth}

In previous sections, we did not require weak interaction \begin{equation}\label{eq:J<<w}
    J\ll \omega.
\end{equation}
If \eqref{eq:J<<w} holds, the $Z$ field is dominant in $H_\omega$, so \eqref{eq:Z0-Z1=cin} implies the two parts $\phi^0_\omega$ and $\phi^1_\omega$ have a large energy difference with respect to $H_\omega$. By energy conservation, the two parts keep an extensive difference in $Z$ polarization \emph{forever}. Formalizing this idea, we show that the HL is robust to much longer time scales, compared to \red{$Jt< c_{\rm in}$} in previous sections (see Theorem \ref{thm:phase}).

\begin{thm}\label{thm:weak_perturb}
Suppose \begin{equation}
    2J < c_{\rm in}\omega.
\end{equation}
Then if \eqref{eq:Z0-Z1=cin} and \eqref{eq:initial_cor} are satisfied by the initial state, \eqref{eq:partial_phi} and \eqref{eq:psi'_omega} hold for any constant time $t=\order(N^0)$ that does not scale with $N$, with \begin{equation}\label{eq:partial_phi_weak}
    c_\omega\ge 
    t \lr{ c_{\rm in} - 2J/\omega }, \quad \mathrm{and} \quad 2\lr{ c_{\rm in} - 2J/\omega } tN\le \partial_\omega f(\omega)\le 2tN.
\end{equation}
\end{thm}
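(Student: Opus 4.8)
The plan is to reproduce the proof of Theorem~\ref{thm:phase} essentially verbatim, changing only one ingredient: the bound $\norm{\overline{Z}-Z}\le NJt$ of \eqref{eq:Z-Z<}, which grows linearly in $t$ and is precisely what forces $t<c_{\rm in}/J$, will be replaced by a bound on the $Z$-polarization deviation that is \emph{uniform in time} and follows from energy conservation. Everything else in the argument — the identity \eqref{eq:partial=Z} for $\ii\braket{\phi^\alpha_\omega}{\partial_\omega\phi^\alpha_\omega}=t\braket{\overline{Z}}_\alpha$, the orthogonality (rotation) estimate \eqref{eq:partial_phi<N}, and the final integration over $\omega$ — will carry over unchanged for any $N$-independent $t$.

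The key step is the following. Since $H_\omega=V+\omega Z$ is time-independent, the energy of each evolved component $\ket{\phi^\alpha_{\omega,s}}:=\ee^{-\ii sH_\omega}\ket{\alpha}_{\rm in}$ is conserved. Writing $Z_\alpha(s):=\braket{\ee^{\ii sH_\omega}Z\ee^{-\ii sH_\omega}}_\alpha$ so that $Z_\alpha(0)=\braket{Z}_\alpha$, conservation of $\braket{V}+\omega\braket{Z}$ gives
\[ Z_\alpha(s)-\braket{Z}_\alpha=\frac{1}{\omega}\lr{\braket{V}_\alpha-\braket{V}_{\phi^\alpha_{\omega,s}}}, \]
so that $\abs{Z_\alpha(s)-\braket{Z}_\alpha}\le 2\norm{V}/\omega$. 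The crude bound $\norm{V}\le\sum_S\norm{V_S}\le NJ$ (which follows from \eqref{eq:J=}, since each $S$ contains at least one qubit and hence $\sum_S\norm{V_S}\le\sum_j\sum_{S\ni j}\norm{V_S}\le NJ$) then yields $\abs{Z_\alpha(s)-\braket{Z}_\alpha}\le 2NJ/\omega$ \emph{independently of $s$}. Time-averaging gives $\abs{\braket{\overline{Z}}_\alpha-\braket{Z}_\alpha}\le 2NJ/\omega$, whence, using \eqref{eq:Z0-Z1=cin},
\[ \braket{\overline{Z}}_0-\braket{\overline{Z}}_1\ge 2c_{\rm in}N-\frac{4NJ}{\omega}=2N\lr{c_{\rm in}-2J/\omega}, \]
which is positive precisely under the hypothesis $2J<c_{\rm in}\omega$. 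Substituting this into \eqref{eq:partial=Z} in place of \eqref{eq:Z-Z<} upgrades Lemma~\ref{prop:extensive_phase} to $c_\omega\ge t\lr{c_{\rm in}-2J/\omega}$, now valid for all $t$.

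For the remaining steps, the orthogonal part of $\partial_\omega\ket{\phi^\alpha_\omega}$ is still controlled by \eqref{eq:partial_phi<N}: for any fixed $t=\order(N^0)$ the Lieb-Robinson bound \eqref{eq:ZiZj<} together with the initial correlation decay \eqref{eq:initial_cor} makes $\sum_j$ of the connected correlation converge to an $N$-independent constant $g(Jt)$ — finite because $vt$ is a constant and condition \eqref{eq:xi<K} guarantees the geometric sum over the graph converges — so this contribution remains $\order(\sqrt{N})$. Combining this with the new $c_\omega$ yields \eqref{eq:partial_phi}, and integrating over $\omega\in\mathcal{I}_{\omega'}$, an interval of width $\order(1/N)$ by \eqref{eq:w-w'}, turns the $\order(\sqrt{N})$ error into the $\order(1/\sqrt{N})$ term of \eqref{eq:psi'_omega} and propagates the lower bound on $c_\omega$ into the slope bound \eqref{eq:partial_phi_weak}. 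The upper bound $\partial_\omega f\le 2tN$ is inherited unchanged, since $\norm{\overline{Z}}\le N$ still holds by \eqref{eq:cZ<N}.

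The only real content beyond Theorem~\ref{thm:phase} is the energy-conservation estimate, and the main point to verify carefully is its \emph{uniformity in $s$}: the $Z$-polarization deviation is capped at $2\norm{V}/\omega$ at all times rather than accumulating as $NJt$, and this cap is small compared to the extensive gap $c_{\rm in}N$ exactly when $J\ll\omega$. A secondary subtlety worth flagging is that the prefactor of the $\order(\sqrt{N})$ correction depends on $t$ through $g(Jt)$, which may grow with $t$; hence the statement is to be read for each fixed, $N$-independent $t$ with $N$ large, and not uniformly in the limit $t\to\infty$.
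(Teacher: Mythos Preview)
Your proposal is correct and follows essentially the same approach as the paper. The only stylistic difference is that the paper bounds the operator norm $\norm{\ee^{\ii sH_\omega}Z\ee^{-\ii sH_\omega}-Z}=\omega^{-1}\norm{V-\ee^{\ii sH_\omega}V\ee^{-\ii sH_\omega}}\le 2NJ/\omega$ directly and then feeds this into the expectation-value chain \eqref{eq:>N-Z-Z}, whereas you apply energy conservation at the level of expectation values in each state $\ket{\alpha}_{\rm in}$ separately; both routes yield the identical uniform-in-time bound $2NJ/\omega$ and the rest of the argument is the same.
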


\begin{proof}
By energy conservation $\ee^{\ii s H_\omega} (\omega Z + V) \ee^{-\ii s H_\omega} = \omega Z + V$, \begin{equation}\label{eq:Z-Z<2NJ}
    \norm{\ee^{\ii s H_\omega} Z \ee^{-\ii s H_\omega} - Z} = \norm{\frac{1}{\omega}\lr{V-\ee^{\ii s H_\omega} V \ee^{-\ii s H_\omega}} } \le \frac{2}{\omega} \norm{V} \le 2NJ/\omega.
\end{equation}

Combining \eqref{eq:Z-Z<2NJ} with \eqref{eq:>N-Z-Z} and the first line of \eqref{eq:Z-Z<} yields \begin{equation}\label{eq:phase_weak}
    \ii \lr{ \bra{\phi^0_\omega}\partial_\omega \ket{\phi^0_\omega}-\bra{\phi^1_\omega}\partial_\omega \ket{\phi^1_\omega}} \ge 2\lr{c_{\rm in}- 2J/\omega }tN.
\end{equation}
Substituting \eqref{eq:0-1>N} by \eqref{eq:phase_weak}, the rest of the proof of Theorem \ref{thm:weak_perturb} follows verbatim to establish Theorem \ref{thm:weak_perturb}. In particular, the errors $\order(\sqrt{N})$ in \eqref{eq:partial_phi} and $\order(1/\sqrt{N})$ in \eqref{eq:psi'_omega} come from the fact that $g(Jt)$ in \eqref{eq:cor<g} does not scale with $N$, which holds for any constant $t=\order(N^0)$. 
\end{proof}

Theorem \ref{thm:weak_perturb} implies HL for QFI following Corollary \ref{cor:QFI}. In terms of achieving HL by the LOCC measurement protocol, 1d results in Section \ref{sec:1d} continue to hold here for any constant $t$ independent of $N$, because the two parts of the state remain to be MPS with $\poly{N}$ bond dimensions. However, efficiency of the protocol in higher dimensions is not guaranteed if $Jt\gtrsim 1$.

%\subsection{Prethermalization}
In the proof of Theorem \ref{thm:weak_perturb}, we used bound \eqref{eq:Z-Z<2NJ} that simply comes from energy conservation. If the system is finite-dimensional and \eqref{eq:J<<w} holds, there is actually another mechanism called prethermalization \cite{preth_rigor17,preth_exp_Wei,preth_exp_Peng,preth_long20} that yields a similar bound \eqref{eq:Z-Z<preth} below. In a nutshell, although the $\mathrm{U}(1)$ rotational symmetry along the $z$ direction is explicitly broken by the interaction $V$, prethermalization theory proves that there is an approximately conserved $\mathrm{U}(1)$-charge (which is a dressed version of $Z$) on time scales  $t \ll t_{\rm pre}$, where the prethermalization time (after which the system thermalizes with no conserved $\mathrm{U}(1)$-charge) \begin{equation}\label{eq:tpre}
    t_{\rm pre}:= J^{-1}\exp(c_{\rm pre}\, \omega/J),
\end{equation}
which is exponentially large in $\omega/J\gg 1$. 
Here $c_{\rm pre}$ is a constant independent of $N,\omega,J$.

Physically, prethermalization roughly says that an initial eigenstate of $Z$ cannot decay into other charge sectors in short time. This is because the other charge sectors have at least $\Delta E =\omega$ energy difference in time-dependent perturbation theory of $J/\omega$, so at any finite order the decay channel is off-resonant and ineffective. To overcome this energy gap by gaining energy from the perturbation $V$, one has to go to non-perturbatively large order $k_*\sim \omega/J$, where the local decay rate is exponentially small in $k_*$. This leads to the large prethermalization time \eqref{eq:tpre} for $H_\omega$. For more general class of models with a many-body gap, this dynamical stability under perturbation is made precise in \cite{our_preth22}, albeit with a slightly weaker bound on $t_{\rm pre}$.

Applying prethermalization to the metrology setting, we have
\begin{equation}\label{eq:Z-Z<preth}
    \norm{\ee^{\ii s H_\omega} Z \ee^{-\ii s H_\omega}-Z}\le c_{\rm pre}^Z NJ/\omega, \quad \forall s \le t_{\rm pre},
\end{equation}
from combining Theorem 3.1 and 3.2 in \cite{preth_rigor17}, where
the constant $c_{\rm pre}^Z$ is determined by geometry of the interaction graph. If $c_{\rm pre}^Z\le 2$, \eqref{eq:Z-Z<preth} would be a stronger bound than \eqref{eq:Z-Z<2NJ} in the prethermalization time window, which further gives a tighter bound than \eqref{eq:partial_phi_weak} on the slope of $f(\omega)$.

In higher dimensions, it is interesting to ask whether an evolved product state is easy to sample before prethermalization time $t_{\rm pre}$. This might be tackled by combining the Schrieffer-Wolff transformation involved in prethermalization \cite{our_preth22} and the cluster expansion techniques in Section \ref{sec:samp}. If such a polynomial-time classical sampler exists, we expect the LOCC measurement protocol would be efficient to implement in higher dimensions, for $t<t_{\rm pre}$.

\section{Outlook}

In this paper, we proved the robustness of the Heisenberg limit: starting with a GHZ state, unwanted (but known) interactions do not spoil our ability to perform Heisenberg-limited metrology.  We moreover presented a provably efficient LOCC protocol that requires only polynomial classical computational resources to achieve the HL. 

It will be interesting to see whether our results generalize to (\emph{i}) other paradigms of quantum metrology, e.g., using spin-squeezed states \cite{squeez_93,squeez_rev11} that do not reach the HL generated by 
 microscopic Hamiltonians; (\emph{ii}) long-range interaction $V$ \cite{squeez_Ising16,squeez_Rey20,squeez_Norm23}; (\emph{iii}) multi-parameter quantum metrology \cite{metro_multip_rev16}. Since GHZ states have been created in experiments up to $\sim 20$ qubits \cite{GHZ_photon18,GHZ_Wei19,GHZ_SC19,GHZ_Rydberg19,GHZ_ion21}, it is possible to demonstrate our protocol in near-term quantum devices. 

Part of our result relied on a polynomial-time classical algorithm to sample from a short-time evolved state, assuming the measurement basis is far from the computational basis. We leave as future work whether one can get rid of this assumption on the sampling basis. Our sampling algorithm may also be useful as a subroutine for other applications in classical/quantum computation: for example, it has been proposed to use real-time evolution to study quantum thermal states at finite temperature \cite{hybrid_T21,hybrid_T23,hybrid_T23_1}.

\section*{Acknowledgements}
We thank Alexander Schuckert, Sisi Zhou, Jacob Bringewatt and Daniel J. Spencer for valuable comments.
This work was supported by the Alfred P. Sloan Foundation under Grant FG-2020-13795 (AL) and by the U.S. Air Force Office of Scientific Research under Grant FA9550-21-1-0195 (CY, AL).

\begin{appendix}

\section{Failure of the ideal protocol under perturbation}\label{app:A}

\begin{prop}\label{prop:X_fail}
    Suppose $V=JX=J\sum_j X_j$, and the state \eqref{eq:psiw} after evolution is projected onto the $x$-basis. Let $p_{\bm{x}}$ be the probability to get measurement outcome $\bm{x}$. Define another probability distribution $\tilde{p}_{\bm{x}}$ from performing the same metrology protocol but on a mixed initial state \begin{equation}\label{eq:trho=}
        \tilde{\rho}_{\rm in}:= \frac{1}{2} \lr{\ket{0\cdots 0}\bra{0\cdots 0} + \ket{1\cdots 1}\bra{1\cdots 1}}.
    \end{equation} 
    Then, for generic $t\neq n\pi/\sqrt{\omega^2+J^2}$ with integer $n$, $p_{\bm{x}}$ is exponentially close to $\tilde{p}_{\bm{x}}$ in total variation distance \begin{equation}\label{eq:P-tP<}
        \norm{p-\tilde{p}}_1:= \sum_{\bm{x}} \abs{p_{\bm{x}} - \tilde{p}_{\bm{x}} } = \exp\mlr{-\Omega(N)}.
    \end{equation}
\end{prop}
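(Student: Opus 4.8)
The plan is to exploit the fact that $V=JX$ is a sum of single-qubit terms, so that $H_\omega = \sum_j (J X_j + \omega Z_j)$ generates a fully factorized evolution $\ee^{-\ii t H_\omega} = U^{\otimes N}$ with the single-qubit unitary $U = \ee^{-\ii t(JX+\omega Z)}$. Since the $x$-basis readout is also a product measurement, every amplitude factorizes over qubits and the whole problem collapses to a single-qubit computation. First I would introduce the single-qubit amplitudes $a_\pm := \bra{\pm}U\ket{0}$ and $b_\pm := \bra{\pm}U\ket{1}$, where $\ket{\pm}$ are the $x$-eigenstates, and record the outcome string as $\bm{x}\in\{+,-\}^N$.

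With this notation the GHZ amplitude factorizes, $\bra{\bm x} U^{\otimes N}\ket{\GHZ} = \frac{1}{\sqrt2}\lr{\prod_j a_{x_j} + \prod_j b_{x_j}}$, so that $p_{\bm x} = \frac12 \abs{\prod_j a_{x_j} + \prod_j b_{x_j}}^2$, whereas the dephased distribution simply drops the cross term, $\tilde p_{\bm x} = \frac12\lr{\abs{\prod_j a_{x_j}}^2 + \abs{\prod_j b_{x_j}}^2}$. The key identity is then $p_{\bm x} - \tilde p_{\bm x} = \mathrm{Re}\mlr{\prod_j a_{x_j}\bar b_{x_j}}$: the difference is exactly the surviving coherence between the two GHZ branches. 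Using $\abs{\mathrm{Re}[z]}\le\abs{z}$ and the fact that the sum over $\bm x$ factorizes qubit-by-qubit, I obtain $\norm{p-\tilde p}_1 \le \lr{\sum_{x=\pm}\abs{a_x}\abs{b_x}}^N =: S^N$, reducing everything to showing the one-qubit quantity $S$ is strictly below $1$.

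It then remains to evaluate $S = \abs{a_+}\abs{b_+} + \abs{a_-}\abs{b_-}$. Parametrizing $U = \cos(\varpi t)\, I - \ii\sin(\varpi t)\,\hat n\cdot\vec\sigma$ with $\varpi := \sqrt{\omega^2+J^2}$ and $\hat n = (J,0,\omega)/\varpi$, a short calculation gives the symmetry $\abs{a_+}=\abs{b_-}$, $\abs{a_-}=\abs{b_+}$, whence $S = 2\abs{a_+}\abs{a_-}$ and, after simplification using $n_x^2+n_z^2=1$,
\[
    S^2 = 1 - \frac{4J^2\omega^2}{(\omega^2+J^2)^2}\,\sin^4(\varpi t).
\]
This is strictly less than $1$ precisely when $\sin(\varpi t)\neq 0$, i.e.\ for all $t\neq n\pi/\sqrt{\omega^2+J^2}$, which is exactly the genericity condition in the statement (and requires $J,\omega\neq0$). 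Therefore $\norm{p-\tilde p}_1 \le S^N = \exp[-\Omega(N)]$ with rate $-\tfrac12\log S^2 > 0$.

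No single step is a serious obstacle once the factorization is in place: the only care needed is verifying the symmetry $\abs{a_+}=\abs{b_-}$, $\abs{a_-}=\abs{b_+}$ and the algebraic collapse to the clean closed form for $S^2$. The conceptual heart — and the physical content — is the identity $p_{\bm x}-\tilde p_{\bm x}=\mathrm{Re}[\prod_j a_{x_j}\bar b_{x_j}]$, which isolates the GHZ coherence as the \emph{only} difference between the pure and fully-dephased initial states, and shows it is suppressed multiplicatively by a per-qubit factor $S<1$ under $x$-basis readout whenever the perturbation is present at a generic time. This is what forces the naive protocol down to the SQL.
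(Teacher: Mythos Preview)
Your proof is correct and follows essentially the same approach as the paper: exploit the product structure $\ee^{-\ii tH_\omega}=U^{\otimes N}$, write $p_{\bm x}-\tilde p_{\bm x}$ as the cross term between the two evolved branches, bound the total variation distance by the $N$-th power of a single-qubit quantity, and show that quantity is strictly below $1$ for generic $t$. Your $S=\abs{a_+}\abs{b_+}+\abs{a_-}\abs{b_-}$ coincides with the paper's $\mathsf{f}(\eta_0)=2\eta_0\sqrt{1-\eta_0^2}$ once the symmetry $\abs{a_+}=\abs{b_-}$, $\abs{a_-}=\abs{b_+}$ is used; the only minor addition is that you push through to the explicit closed form $S^2=1-\tfrac{4J^2\omega^2}{(\omega^2+J^2)^2}\sin^4(\varpi t)$, which makes the genericity condition $t\neq n\pi/\varpi$ (and the implicit requirement $J,\omega\neq 0$) manifest rather than relying on the qualitative observation that $\eta_0\neq 1/\sqrt{2}$.
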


We will prove this Proposition shortly.
Assuming the number of repeated measurements $M$ is at most polynomial in $N$ (i.e., number of samples of $\bm{x}$ is polynomial),
\eqref{eq:P-tP<} implies that one cannot distinguish the two distributions. The true experiment is done as if the initial state is replaced by the mixed state \eqref{eq:trho=}, where the two parts evolve incoherently and the extensive phase difference is lost. Thus the $x$-basis measurement cannot perform HL and returns to SQL, at least for generic $t$. At the specific integer values of $t$, the state returns to the inital GHZ state; for more general $V$ that couples spins, we expect SQL for any $t>0$ due to the absence of such many-body recurrence.

\begin{proof}[Proof of Proposition \ref{prop:X_fail}]
For the chosen $V$, $H_\omega = \omega Z+JX$ acts on the spins individually. Each spin just rotates along a tilted magnetic field with angular frequency $2\sqrt{\omega^2+J^2}$, so \begin{equation}\label{eq:varphi}
    \ket{\psi_\omega} = \frac{1}{\sqrt{2}} \lr{\ket{\varphi_0}^{\otimes N} + \ket{\varphi_1}^{\otimes N} },
\end{equation} 
where $\{\ket{\varphi_0}, \ket{\varphi_1}\}$ is an instantaneous local basis for a single spin. Note that the relative phase difference between the two terms in \eqref{eq:varphi} is absorbed in the local basis. Then \begin{align}\label{eq:P=tP+}
    p_{\bm{x}} &= \abs{\braket{\bm{x}|\psi_\omega}}^2 = \frac{1}{2} \mlr{ \braket{ \varphi_0^{\otimes N}| \bm{x} } \braket{\bm{x}| \varphi_0^{\otimes N} } + \braket{ \varphi_1^{\otimes N}| \bm{x} } \braket{\bm{x}| \varphi_1^{\otimes N} } + \lr{ \braket{ \varphi_0^{\otimes N}| \bm{x} } \braket{\bm{x}| \varphi_1^{\otimes N} } + \mathrm{c.c.} } } \nonumber\\
    &= \tilde{p}_{\bm{x}} + \frac{1}{2} \lr{ \braket{ \varphi_0^{\otimes N}| \bm{x} } \braket{\bm{x}| \varphi_1^{\otimes N} } + \mathrm{c.c.} },
\end{align}
because $\tilde{p}_{\bm{x}}$ is from measuring state $\lr{\ket{\varphi_0}\bra{\varphi_0}^{\otimes N} + \ket{\varphi_1}\bra{\varphi_1}^{\otimes N} }/2$.
Bounding the second term in \eqref{eq:P=tP+} involves \begin{equation}
    \abs{\braket{\bm{x}|\varphi_\alpha^{\otimes N} } } = \abs{ \braket{+|\varphi_\alpha}^{N_+} \braket{-|\varphi_\alpha}^{N_-} } = \eta_\alpha^{N_+} \lr{\sqrt{1-\eta_\alpha^2}}^{N_-},
\end{equation}
where $\{\ket{+},\ket{-}\}$ is the local $x$-basis, and $N_+$ ($N-=N-N_+$) is the number of $+$ ($-$) in string $\bm{x}$. We have defined $\eta_\alpha=\abs{\braket{+|\varphi_\alpha}}$ so that $\abs{\braket{-|\varphi_\alpha}}=\sqrt{1-\abs{\braket{+|\varphi_\alpha}}^2}=\sqrt{1-\eta_\alpha^2}$. Similarly the two $\eta$s are related by $\eta_1=\sqrt{1-\eta_0^2}$. As a result, \eqref{eq:P=tP+} implies \begin{equation}
    \sum_{\bm{x}} \abs{p_{\bm{x}} - \tilde{p}_{\bm{x}} } \le \sum_{\bm{x}} \eta_0^{N_+} \lr{\sqrt{1-\eta_0^2}}^{N_-} \eta_1^{N_+} \lr{\sqrt{1-\eta_1^2}}^{N_-} = \sum_{\bm{x}} \lr{\eta_0\sqrt{1-\eta_0^2}}^N = \lr{\mathsf{f}(\eta_0)}^N,
\end{equation}
where $\mathsf{f}(\eta_0):=2\eta_0\sqrt{1-\eta_0^2}\le 1$, with equality only at $\eta_0=1/\sqrt{2}$. For generic $t\neq n\pi/\sqrt{\omega^2+J^2}$, an initial $\ket{0}$ is rotated to point towards $\ket{+}$ more (comparing to $\ket{-}$), so $\eta_0=\abs{\braket{+|\varphi_0}}>1/\sqrt{2}$, which leads to \eqref{eq:P-tP<}.
\end{proof}

\section{Measurement protocol based on undoing the perturbation} \label{app:B}

In the main text,
Section \ref{sec:QFI} establishes robustness of HL at the level of QFI, while the later sections develop an LOCC measurement protocol achieving HL that, under certain conditions, is efficient to implement. In this Appendix, we provide another such protocol, which requires quantum controls beyond single-qubit that effectively reverses the evolution caused by interaction $V$. For simplicity we assume the initial state is the ideal GHZ state.

Recall that without $V$, the measurement is simply done for observable $\fX$, the product of $X_j$s in \eqref{eq:Xid}. We try to deform the observable according to $V$. \begin{align}\label{eq:cos=Xw}
    \cos(2N\omega t) &=\alr{\fX}_{\id} = \bra{\GHZ} \ee^{\ii t \omega Z}\fX \ee^{-\ii t \omega Z} \ket{\GHZ} = \bra{\GHZ}\ee^{\ii t H_\omega}\ee^{-\ii t H_\omega} \ee^{\ii t \omega Z}\fX \ee^{-\ii t \omega Z}\ee^{\ii t H_\omega}\ee^{-\ii t H_\omega} \ket{\GHZ} \nonumber\\ &= \bra{\psi_\omega} \fX_\omega \ket{\psi_\omega},
\end{align}
where \begin{equation}\label{eq:Xw}
    \fX_\omega=R_\omega^\dagger\fX R_\omega = \prod_j X_{j,\omega}, \quad R_\omega := \ee^{-\ii t \omega Z}\ee^{\ii t H_\omega}
\end{equation}
is a product of almost local operators, because $X_{j,\omega}=R_\omega^\dagger X_j R_\omega$ evolved by unitary $R_\omega$ is mostly supported in a finite set of spins near site $j$. If one measures observable $\fX_\omega$ in the final state $\ket{\psi_\omega}$, then it is equivalent to measure $\fX$ in the unperturbed protocol. In practice, one can apply unitary $R_\omega$ to $\ket{\psi_\omega}$ to effectively time-reverse the dynamics, and then measure $\fX$ by projective measurement to the $x$-basis.

However, the above measurement does only local estimation because $\fX_\omega$ depends on the unknown parameter $\omega$. For global estimation, instead, we propose to measure observable $\fX_{\omega'}$ determined by prior knowledge $\omega'$ \eqref{eq:w'=}. By bounding the difference between $\fX_{\omega'}$ and $\fX_\omega$, we show such measurement achieves HL for \red{sufficiently small $Jt$}.
\begin{prop}
The measurement observable $\bra{\psi_\omega} \fX_{\omega'} \ket{\psi_\omega}$ satisfies
    \begin{equation}\label{eq:X'-X}
        \abs{\bra{\psi_\omega} \fX_{\omega'} \ket{\psi_\omega} - \cos(2N\omega t)}\le \norm{\fX_{\omega'} - \fX_\omega } \le \pi Jt/2.
    \end{equation}
    Moreover, \begin{equation}\label{eq:partial_Xw'}
        \abs{\partial_\omega \bra{\psi_\omega} \fX_{\omega'} \ket{\psi_\omega} } \ge Nt \mlr{ 2\abs{\sin(2N\omega t)} - (\pi+2) J t} ,
    \end{equation}
    which leads to HL for \begin{equation}\label{eq:t<sinJ}
        t<2\abs{\sin(2N\omega t)}/(\pi+2) J,
    \end{equation}
    because the denominator in \eqref{eq:HL} is \red{$\abs{\partial_\omega \bra{\psi_\omega} \fX_{\omega'} \ket{\psi_\omega} }^2 =\Theta( N^2) $}. 
\end{prop}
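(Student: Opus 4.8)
The plan is to treat the two displayed bounds separately, reducing everything to a single operator-norm estimate on $\partial_\omega\fX_\omega$. For the chain \eqref{eq:X'-X}, the first inequality is immediate: using the identity $\bra{\psi_\omega}\fX_\omega\ket{\psi_\omega}=\cos(2N\omega t)$ from \eqref{eq:cos=Xw} and the fact that $\ket{\psi_\omega}$ is normalized,
\begin{equation}
\abs{\bra{\psi_\omega}\fX_{\omega'}\ket{\psi_\omega}-\cos(2N\omega t)}=\abs{\bra{\psi_\omega}\lr{\fX_{\omega'}-\fX_\omega}\ket{\psi_\omega}}\le\norm{\fX_{\omega'}-\fX_\omega}.
\end{equation}
The second inequality then follows from $\norm{\fX_{\omega'}-\fX_\omega}\le\abs{\omega-\omega'}\sup_{\tilde\omega\in\mathcal{I}_{\omega'}}\norm{\partial_{\tilde\omega}\fX_{\tilde\omega}}$ together with the prior-knowledge bound $\abs{\omega-\omega'}\le\pi/(4Nt)$ from \eqref{eq:w-w'}, provided I can show $\norm{\partial_\omega\fX_\omega}\le 2NJt^2$.

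To establish this, I would write $G_\omega:=\lr{\partial_\omega R_\omega}R_\omega^\dagger$, which is anti-Hermitian since $R_\omega$ is unitary, and verify the algebraic identity $\partial_\omega\fX_\omega=R_\omega^\dagger[\fX,G_\omega]R_\omega$, so that $\norm{\partial_\omega\fX_\omega}=\norm{[\fX,G_\omega]}\le 2\norm{G_\omega}$ using $\norm{\fX}=1$. The key computation is then $\norm{G_\omega}$. Differentiating $R_\omega=\ee^{-\ii t\omega Z}\ee^{\ii tH_\omega}$ and using $\partial_\omega\ee^{\ii tH_\omega}=\ii t\,\overline{Z}\,\ee^{\ii tH_\omega}$ with the same time-averaged $\overline{Z}$ as in \eqref{eq:cZ=Z}, the two contributions combine into
\begin{equation}
G_\omega=\ii t\,\ee^{-\ii t\omega Z}\lr{\overline{Z}-Z}\ee^{\ii t\omega Z},\qquad\norm{G_\omega}=t\norm{\overline{Z}-Z}\le NJt^2,
\end{equation}
where the last bound is exactly \eqref{eq:Z-Z<}. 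This gives $\norm{\partial_\omega\fX_\omega}\le 2NJt^2$ and hence $\norm{\fX_{\omega'}-\fX_\omega}\le\pi Jt/2$.

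For the slope bound \eqref{eq:partial_Xw'}, I would split the measured quantity as $\bra{\psi_\omega}\fX_{\omega'}\ket{\psi_\omega}=\cos(2N\omega t)+h(\omega)$ with $h(\omega):=\bra{\psi_\omega}\lr{\fX_{\omega'}-\fX_\omega}\ket{\psi_\omega}$, so that $\abs{\partial_\omega\bra{\psi_\omega}\fX_{\omega'}\ket{\psi_\omega}}\ge 2Nt\abs{\sin(2N\omega t)}-\abs{\partial_\omega h}$. The correction $\partial_\omega h$ has three pieces: two where the derivative hits $\ket{\psi_\omega}$, each bounded by $\norm{\partial_\omega\psi_\omega}\,\norm{\fX_{\omega'}-\fX_\omega}\le(tN)(\pi Jt/2)$ using $\norm{\partial_\omega\psi_\omega}\le tN$ from \eqref{eq:cZ<N}, and one where it hits $\fX_{\omega'}-\fX_\omega$, giving $-\bra{\psi_\omega}\partial_\omega\fX_\omega\ket{\psi_\omega}$ of magnitude $\le 2NJt^2$. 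Summing gives $\abs{\partial_\omega h}\le(\pi+2)NJt^2$, and combining with the $\sin$ term yields precisely \eqref{eq:partial_Xw'}. Finally, HL follows by plugging into \eqref{eq:HL}: since $\fX_{\omega'}$ is involutory, $\fX_{\omega'}^2=R_{\omega'}^\dagger\fX^2 R_{\omega'}=1$, so $\lr{\Delta\fX_{\omega'}}^2\le 1$, while the denominator $\abs{\partial_\omega\bra{\psi_\omega}\fX_{\omega'}\ket{\psi_\omega}}^2=\Theta(N^2)$ whenever $2\abs{\sin(2N\omega t)}-(\pi+2)Jt>0$, i.e. for $t$ in the window \eqref{eq:t<sinJ}.

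The main obstacle is the bound $\norm{\partial_\omega\fX_\omega}\le 2NJt^2$, and specifically the cancellation inside $G_\omega$. Naively $\partial_\omega R_\omega$ contains the term $-\ii tZ\,R_\omega$ of norm $\sim Nt$, which alone would give a useless estimate larger by a factor $\sim 1/(Jt)$; the point is that this piece is a global $Z$-rotation that is exactly cancelled by the corresponding piece of $\partial_\omega\ee^{\ii tH_\omega}$, leaving only $\overline{Z}-Z$, which is small because $Z$ is nearly conserved over time $t$ (the commutator $[V,Z]$ being the only source, as in \eqref{eq:Z-Z<}). Making this cancellation manifest — either by the direct differentiation above or by passing to the interaction picture, where $R_\omega$ becomes the time-ordered evolution generated by $V$ alone and the $\omega$-dependence enters only through $\partial_\omega V(s)\sim s\,[Z,V]$ — is the crux of the argument; everything else is triangle inequalities and reuse of bounds already proven in Section \ref{sec:QFI}.
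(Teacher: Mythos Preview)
Your proof is correct and reaches exactly the same numerical bounds as the paper. The only genuine difference is in how you establish the key estimate $\norm{\fX_{\omega'}-\fX_\omega}\le 2NJt^2\abs{\omega-\omega'}$: the paper passes to the interaction picture, writing $R_\omega=\mathcal{T}\ee^{\ii\int_0^t V(s)\,\dd s}$ so that the $\omega$-dependence sits only in $V(s)$, and then applies Duhamel twice; you instead differentiate $R_\omega$ directly, observe the cancellation $G_\omega=\ii t\,\ee^{-\ii t\omega Z}(\overline{Z}-Z)\ee^{\ii t\omega Z}$, and invoke the already-proven bound \eqref{eq:Z-Z<} on $\norm{\overline{Z}-Z}$. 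Both routes express the same physical fact (the naive $\order(Nt)$ contribution from $Z$ cancels, leaving only the commutator $[V,Z]$), and you explicitly note the equivalence in your final paragraph. Your version is arguably more economical since it recycles \eqref{eq:Z-Z<} rather than redoing the Duhamel estimate from scratch; the paper's version has the minor advantage of producing the finite-difference bound directly without a mean-value step. For the slope bound \eqref{eq:partial_Xw'}, your argument is essentially identical to the paper's.
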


\begin{proof}
%where $\ket{\psi_\id} = \ee^{-\ii t \omega Z}\ket{\GHZ}, \ket{\psi'_\omega}=\ee^{-\ii t \omega' Z}\ee^{\ii t H_{\omega'}}\ee^{-\ii t H_{\omega}} \ket{\GHZ}$. 
From \eqref{eq:Xw} and $\norm{\fX}=1$, we have \begin{align}\label{eq:X'-X<}
    \norm{\fX_{\omega'} - \fX_\omega } &\le 2\norm{\ee^{-\ii t \omega' Z}\ee^{\ii t H_{\omega'}} - \ee^{-\ii t \omega Z}\ee^{\ii t H_{\omega}} } = 2\norm{ \mathcal{T} \ee^{\ii \int^t_0 \dd s V'(s)} - \mathcal{T} \ee^{\ii \int^t_0 \dd s V(s)} }\nonumber\\
    &= 2\norm{ \mathcal{T}\int^t_0 \dd s\, \ee^{\ii \int^t_s \dd s' V'(s')} [V'(s)-V(s)]\ee^{\ii \int^s_0 \dd s' V(s')} } \le 2\int^t_0\dd s\, \norm{V'(s)-V(s)} \nonumber\\
    &\le 2\int^t_0\dd s\, \int^s_0\dd s'\, \norm{[(\omega'-\omega)Z, V(s')]} = 2\int^t_0\dd s\, \int^s_0\dd s'\, \abs{\omega'-\omega} \norm{[Z,V]} \nonumber\\
    &\le 2t^2 \abs{\omega'-\omega} NJ.
\end{align}
In the first line we have used the interaction picture \begin{equation}
    \ee^{-\ii t \omega Z}\ee^{\ii t (V+\omega Z)} = \mathcal{T} \ee^{\ii \int^t_0 \dd s V(s)}, \where V(s) = \ee^{-\ii s \omega Z}V\ee^{\ii s \omega Z},
\end{equation}
and $V'(s)$ is defined similarly with $\omega$ replaced by $\omega'$. In the second and third lines of \eqref{eq:X'-X<}, we have used the Duhamel identity. \eqref{eq:X'-X} then follows from \eqref{eq:w-w'}.

To get \eqref{eq:partial_Xw'}, \begin{align}
    \abs{\partial_\omega \bra{\psi_\omega} \fX_{\omega'} \ket{\psi_\omega}}
    &= \abs{\partial_\omega \bra{\psi_\omega} \fX_{\omega} \ket{\psi_\omega} + \partial_\omega\bra{\psi_\omega} \fX_{\omega'}-\fX_{\omega} \ket{\psi_\omega} } \nonumber\\
    &\ge \abs{\partial_\omega \bra{\psi_\omega} \fX_{\omega} \ket{\psi_\omega} } - \norm{\partial_\omega \lr{ \fX_{\omega'}-\fX_{\omega} } } -2 \norm{\fX_{\omega'} - \fX_\omega} \norm{\ket{\partial_\omega \psi_\omega}} \nonumber\\
    &\ge 2Nt\abs{\sin(2N\omega t)} - 2t^2NJ- \pi Jt \cdot tN = Nt \mlr{ 2\abs{\sin(2N\omega t)} - (\pi+2) J t}.
\end{align}
Here in the last line, we have used the Lipschitz property \eqref{eq:X'-X<} that holds for any $\omega,\omega'$ to bound $\norm{\partial_\omega \lr{ \fX_{\omega'}-\fX_{\omega} } }= \norm{\partial_\omega \fX_{\omega} } \le 2t^2NJ$, and \eqref{eq:X'-X} together with $\norm{\ket{\partial_\omega \psi_\omega}}\le tN$ from \eqref{eq:cZ=Z} for the last term.
\end{proof}

The time window \eqref{eq:t<sinJ} is $\order(J^{-1})$ for almost all $\omega\in \mathcal{I}_{\omega'}$ except if $\omega$ is very close to $\omega'\pm \pi/(4Nt)$; the latter case can be avoided by slightly adjusting $N$ or $t$.
After measuring $g(\omega):=\bra{\psi_\omega} \fX_{\omega'} \ket{\psi_\omega} $ quantumly, one cannot compare it to the ideal function $\cos(2N\omega t)$ to read out $\omega$ due to the error \eqref{eq:X'-X}. Instead, one needs to classically calculate $g(\omega)$ as a function of $\omega$ to compare the quantum result with.
In general dimensions, there are algorithms with runtime being quasi-polynomial in $N$ ($\poly{N}$ for $\le 2$d) \cite{compu_prob19,compu_prob21,compu_prob22,comp_prob_H23} for such classical simulation, known as the quantum mean value problem, if \red{$Jt=\order(\log N$)}. To conclude, the $\mathbf{X}_{\omega'}$ measurement protocol satisfies HL and requires (quasi-)polynomial classical computation to implement.

\end{appendix}

\bibliographystyle{quantum}
\bibliography{biblio}

\end{document}